\crefname{theorem}{Theorem}{Theorems}
\crefname{proposition}{Proposition}{Propositions}
\crefname{lemma}{Lemma}{Lemmas}
\crefname{exmp}{Example}{Examples}
\crefname{corollary}{Corollary}{Corollaries}
\crefname{claim}{Claim}{Claims}
\crefname{remark}{Remark}{Remarks}
\crefname{section}{Section}{Sections}
\crefname{definition}{Definition}{Definitions}
\crefname{example}{Example}{Examples}
\crefname{table}{Table}{Tables}
\crefname{figure}{Figure}{Figures}
\crefname{appendix}{Appendix}{Appendices}
\crefname{equation}{Equation}{Equations}
\crefname{algorithm}{Algorithm}{Algorithms}
\crefname{subsection}{Subsection}{Subsections}
\newcommand{\student}{i}
\newcommand{\studentset}{I}
\newcommand{\school}{s}
\newcommand{\schoolset}{S}
\newcommand{\instance}{M}
\newcommand{\matching}{Y}
\newtheorem{lemma}[theorem]{Lemma}
\newtheorem{definition}{Definition}
\newtheorem{example}{Example}
\newtheorem{remark}{Remark}
\newcommand{\BibTeX}{B\kern-.05em{\sc i\kern-.025em b}\kern-.08em\TeX}
\newif\ifconference
\begin{document}


\begin{frontmatter}


\paperid{6040} 


\title{
A New Relaxation of Fairness in Two-Sided Matching Respecting Acquaintance Relationships
}


\author[A]{\fnms{Ryota}~\snm{Takeshima}}
\author[A]{\fnms{Kei}~\snm{Kimura}\orcid{0000-0002-0560-5127}\thanks{Corresponding Author. Email: kkimura@inf.kyushu-u.ac.jp}}
\author[A]{\fnms{Ayumu}~\snm{Kuroki}}
\author[A]{\fnms{Temma}~\snm{Wakasugi}}
\author[A]{\fnms{Makoto}~\snm{Yokoo}\orcid{0000-0003-4929-396X}} 

\address[A]{Kyushu University}


\begin{abstract}
Two-sided matching, such as matching between students and schools, has been applied to various aspects of real life and has been the subject of much research, however, it has been plagued by the fact that efficiency and fairness are incompatible.
In particular, Pareto efficiency and justified-envy-freeness are known to be incompatible even in the simplest one-to-one matching, i.e., the stable marriage problem.
In previous research, the primary approach to improving efficiency in matchings has been to tolerate students' envy, thereby relaxing fairness constraints.
In this study, we take a different approach to relaxing fairness.
Specifically, it focuses on addressing only the envy that students may experience or prioritize more highly and seeks matchings without such envy.
More specifically, this study assumes that envy towards students who are not acquaintances has less impact compared to envy towards students who are acquaintances.
Accordingly, we assume that the students know each other or not, represented by an undirected graph, and define a local envy as a justified envy toward an acquaintance or a neighbor in the graph.
We then propose the property that there is no local envy as a new relaxed concept of fairness, called local envy-freeness.
We analyze whether Pareto-efficient matching can be achieved while maintaining local envy-freeness by meaningfully restricting the graph structure and the school's preferences.
To analyze in detail the fairness that can achieve Pareto-efficient matching, we introduce a local version of the relaxed fairness recently proposed by Cho et al. (AAMAS 2024), which parameterizes the level of local envy-freeness by nonnegative integers.
We then clarify the level of local envy-freeness that can be achieved by Pareto-efficient mechanisms for graphs that are ``close'' to trees and single-peaked preferences on the graphs.
\end{abstract}

\end{frontmatter}


\section{Introduction}
\label{sec:intro}
Two-sided matching, 
such as matching between students and schools, residents and hospitals, and workers and firms, has been applied to various situations in real life, and has been actively studied in various fields, such as artificial intelligence, economics, and operations research.
Extensive research has been conducted to find matchings with desirable properties~\cite{Kesten10,hosseini2015manipulablity,kawase2017near,Yahiro18,Haris19matching,IsmailiHZSY19,DGY19,suzuki2022strategyproof,CHK24}, where efficiency and fairness 
are regarded as particularly important and desirable properties.

This study focuses particularly on two-sided matching in the school choice problem, i.e., matchings between students and schools.
The Deferred Acceptance (DA) mechanism~\cite{Gale:AMM:1962} is well-known for satisfying stability, i.e., weak efficiency (that is, nonwastefulness) and fairness (that is, justified-envy-freeness), and its variants are frequently used in practical applications. 
Furthermore, the DA mechanism is known to always produce student-optimal stable matchings among all the stable matchings. However, it has been revealed that these student-optimal matchings, while strictly meeting fairness criteria, can significantly compromise efficiency~\cite{Kesten10}.
Indeed, it is widely recognized that there is a trade-off between efficiency and fairness.
Let us explain this in detail.
Pareto efficiency is considered the strongest concept of efficiency.
Intuitively, a matching is \emph{Pareto efficient (PE)} if no subset of students can be matched with better schools without hurting the other students.
Justified-envy-freeness is the most widely studied concept of fairness.
Intuitively, a student has a \emph{justified envy} towards another student if the former prefers to be matched with the school matched to the latter and the school also prefers to be matched with the former to the latter (in other words, the school has a higher priority for the former to the latter).
A matching is \emph{justified-envy-free} (or often called \emph{fair}) if no student has a justified envy.
It is known that Pareto efficiency and justified-envy-freeness is incompatible even in the one-to-one two-sided matching, i.e., the stable marriage problem~\cite{roth1982economics,AS03}.

For the above reasons, considerable research has been conducted to seek highly efficient matchings by partially sacrificing fairness.
A particularly notable study by Kesten~\cite{Kesten10} proposes a method where students agree to relinquish their priority (or envy), as long as it does not affect their matched school, to enhance the efficiency of other students.
\ifconference
\else
This method is known as an Efficiency-Adjusted Deferred Acceptance Mechanism (EADAM).
\fi
Dur et al.~\cite{DGY19} conduct research in the setting where each school determines which priority violations are permissible.
These methods require asking students and schools about abandonment of envy and priority.

This paper proposes a different approach to relaxing fairness compared to the aforementioned studies.
Specifically, it focuses on addressing only the envy that has more impact on students, and seeks matchings without such envy.
More specifically, this study assumes that envy towards students who are not acquaintances has less impact compared to envy towards students who are acquaintances.
Therefore, we consider a matching where there is no envy towards acquaintances to be a desirable matching.

More precisely, 
we assume that the students know each other or not, 
and define a \emph{local envy} as a justified envy toward an acquaintance.
We then propose the property that there is no local envy as a new relaxed concept of fairness, called \emph{local envy-freeness}.
If all students know each other, then local envy-freeness coincides with the standard fairness (i.e., justified-envy-freeness).
So, our model is a generalization of the ordinary one.
In this paper, we theoretically analyze how much efficiency can be improved by introducing local envy-freeness.

First, we unfortunately demonstrate that fundamental theorems of stable matching do not hold in this extended model.
Specifically, we prove that results such as the set of stable matchings forming a lattice~\cite{Roth:CUP:1990}, the existence of a student-optimal stable matching within the set of stable matchings~\cite{Gale:AMM:1962}, and the rural hospitals theorem~\cite{Roth:1986} do not hold in locally envy-free matching in general.

However, as a positive result, we demonstrate that by imposing natural preference restrictions and acquaintance constraints in school choice, it is possible to achieve Pareto efficient and locally envy-free matchings.
More specifically, we represent acquaintance relationships among students as an undirected graph, referred to as a \emph{student acquaintance graph}, and analyze what kind of matchings can be obtained based on the structure of this graph.

We start by showing that even if the student acquaintance graph is a path (in other words, when students are arranged in a straight line and only neighboring students know each other), there may be no Pareto efficient and local envy-free matching (\cref{thm:no-LEF-PE-under-path}).
Therefore, we consider restricting the schools' preferences in a meaningful way.
Specifically, we show that there always exists a Pareto efficient and locally envy-free matching for any path graph when the schools' preferences are single-peaked with respect to the path (\cref{thm:LEF-PE-under-tree-single-peaked}).
Here, a school's preference is \emph{single-peaked} if there is a unique peak (most preferred student) and the preference decreases as one moves away from this peak along the path.
This models a situation where 
and 
a school prefers students who live closer to the school.

In fact, we demonstrate that the above results hold not only for path graphs but also for tree graphs (that is, connected graphs that have no cycles). 
This allows us to handle not only the case where students' homes are arranged in a straight line but also the more general and realistic case where they are arranged on a tree representing a road network.
Here, single-peaked preferences on a tree are defined in the same way as single-peaked preferences on a path.

\begin{table}[t]
\caption{Existence of Pareto efficient, locally ERF-$\ell$, and strategyproof mechanism for graphs with treewidth $k$ 
(\ding{52} means such a mechanism exists for all $k$, 
\ding{51} means such a mechanism does not exist for some $k$ but exists for all $k$ without strategyproofness, and 
\ding{54} means even without strategyproofness, 
a matching that satisfies Pareto efficiency and the fairness property 
may not exist for some $k$.) }
\label{tbl:results}
\begin{center}
\begin{tabular}{c|lll}
schools' &\multicolumn{3}{c}{level $\ell$ of local ERF}\\
\cline{2-4}
preferences & $k-2$ ($k \ge 2$) & $k-1$ & $k$ \\
\hline 
\hline 
general & \ding{54} [Thm~\ref{thm:no-LEF-PE-under-cycle-single-peaked}] & \ding{54} [Thm~\ref{thm:no-LEF-PE-under-path}] & \ding{52} [Thm~\ref{thm:LreverseEFk-PE-under-treewidth-k}]\\
single-peaked & \ding{54} [Thm~\ref{thm:no-LEF-PE-under-cycle-single-peaked}] & \ding{51} [Thm~\ref{thm:no-SP-LEF-PE-under-path-single-peaked},\ref{thm:LEF-k-1-PE-under-k-degenerate-single-peaked}] & \ding{52} [Thm~\ref{thm:LreverseEFk-PE-under-treewidth-k}] \\
\hline
\end{tabular}
\end{center}
\end{table}

To broaden the applications of our model, we also analyze what happens when the structure of the graph is made more general.
Unfortunately, we reveal that Pareto efficiency and local envy-freeness are incompatible in the simplest non-tree graph, that is, the cycle of length three (\cref{thm:no-LEF-PE-under-cycle-single-peaked}). 
Therefore, by further relaxing fairness constraints, we enable the pursuit of efficient matchings.

We propose new relaxed notions of fairness, based on such a notion recently proposed by~\citet{cho:2024}.
The weaker fairness requirement introduced in~\cite{cho:2024} is called \emph{envy-freeness up to $k$ peers} (EF-$k$), where EF-$k$ guarantees that each student has justified envy toward at most $k$ students.
By varying $k$, EF-$k$ can represent different levels of fairness.
We define a local version of EF-$k$ as \emph{local EF-$k$}.
Local EF-$k$ guarantees that each student has justified envy toward at most $k$ \emph{neighbor} students.
We also define a variant of EF-$k$, which we call \emph{envy-receiving-freeness up to $k$ peers} (ERF-$k$), where ERF-$k$ guarantees that each student receives justified envy \emph{from} at most $k$ students, as well as its local version \emph{local ERF-$k$}.

We then extend our results for tree graphs to graphs that are ``tree-like'' in two ways.

First, we consider a graph with an underlying tree to which edges are added such that the degree (that is, the number of neighboring vertices) of each vertex is at most $k$ with an assumption that the schools' preferences are single-peaked on the underlying tree.
For this setting, there might be no Pareto efficient and local envy-free matching, but we show there exists a mechanism that finds a matching that satisfies Pareto efficiency, local EF-($k-1$), and local ERF-($k-1$).

Second, we consider graphs with treewidth $k$, where $k$ is a positive integer.
Treewidth is a graph-theoretic parameter that quantifies how similar a given graph is to a tree (see \cref{def:treewidth} in \cref{subsec:treewidthk-LEF-k-1} for a formal definition).
In fact, a graph is a tree if and only if it has treewidth one and connected.
We extend the single-peaked preferences on trees to those on treewidth $k$ graphs and propose a mechanism that satisfies Pareto efficiency and local ERF-($k-1$) when the graph has treewidth $k$ and the schools' preferences are single-peaked on the graph.
Finally, for graphs with treewidth $k$ and general schools' preferences, we propose a mechanism based on the serial dictatorship (SD)~\cite{AS98} that satisfies strategyproofness, Pareto efficiency, and local ERF-$k$. 
From the impossibility theorems above, our mechanisms achieve the best possible fairness guarantees in terms of the number of students by whom each student is locally envied.
Our results are summarized in \cref{tbl:results}.

By developing a Pareto efficient and less envious mechanism not only for tree graphs but also for graphs that are close to trees (e.g., graphs with low treewidth), we anticipate that the following applications can be significantly broadened.
First, even if the road network of the town where the students reside is not a tree, our mechanism can still achieve Pareto-efficient matching with only a slight compromise in fairness, provided that the network is sufficiently close to a tree structure.
Furthermore, we posit that when schools determine their preferences for students based on criteria other than geographical proximity, such preferences can often be represented as single-peaked preferences on graphs that are close to trees (see Section~\ref{sec:conclusion} for further details).

\subsection{Related work}
There are several other studies that focus on local stability and local fairness~\cite{Arcaute2009,AIK+13,ABH13,CM13,HW17,KSM19}.
The research particularly relevant to our study is that of \citet{Arcaute2009}, who introduced local stability in the context of matching in job markets (that is, matchings between workers and firms).
This concept of stability is based on the assumption that a worker can only make an offer to a firm through an acquaintance who is already employed there. 
Therefore, a local concept is proposed from a different motivation than ours. 
Mathematically, their local stability and our local envy-freeness coincide in the case of one-to-one matching, but generally differ in the case of many-to-one matching. 
These comparisons are detailed in \cref{sec:Comparison-LS-and-LEF}.

Famous Pareto efficient mechanisms such as the serial dictatorship (SD)~\cite{AS98} and the top trading cycle (TTC)~\cite{AS03} can result in very poor fairness. 
Therefore, we believe that the results of this study, which provide theoretical guarantees for fairness, can be useful in certain situations.

The setting in which agents can hold envy only for their neighbors in a graph has been studied so far in fair division~\cite{Abebe2017,Bredereck}, such as house allocation~\cite{Beynier2019}, cake cutting~\cite{Bei2017}, and combinatorial auction~\cite{FLAMMINI20191}.
To the best of our knowledge, this paper is the first to address this type of local concept regarding fairness in many-to-one two-sided matching.

Finally, single-peaked preferences (on a path or a tree) frequently appear in voting theory and facility location. Voters' preferences for candidates are often single-peaked, and assuming single-peaked preferences can lead to various theoretically desirable properties~\cite{HANSEN19811,DEMANGE1982389,TRICK1989329,ELP17}.
In facility location problems, it is common to assume single-peaked preferences for the placement of public goods such as libraries~\cite{Moulin80}.

\section{Model}
For a positive integer $z$ we denote $\{1, \dots, z\}$ by $[z]$.

\subsection{Matching market with acquaintance graphs}
\label{sec:model}
A matching market with a student acquaintance graph 
is given by $\instance=(\studentset, \schoolset, X, \succ_\studentset, \succ_\schoolset, q, G)$.\footnote{In models of school choice, it is sometimes assumed that schools can accept all students. However, this paper adopts a model in which some students may be unacceptable to certain schools---a setting that is also applicable to scenarios such as college admissions.}
The meaning of each element is as follows. 
\begin{itemize}
 \item $\studentset=\{\student_1, \ldots, \student_n\}$ is a finite set of students.
 \item $\schoolset=\{\school_1, \ldots, \school_m\}$ is a finite set of schools. 
 \item $X \subseteq \studentset\times \schoolset$ is a finite set of contracts.
Contract $x = (\student, \school) \in X$ 
represents the matching between student $\student$  and school $\school$.
\item For any $Y \subseteq X$, 
let $Y_\student:=\{(\student, \school) \in Y \mid \school \in
\schoolset\}$ and $Y_\school:=\{(\student, \school) \in Y \mid \student \in \studentset\}$ denote the sets of contracts in $Y$ that involve $\student$ and $\school$, respectively.
\item 
${\succ_\studentset} = (\succ_{\student_1}, \ldots, \succ_{\student_n})$ is a profile of
the students' preferences. 
Each student $\student$ has a strict preference order $\succ_{\student}$ over $X_\student \cup {(\student, \emptyset)}$, where $X_\student=\{(\student, \school) \in X \mid \school \in
\schoolset\}$ and $(\student, \emptyset)$ denotes the assignment in which $\student$ is unmatched.
We assume $\succ_\student$ is strict for each $\student$.
A contract $(\student, \school)$ is \emph{acceptable} for $\student$ if $(\student, \school) \succ_\student (\student, \emptyset)$.
Occasionally, we use notations like $\school \succ_\student \school'$ instead of $(\student,\school) \succ_\student (\student,\school')$ for brevity.
\item ${\succ_\schoolset} = (\succ_{\school_1}, \ldots, \succ_{\school_m})$ is a profile of schools' preferences. 
Each school $\school$ has a strict preference order $\succ_\school$ over $X_\school \cup {(\emptyset, \school)}$, 
where $(\emptyset, \school)$ represents the option in which $\school$ leaves one of its seats vacant.
A contract $(\student, \school)$ is \emph{acceptable} for $\school$ if $(\student, \school) \succ_\school (\emptyset, \school)$.
We sometimes use notations like $\student \succ_\school \student'$ instead of $(\student, \school) \succ_\school (\student', \school)$ for brevity.
\item $q \in {\mathbb{Z}}_+^m$ is a vector that represents schools' maximum quotas, where $m$ is the number of schools and ${\mathbb{Z}}_+^m$ is the set of vectors of $m$ non-negative integers.
\item $G = (\studentset,E)$ is an undirected graph, called a \emph{student acquaintance graph}, which models acquaintance relationships among students.
Here, $\studentset$ is a set of vertices (students) and $E$ is a set of edges, where edge $\{ \student,\student' \} \in E$ means that $\student$ and $\student'$ know each other.
\end{itemize}
For graph $G = (\studentset,E)$ we denote by $N(\student)$ the set of neighbors of vertex $\student \in \studentset$, i.e., $N(\student) = \{ \student' \in \studentset \mid \{ \student,\student' \} \in E \}$.

We say $Y\subseteq X$ is 
a \emph{matching}, 
if for each $\student \in \studentset$, 
$|Y_\student| \le 1$ holds.

\begin{definition}[feasibility]
\label{def:standard}
We say $Y \subseteq X$ is \emph{feasible} if $\forall j \in [m]$, $|Y_{\school_j}| \leq q_{\school_j}$ holds.
\end{definition}

With a slight abuse of notation, for two sets of contracts
$Y$ and $Y'$,
we denote $Y_\student \succ_\student Y'_\student$ if either (i)
$Y_\student = \{x\}$, $Y'_\student = \{x'\}$, and $x \succ_\student x'$
for some $x, x' \in X_\student$,
or
(ii) $Y_\student = \{x\}$ for some $x \in X_\student$, $Y'_\student = \emptyset$, and $x \succ_\student (\student,\emptyset)$.
Furthermore, we denote $Y_\student \succeq_\student Y'_\student$ if  either $Y_\student \succ_\student Y'_\student$
or $Y_\student = Y'_\student$.
Also, we use notations like  $x \succ_\student Y_\student$ or 
$\school \succ_\student Y_\student$, where $x$ is a contract, $Y$ is a matching, and $\school$ is a school.

We will prove several impossibility results on matching markets with student acquaintance graphs for general preference profiles.
Therefore, we also consider restricting the schools' preferences.

\begin{definition}[single-peaked preferences on trees~\cite{DEMANGE1982389,Endriss:2017}]
Assume that graph $G$ is a tree, that is, it is connected and has no cycle.
A preference $\succ_\school$ over the set $\studentset$ of students is called \emph{single-peaked on $G$} if for any $k \in [n]$, the set of top $k$ students (except $\emptyset$) under $\succ_\school$ is connected in $G$.
\end{definition}

For example, a single-peaked preference on a tree can model a situation where 
students' homes are arranged in a tree road network, neighboring students know each other, and 
a school prefers students who live closer to the school.

\subsection{Desirable properties for matchings}

Let us introduce several desirable properties of a matching and a
mechanism.
We say a mechanism satisfies property A if the mechanism produces a matching that satisfies property A in every possible matching market. 

First, we define fairness. 
\begin{definition}[fairness]
\label{def:fairness}
In matching $Y$, student $\student$ \emph{has justified envy toward
another student $\student'$} if 
$\school \succ_\student Y_\student$, $(\student', \school) \in Y$, and
$\student \succ_\school \student'$ hold.
We say matching $Y$ is \emph{fair} or \emph{justified-envy-free} if no student has justified envy.
\end{definition}
Fairness implies that if student $\student$ is not assigned to school $\school$ (although she hopes to be assigned), 
then $\school$ prefers all students assigned to it over $\student$. 

Now, we introduce the concept of local envy-freeness.

\begin{definition}[local envy-freeness]
\label{def:local-fairness}
Let $G = (\studentset,E)$ be a student acquaintance graph.
In matching $Y$, student $\student$ \emph{has local (justified) envy toward
another student $\student'$ with respect to $G$} if $\student$ has justified envy toward $\student'$ and in addition $\student$ and $\student'$ are adjacent in $G$ (i.e., $\{\student,\student'\} \in E$).
We say matching $Y$ is \emph{locally envy-free with respect to $G$} if no student has local envy with respect to $G$.
\end{definition}

Note that by definition if student acquaintance graph $G$ is a complete graph, 
then local envy-freeness with respect to $G$ coincides with fairness.
Local envy-freeness implies that students do not feel envy towards their acquaintances.

We often say matching $Y$ is \emph{locally envy-free (LEF)}, omitting ``with respect to $G$,''  if $G$ is clear from the context.

We also define a weaker fairness concept called envy-free up to $k$ peers (EF-$k$) introduced by~\citet{cho:2024}.
For matching $Y$ and student $\student$, let $Ev(Y, \student)$ denote 
$\{\student' \mid \student'\in \studentset, \student \text{ has } \text{justified envy toward } \student' \text{ in } Y\}$.

\begin{definition}[envy-free up to $k$ peers]
Matching $Y$ is \emph{envy-free up to $k$ peers (EF-$k$)} if $\forall \student \in \studentset$, 
$|Ev(Y,\student)|\leq k$ holds. 
\end{definition}
EF-$0$ is equivalent to fairness. 
Any matching is EF-$(n-1)$, where $n=|\studentset|$. 

Here, we introduce a new relaxation of fairness that complements EF-$k$.
A student who is being envied by a large of other students is considered to be significantly unfairly allocated to the school. 
Therefore, placing a limit on the number of envy received is also deemed desirable from the perspective of fairness.
Motivated by this consideration, we introduce a weaker fairness concept called \emph{envy-receiving-free up to $k$ peers (ERF-$k$)}.
For matching $Y$ and student $\student$, let $Evr(Y, \student)$ denote 
$\{\student' \mid \student'\in \studentset, \student' \text{ has } \text{justified envy toward } \student \text{ in } Y\}$.

\begin{definition}[envy-receiving-free up to $k$ peers]
Matching $Y$ is \emph{envy-receiving-free up to $k$ peers (ERF-$k$)} if $\forall \student \in \studentset$, 
$|Evr(Y,\student)|\leq k$ holds. 
\end{definition}

\ifconference
\else
ERF-$0$ is equivalent to fairness. 
Any matching is ERF-$(n-1)$, where $n=|\studentset|$. 
\fi 

What kind of relaxed fairness concept is desirable, in particular, whether EF-$k$ or ERF-$k$ is more appropriate, would certainly be debatable and would depend on the specific application domain. 
That being said, 
 we believe that satisfying ERF-$k$ is more important in the context of school choice. 
 In particular, when a student is envied by many others, it typically indicates that the student has very low priority at the school to which she was assigned. Assigning such a student to that school may be undesirable, as it can strongly undermine the perceived fairness of the system. 
 In contrast, a student who envies many others may still have been assigned to a school that is not particularly bad from their own perspective. Therefore, we believe that avoiding the former situation is more critical than the latter.

Now, we introduce local concepts of these weaker fairness notions.
For matching $Y$ and student $\student$, let $loc{\text -}Ev(Y, \student)$ denote 
$\{\student' \mid \student'\in N(\student), \student \text{ has } \text{justified envy toward } \student' \text{ in } Y\}$.

\begin{definition}[locally envy-free up to $k$ peers]
Matching $Y$ is \emph{locally envy-free up to $k$ peers (locally EF-$k$)} if $\forall \student \in \studentset$, $|loc{\text -}Ev(Y,\student)|\leq k$ holds. 
\end{definition}

Analogously, for matching $Y$ and student $\student$, let $loc{\text -}Evr(Y, \student)$ denote 
$\{\student' \mid \student'\in N(\student), \student' \text{ has } \text{justified envy toward } \student \text{ in } Y\}$.

\begin{definition}[locally envy-receiving-free up to $k$ peers]
Matching $Y$ is \emph{locally envy-receiving-free up to $k$ peers (locally ERF-$k$)} if $\forall \student \in \studentset$, 
$|loc{\text -} Evr(Y,\student)|\leq k$ holds. 
\end{definition}

We also introduce a relaxation of fairness proposed by \citet{Toda2006}, which is satisfied by some of our mechanisms.

\begin{definition}[mutually-best]
\label{def:mutually-best}
We say a pair of student $\student$ and school $\school$ is a \emph{mutually-best pair (MB-pair)} if $\school \succ_\student \school'$ for any $\school' \in \schoolset \setminus \{\school\}$ and $\student \succ_\school \student'$ for any $\student' \in \studentset \setminus \{\student\}$.
Matching $Y$ is \emph{mutually-best (MB)}, if any MB-pair $(\student,\school)$ is matched.
\end{definition}
By definition, fairness implies MB.

Next, we define two of the most extensively studied properties on student welfare (efficiency).

\begin{definition}[Pareto efficiency]
    Matching $Y$ is Pareto dominated by another 
    matching $Y'$ if $\forall \student \in \studentset$, $Y'_\student \succeq_\student Y_\student$, and
    $\exists \student \in \studentset$,  $Y'_\student \succ_\student Y_\student$ hold.
	Feasible matching $Y$ is Pareto efficient if
	no other feasible matching Pareto dominates it.
\end{definition}

\begin{definition}[nonwastefulness]
\label{def:nonwastefulness}
In matching $Y$, student $\student$ \emph{claims an empty seat} of school $\school$
if 
$(\student,\school)$ is acceptable for $\student$,
$c \succ_\student Y_\student$,
and $(Y \setminus Y_\student) \cup \{(\student, \school)\}$ is feasible.
We say feasible matching $Y$ is \emph{nonwasteful} if no student claims an empty
seat. 
\end{definition}
Intuitively, nonwastefulness means that we cannot improve the matching of one student without affecting other students. 
Hence, nonwastefulness is implied by Pareto efficiency.

A feasible matching is \emph{stable} if it is both fair (i.e., justified-envy-free) and nonwasteful.

By definition, stability implies fairness, and fairness implies local envy-freeness. 
In the following, we present an example that illustrates that stability, fairness, and local envy-freeness are distinct concepts.

\begin{example}
This example illustrates that stability, fairness, and local envy-freeness are distinct concepts.
Consider a matching market with two students, $I = \{i_1, i_2\}$, and one school, $S = \{s_1\}$. 
$X$ is defined as $X=I \times S$.
The students’ preference profile $\succ_S$ is given by:  
\[
\begin{array}{c}
    i_1: s_1 \succ_{i_1} \emptyset, \\
    i_2: s_1 \succ_{i_2} \emptyset.
\end{array}
\]
The school's preference profile $\succ_S$ is given by:  
\[
\begin{array}{c}
    s_1: i_1 \succ_{s_1} i_2 \succ_{s_1} \emptyset.
\end{array}
\]
The maximum quota $q_{s_1}$ is set to one, and the student acquaintance graph $G=(I,E)$ is defined with $E=\emptyset$; that is, students $s_1$ and $s_2$ are not acquainted with each other.

Now, one can verify that the set of stable matchings is $\{ \{ (i_1,s_1)\} \}$, 
the set of fair matchings is $\{ \{ (i_1,s_1)\}, \emptyset \}$, and 
the set of locally envy-free matchings is $\{ \{ (i_1,s_1)\}, \{ (i_2,s_1)\}, \emptyset \}$.
\end{example}

Next, we introduce strategyproofness. 
\begin{definition}[strategyproofness]
\label{def:strategyproofness}
We say a mechanism is \emph{strategyproof}
if no student ever has any incentive
to misreport her preference no matter what the other students report. 
More specifically, 
let $Y$ denote the matching obtained when $\student$ declares her true preference $\succ_\student$, 
and $Y'$ denote the matching obtained when $\student$ declare something else, 
then $Y_\student \succeq_\student Y'_\student$ holds. 
\end{definition}
Strategyproofness means that no student has an incentive to misreport her preference. 

Here, we consider strategic manipulations only by students. 
It is well-known that even in the most basic model of one-to-one
matching \cite{Gale:AMM:1962}, satisfying strategyproofness (as well as basic 
fairness and efficiency requirements) for both sides is impossible~\cite{roth1982economics}. 
One rationale for ignoring the school side would be that 
the preference of a school must be
presented in an objective way and cannot be skewed arbitrarily.

Finally, we introduce an existing mechanism that achieves Pareto efficiency.
The Serial Dictatorship (SD) mechanism \cite{AS98,goto:17} is parameterized by an exogenous serial order over the students called a \emph{master-list}. 
We denote the fact that $\student$ is placed in a higher/earlier position than  student $\student'$ in master-list $L$ as $\student \succ_{L} \student'$.
Students are assigned sequentially according to the master-list to their most preferred schools with remaining seats.
SD is strategyproof and achieves Pareto efficiency.

\section{Non-applicability of fundamental theorems in stable matching within locally envy-free matching}
\label{sec:failure-fundamental-theorems}

In this section, we demonstrate that the following fundamental theorems, known to hold in stable matching, do not apply in our extended model.

\begin{itemize}
\item The set of stable matchings forms a lattice~\cite{Roth:CUP:1990}.
\item There exists a student-optimal matching in the set of stable matchings~\cite{Gale:AMM:1962}.
\item The rural hospitals theorem~\cite{Roth:1986}, that is, the set of assigned students, and the number of filled seats in each school, are the same in all stable matchings.
\end{itemize}

Although these mathematically elegant theorems do not hold in our model, 
we demonstrate in \cref{sec:new-mechanisms} that our model yields more efficient matchings in realistic preferences and acquaintance relationships compared to traditional stable matchings.

We first show that the set of locally envy-free (LEF) matchings might not form a lattice.
Recall that a partially ordered set (poset) $(L,\le)$ is called a \emph{lattice} if each two-element subset $\{a,b\}\subseteq L$ has both a join (i.e., least upper bound) 
and a meet (i.e., greatest lower bound).
It is known~\cite{Roth:CUP:1990} that the set of stable matching is a lattice with a partial order $\le$ defined as $\matching_1 \le \matching_2$ if $\matching_1$ is Pareto dominated by $\matching_2$ or $\matching_1 = \matching_2$.

\ifconference
From this section onwards, due to page limitations, most of the proofs are relegated to the appendix.
\fi

\begin{theorem}
\label{thm:no-lattice-structure}
If the student acquaintance graph is not a complete graph, 
then there exists a matching market in which the set of LEF matchings does not form a lattice with the partial order defined above.
\end{theorem}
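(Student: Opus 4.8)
The plan is to construct, for an arbitrary non-complete student acquaintance graph $G$, a concrete matching market in which the poset of LEF matchings contains two elements whose least upper bound (or greatest lower bound) fails to exist. Since $G$ is not complete, there exist two students $\student$ and $\student'$ that are non-adjacent; I would make the instance essentially trivial on all other students (give every other student a single acceptable school with its own dedicated seat, so that in every LEF matching those students are forced to the same assignment and contribute nothing to the order structure) and concentrate the action on $\student$, $\student'$, and two schools $\school$, $\school'$.

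The core gadget is the classical ``two students, two schools, crossed preferences'' configuration, but exploiting that $\student$ and $\student'$ are not acquainted. Concretely, set $\school \succ_\student \school' \succ_\student \emptyset$ and $\school' \succ_{\student'} \school \succ_{\student'} \emptyset$, with each school having quota one and preferring, say, $\student$ over $\student'$ (or one preferring each). First I would enumerate the feasible matchings on $\{\student,\student'\}\times\{\school,\school'\}$ and check which are LEF: because $\{\student,\student'\}\notin E$, a would-be justified envy between $\student$ and $\student'$ is \emph{not} local, so matchings that are ordinarily ruled out by fairness survive as LEF. The aim is to arrange the school preferences so that (i) the ``both get their first choice'' matching $Y_1=\{(\student,\school),(\student',\school')\}$ is LEF, (ii) the ``both get their second choice'' matching $Y_2=\{(\student,\school'),(\student',\school)\}$ is LEF, and (iii) there is no LEF matching Pareto-dominating both — in particular $Y_1$ itself should \emph{not} dominate $Y_2$ (which holds, since under $Y_1$ student $\student'$ is worse off than under... wait, no: under $Y_1$ both get first choice, so $Y_1$ dominates $Y_2$). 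I therefore instead want two \emph{incomparable} LEF matchings, e.g.\ $Y_1=\{(\student,\school)\}$ (with $\student'$ unmatched or matched to a throwaway option) and $Y_2=\{(\student',\school')\}$, chosen so that their only common upper bound in the Pareto order would have to be $\{(\student,\school),(\student',\school')\}$, which I engineer to violate local envy-freeness (e.g.\ by adding one more student adjacent to $\student'$ who locally envies $\student'$ precisely in that matching). Dually I ensure the meet fails.

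The key steps in order: (1) pick non-adjacent $\student,\student'$ using non-completeness of $G$; (2) define the contract set, quotas, and preferences so all students other than a small ``active'' set are pinned down identically in every LEF matching; (3) exhibit two LEF matchings $Y_1,Y_2$ and verify local envy-freeness directly from \cref{def:local-fairness}, using crucially that certain envies are non-local; (4) enumerate all feasible matchings that are upper bounds of both $Y_1$ and $Y_2$ in the Pareto order and show each of them has a \emph{local} justified envy, hence is not LEF, so no join exists; (5) repeat the argument symmetrically for the meet (or observe the construction is self-dual).

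The main obstacle I anticipate is step (4): making the argument uniform over \emph{all} non-complete graphs $G$, not just a path or a single edge. The added ``pinning'' students interact with $G$ through their own neighborhoods, so I must ensure that introducing an auxiliary student adjacent to $\student'$ to kill the candidate join does not accidentally destroy the local envy-freeness of $Y_1$ or $Y_2$, and that the Pareto-order structure among LEF matchings is exactly the small diamond-minus-top (or a similar non-lattice poset) that I claim. Handling this cleanly may require choosing the active gadget to live entirely on the two already-non-adjacent vertices $\student,\student'$ plus schools only (no auxiliary students), engineering the failure of the join purely through feasibility and the non-locality of the $\student$–$\student'$ envy; I would first attempt that leaner construction and fall back to auxiliary students only if necessary.
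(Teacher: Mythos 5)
Your overall strategy---exhibit two incomparable LEF matchings every common upper bound of which fails LEF---could in principle work, but the gadget you sketch cannot, and the step you yourself flag as the obstacle (step (4)) is exactly where it breaks. In your setup $\school$ is $\student$'s top choice, $\school'$ is $\student'$'s top choice, and all remaining students are pinned to the same assignment in $Y_1$ and in $Y_2$. Then the candidate join $Z$ that matches $(\student,\school)$ and $(\student',\school')$ and leaves everyone else as in $Y_1$ inherits local envy-freeness from $Y_1$ and $Y_2$: $\student$ and $\student'$ sit at their top choices and envy nobody, while for every other ordered pair of students both the envier's and the envied student's assignments in $Z$ coincide with their assignments in $Y_1$ (if the envied student is $\student$ or a pinned student) or in $Y_2$ (if the envied student is $\student'$), so any local envy in $Z$ would already be present in $Y_1$ or $Y_2$. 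Hence the join exists and your gadget does not refute the lattice property. Your proposed repair is likewise impossible: if an auxiliary student $\student''$ adjacent to $\student'$ has justified envy toward $\student'$ in $Z$, then since $Z$ is an upper bound of $Y_2$, $\student''$ is weakly better off in $Z$ than in $Y_2$ while $\student'$ occupies $\school'$ in both; so $\school' \succ_{\student''} Z_{\student''} \succeq_{\student''} (Y_2)_{\student''}$ together with $\student'' \succ_{\school'} \student'$ gives the same local envy already in $Y_2$, contradicting the LEF-ness of $Y_2$ that you need.

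The missing idea is that you do not need to kill the upper bounds by envy at all: make $Y_1$ and $Y_2$ distinct and \emph{Pareto efficient}, and then they have no common upper bound whatsoever, so no join can exist. This is the paper's route, and you were one step away from it: after correctly discarding the crossed-preferences gadget (where the two matchings are comparable), the fix is \emph{aligned} preferences, not a new construction. Give the two non-adjacent students identical preferences $\school \succ \school'$ over two quota-one schools (the paper does this globally: all $n$ students share one preference list, all schools share one priority list, quotas one, and the graph is the complete graph minus the edge between the last two students; since deleting edges only preserves LEF, this covers every non-complete $G$ after relabeling). The two matchings that differ only by swapping the non-adjacent pair between $\school$ and $\school'$ are both LEF, because the only justified envy runs between that non-adjacent pair, and both are Pareto efficient (each is a serial-dictatorship outcome for master-lists differing only in the last two positions). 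Being distinct and Pareto efficient, neither is dominated by any feasible matching, so the two-element set $\{Y_1,Y_2\}$ has no upper bound in the poset of LEF matchings and the lattice property fails.
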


\ifconference
\else
\ifconference
\setcounter{theorem}{\getrefnumber{thm:no-lattice-structure}}
\addtocounter{theorem}{-1}
\begin{theorem}
If the student acquaintance graph is not a complete graph, 
then there exists a matching market in which the set of LEF matchings does not form a lattice with the partial order defined above.
\end{theorem}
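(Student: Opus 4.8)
The plan is to exhibit, for any non-complete student acquaintance graph $G$, a single matching market whose set of LEF matchings fails to have a join (or a meet) for some pair. Since $G$ is not complete, there exist two students $\student, \student'$ with $\{\student,\student'\} \notin E$; I would build the market almost entirely around these two students, making all other students irrelevant by giving them empty acceptable sets (or by making them acceptable only to dummy schools that they top and that top them, so they are forced into an MB-pair and play no role in envy). So effectively I reduce to the two-student instance.

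The concrete construction I would use: two students $\student,\student'$ (nonadjacent in $G$) and two schools $\school_1,\school_2$, each with quota one, where $\student$ prefers $\school_1 \succ \school_2 \succ \emptyset$, $\student'$ prefers $\school_2 \succ \school_1 \succ \emptyset$, school $\school_1$ prefers $\student' \succ \student \succ \emptyset$, and school $\school_2$ prefers $\student \succ \student' \succ \emptyset$. Here every assignment of $\student,\student'$ to schools avoids local envy, because $\student$ and $\student'$ are the only two students who could possibly envy each other and they are nonadjacent. So the LEF matchings include at least the two "crossing" matchings $Y_1=\{(\student,\school_1),(\student',\school_2)\}$ and $Y_2=\{(\student,\school_2),(\student',\school_1)\}$; these are incomparable under the Pareto order ($\student$ strictly prefers $Y_1$, $\student'$ strictly prefers $Y_2$). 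I would then check that the only feasible matching Pareto-dominating both is $Y^\ast=\{(\student,\school_1),(\student',\school_2)\}$... wait, that equals $Y_1$; the point is rather that there is \emph{no} common upper bound: the only candidate for a least upper bound would have to give $\student$ at least $\school_1$ and $\student'$ at least $\school_2$ simultaneously, which is exactly $Y_1$ — but $Y_1$ is not $\ge Y_2$ since $\student'$ does strictly worse in $Y_1$. Hence $\{Y_1,Y_2\}$ has no upper bound at all in the LEF poset, so no join, and the poset is not a lattice. (Symmetrically one can instead argue there is no lower bound / meet.)

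The routine steps, in order, are: (1) pick nonadjacent $\student,\student'$ from non-completeness of $G$; (2) define the market on these two students and two unit-quota schools with the "rock–paper" preferences above, neutralizing any remaining students/schools; (3) observe that nonadjacency of $\student,\student'$ makes local envy vacuous among the relevant students, so the enumerated matchings are all LEF; (4) list the LEF matchings and compute the Pareto order among them; (5) identify a pair with no join (or no meet) and conclude. None of these involves a hard calculation.

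The main obstacle I anticipate is cleanly handling the "other" students $\student_3,\dots,\student_n$ and making the reduction to two students airtight: I need the extra students not to create new LEF matchings that would supply the missing join, and not to be constrained in ways that block $Y_1$ or $Y_2$ from being feasible/LEF. The cleanest fix is to give each extra student an empty set of acceptable contracts (so $(\student_t,\emptyset)$ is forced in every matching we consider, and such students neither envy nor are envied), which keeps the set of LEF matchings restricted to the four combinations of $\{\student,\student'\}$ against $\{\school_1,\school_2,\emptyset\}$ that I control; then the join/meet computation is exactly the two-student one. I would also double-check the edge case where $\student$ or $\student'$ has other neighbors in $G$ — this is harmless, since those neighbors are unmatched dummies and generate no envy in either direction.
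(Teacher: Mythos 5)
There is a genuine error in the key step of your construction. With the preferences you wrote down ($s_1 \succ_{i} s_2$, $s_2 \succ_{i'} s_1$), the matching $Y_1=\{(i,s_1),(i',s_2)\}$ gives \emph{both} students their top choice: in $Y_1$ student $i'$ receives $s_2$, which is her most preferred school, so your claim that ``$i'$ does strictly worse in $Y_1$'' contradicts your own preference profile. Consequently $Y_1$ Pareto dominates $Y_2=\{(i,s_2),(i',s_1)\}$, the pair $\{Y_1,Y_2\}$ is a chain, and its join exists (it is $Y_1$). In fact, in your market every matching of $\{i,i'\}$ to $\{s_1,s_2,\emptyset\}$ is LEF (the two students are nonadjacent and the dummies are irrelevant), and one can check that this poset \emph{is} a lattice: the only infeasible points of the product of the two preference chains, namely ``both at $s_1$'' and ``both at $s_2$,'' never arise as a componentwise join or meet of two feasible matchings. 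So your instance does not witness the failure of the lattice property, and the proof does not go through as stated.

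The fix is small but essential: the two nonadjacent students must rank the two schools the \emph{same} way, so that the two ``swap'' matchings are incomparable. For example, let both $i$ and $i'$ prefer $s_1 \succ s_2$ (school priorities arbitrary, quotas one, all other students neutralized as you propose). Then $Y_1=\{(i,s_1),(i',s_2)\}$ and $Y_2=\{(i,s_2),(i',s_1)\}$ are both outputs of serial dictatorship (for the two orders of $i,i'$), hence Pareto efficient; they are LEF because $i$ and $i'$ are nonadjacent; and no feasible matching can Pareto dominate both, since that would require assigning $s_1$ to both students. Hence $\{Y_1,Y_2\}$ has no upper bound at all in the LEF poset and no join exists. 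This corrected construction is essentially the paper's proof: it uses the complete graph minus one edge with identical student preferences and identical school priorities, argues both matchings are SD outcomes (hence PE and not jointly dominated), and notes that LEF with respect to a graph is preserved when edges are removed, which is exactly the role your ``neutralize the other students'' step plays.
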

\begin{proof}
\else
\begin{proof}
\fi
We show that for a specific graph structure and preference profile,
there exist two locally envy-free matchings $Y_1$ and $Y_2$ such that 
\begin{align*}
(*) &\text{
there exists no LEF matching that simultaneously Pareto} \\
&\text{dominates both $Y_1$ and $Y_2$.}
\end{align*}
Let $\studentset = \{\student_1, \student_2, \ldots,  \student_n\}$, $\schoolset = \{\school_1, \school_2, \ldots, \school_m\}$, $X= \studentset \times \schoolset$, and $q_{\school_j} = 1$ for all $j \in [m]$.
The student acquaintance graph $G$ is defined as a complete graph on $\studentset$, 
with a single edge $\{\student_{n-1}, \student_n\}$ removed.

Preferences $\succ_\studentset$ and $\succ_\schoolset$ are given as follows:
\begin{align*}
&\student_j: \school_1 \succ_{\student_j} \school_2 \succ_{\student_j} \ldots \succ_{\student_j} \school_n \qquad(j = 1, 2,\ldots, n)\\
&\school_j: \student_1 \succ_{\school_j} \student_2 \succ_{\school_j} \ldots \succ_{\school_j} \student_n \qquad(j = 1, 2,\ldots, n)
\end{align*}
An LEF matching with respect to $G$ remains LEF even when any single edge is removed from $G$. 
In other words,
if the condition (*) holds with the graph $G$, then the same result can be extended to any graph that is not a complete graph.

We enumerate the two LEF matchings as follows:
\begin{itemize}
\item $Y_1 := \{(\student_1, \school_1),\dots,(\student_{n-2}, \school_{n-2}),(\student_{n-1}, \school_{n-1}),(\student_n, \school_n)\}$
\item $Y_2 := \{(\student_1, \school_1),\dots,(\student_{n-2}, \school_{n-2}),(\student_{n-1}, \school_n),(\student_n, \school_{n-1})\}$
\end{itemize}
These matchings are indeed LEF, since $\student_{n-1}$ and $\student_{n}$ are not adjacent in $G$.

Now, we show that there does not exist a LEF matching that Pareto dominates both $Y_1$ and $Y_2$ simultaneously.
It is well known that the matching obtained by SD mechanism is Pareto efficient~\cite{satterthwaite1981}.
$Y_1$ is the output of the SD mechanism based on the master-list $L_1 = \{\student_1, \student_2, \ldots, \student_{n-1}, \student_n\}$. 
Similarly, $Y_2$ is the output of the SD mechanism based on the master-list $L_2 = \{\student_1, \student_2, \ldots, \student_n, \student_{n-1}\}$.
Therefore, both $Y_1$ and $Y_2$ are Pareto efficient matchings, and there does not exist a matching that Pareto dominates both $Y_1$ and $Y_2$.
\end{proof}
\fi

Note that the fact that the set of stable matchings forms a lattice implies that there exists a unique stable matching that Pareto dominates all other stable matchings in the set. 
This stable matching is referred to as the \emph{student-optimal stable matching}.
We next show that there may not exist a (unique) student-optimal matching in the set of LEF matchings.

\begin{theorem}
\label{thm:no-student-optimal}
There exists a matching market with a student acquaintance graph in which a unique student-optimal matching does not exist in the set of LEF matchings.
\end{theorem}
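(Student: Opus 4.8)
The plan is to exhibit a single matching market with a student acquaintance graph in which the set of LEF matchings contains two maximal (with respect to the Pareto order) elements $Y_1$ and $Y_2$ that are incomparable, so that no LEF matching Pareto dominates both. In fact the proof of \cref{thm:no-lattice-structure} already provides exactly such a configuration: recall that there $Y_1$ and $Y_2$ were shown to be LEF and each is the output of an SD run (hence Pareto efficient), so neither is Pareto dominated by \emph{any} feasible matching, let alone a LEF one. The absence of a join for $\{Y_1,Y_2\}$ was the obstruction to the lattice property; the same two matchings witness the absence of a student-optimal element.

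First I would set up the instance verbatim from the previous proof: $\studentset=\{\student_1,\dots,\student_n\}$, $\schoolset=\{\school_1,\dots,\school_n\}$ (taking $m=n$), $X=\studentset\times\schoolset$, all quotas $1$, every student ranking $\school_1\succ\school_2\succ\cdots\succ\school_n$ and every school ranking $\student_1\succ\student_2\succ\cdots\succ\student_n$, and the graph $G$ equal to the complete graph on $\studentset$ minus the single edge $\{\student_{n-1},\student_n\}$. Then I would recall that $Y_1=\{(\student_1,\school_1),\dots,(\student_{n-2},\school_{n-2}),(\student_{n-1},\school_{n-1}),(\student_n,\school_n)\}$ and $Y_2=\{(\student_1,\school_1),\dots,(\student_{n-2},\school_{n-2}),(\student_{n-1},\school_n),(\student_n,\school_{n-1})\}$ are both LEF (the only edge that could create local envy, namely between $\student_{n-1}$ and $\student_n$, is missing) and both Pareto efficient (each is an SD outcome, for master-lists $L_1=(\student_1,\dots,\student_{n-1},\student_n)$ and $L_2=(\student_1,\dots,\student_n,\student_{n-1})$ respectively).

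Next I would argue by contradiction: suppose $Y^\ast$ is a LEF matching that is student-optimal, i.e.\ it Pareto dominates (or equals) every LEF matching. Then $Y^\ast$ Pareto dominates or equals $Y_1$; since $Y_1$ is Pareto efficient, this forces $Y^\ast_\student = (Y_1)_\student$ for every student, i.e.\ $Y^\ast = Y_1$ as far as the students' assignments are concerned. By the same reasoning with $Y_2$ we get $Y^\ast = Y_2$ on the students. But $(Y_1)_{\student_{n-1}}=\{(\student_{n-1},\school_{n-1})\}\ne\{(\student_{n-1},\school_n)\}=(Y_2)_{\student_{n-1}}$, a contradiction. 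Hence no student-optimal LEF matching exists. (A small remark: one should note that $\emptyset$ is also LEF, so the LEF set is nonempty and genuinely has more than one element; and since any nontrivial acquaintance graph arises from removing edges from the complete graph, the same instance works for every non-complete $G$ by the edge-deletion monotonicity of LEF, though for this theorem it suffices to exhibit one market.)

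I expect essentially no real obstacle here; the only point requiring a little care is spelling out why "Pareto dominates or equals a Pareto-efficient matching" implies "is identical on students to that matching" — this is immediate from the definition of Pareto domination (if $Y^\ast \ge Y_1$ with $Y^\ast\ne Y_1$ on students, then $Y^\ast$ Pareto dominates $Y_1$, contradicting Pareto efficiency of $Y_1$), but it is the one deduction worth writing explicitly. The bulk of the argument is simply recycling the instance and the LEF/Pareto-efficiency verifications from \cref{thm:no-lattice-structure}, so the write-up can be short, perhaps even phrased as "the same matching market as in the proof of \cref{thm:no-lattice-structure} works."
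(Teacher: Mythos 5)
Your proposal is correct and follows essentially the same route as the paper: the paper's proof simply invokes the instance from \cref{thm:no-lattice-structure}, in which two distinct matchings ($Y_1$ and $Y_2$) are both LEF and Pareto efficient, and concludes that no unique student-optimal LEF matching can exist. You merely spell out the final contradiction (that a student-optimal LEF matching would have to coincide with both Pareto-efficient matchings) in more detail than the paper does, which is fine.
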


\ifconference
\else
\ifconference
\setcounter{theorem}{\getrefnumber{thm:no-student-optimal}}
\addtocounter{theorem}{-1}
\begin{theorem}
There exists a matching market with a student acquaintance graph in which a unique student-optimal matching does not exist in the set of LEF matchings.
\end{theorem}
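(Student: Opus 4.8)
The plan is to reuse the matching market constructed in the proof of \cref{thm:no-lattice-structure}: $n$ students, $n$ schools, unit quotas, every student having preference $\school_1 \succ \school_2 \succ \cdots \succ \school_n$, every school having preference $\student_1 \succ \student_2 \succ \cdots \succ \student_n$, and $G$ the complete graph on $\studentset$ with the single edge $\{\student_{n-1},\student_n\}$ deleted. In that proof I already exhibited two matchings,
\[
Y_1 = \{(\student_1,\school_1),\dots,(\student_{n-2},\school_{n-2}),(\student_{n-1},\school_{n-1}),(\student_n,\school_n)\}
\]
and
\[
Y_2 = \{(\student_1,\school_1),\dots,(\student_{n-2},\school_{n-2}),(\student_{n-1},\school_n),(\student_n,\school_{n-1})\},
\]
both of which are locally envy-free with respect to $G$ (the only justified envy is between $\student_{n-1}$ and $\student_n$, who are non-adjacent in $G$) and both of which are Pareto efficient, being outputs of the serial dictatorship for the master-lists $(\student_1,\dots,\student_{n-1},\student_n)$ and $(\student_1,\dots,\student_n,\student_{n-1})$ respectively. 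I would restate these two facts and then reason purely about the poset of LEF matchings with the Pareto order $\le$ of the excerpt.

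The key step is a short order-theoretic deduction. Suppose toward a contradiction that $Y^{\ast}$ is a student-optimal matching in the set of LEF matchings, i.e.\ $Y^{\ast}$ is LEF and $Y \le Y^{\ast}$ for every LEF matching $Y$. In particular $Y_1 \le Y^{\ast}$ and $Y_2 \le Y^{\ast}$, so $Y^{\ast}$ either Pareto dominates $Y_1$ or equals $Y_1$, and likewise for $Y_2$. But $Y_1$ is Pareto efficient, so no feasible matching Pareto dominates it; in particular $Y^{\ast}$ does not, forcing $Y^{\ast} = Y_1$. By the identical argument $Y^{\ast} = Y_2$. Since $Y_1 \neq Y_2$, this is a contradiction. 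Hence no student-optimal matching exists in the set of LEF matchings, and a fortiori no unique one does.

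I do not expect a genuine obstacle: the construction is already in hand and the argument is purely about the Pareto order. The only point that must be handled carefully is that a student-optimal element must be an upper bound of \emph{every} LEF matching (not merely of the Pareto-maximal ones), so it is forced to lie weakly above both $Y_1$ and $Y_2$; combined with the Pareto efficiency of $Y_1$ and $Y_2$ this pins $Y^{\ast}$ to two distinct matchings simultaneously. Equivalently, one may phrase the conclusion as: the poset of LEF matchings has (at least) two distinct Pareto-maximal elements $Y_1, Y_2$, hence no greatest element. If a self-contained instance is preferred to a cross-reference, the same reasoning works verbatim on the minimal market with $\studentset = \{\student_1,\student_2\}$, $\schoolset = \{\school_1,\school_2\}$, unit quotas, $\school_1 \succ_{\student_j} \school_2$ for $j = 1,2$, $\student_1 \succ_{\school_j} \student_2$ for $j = 1,2$, and $E = \emptyset$, taking $Y_1 = \{(\student_1,\school_1),(\student_2,\school_2)\}$ and $Y_2 = \{(\student_1,\school_2),(\student_2,\school_1)\}$.
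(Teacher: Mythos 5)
Your proposal is correct and follows essentially the same route as the paper: the paper's proof of this statement simply cites the construction from the lattice theorem, in which $Y_1$ and $Y_2$ are both LEF and Pareto efficient, and concludes that no unique student-optimal LEF matching can exist. Your explicit order-theoretic deduction (a student-optimal element would have to weakly dominate both PE matchings, hence equal both) merely spells out what the paper leaves implicit, and your minimal two-student instance is a valid self-contained alternative.
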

\begin{proof}
\else
\begin{proof}
\fi
By the proof of Theorem~\ref{thm:no-lattice-structure}, there may exist two matchings that are both LEF and PE. 
In such a case, a unique student-optimal matching does not exist.
\end{proof}

\fi

Finally, we show that the rural hospitals theorem~\cite{Roth:1986}, which states that the number of doctors (corresponding to students) assigned to rural hospitals (corresponding to schools) remains unchanged across all stable matchings, does not hold for locally envy-free matching.

\begin{theorem}
\label{thm:no-rural-hospitals}
There exists a matching market with a student acquaintance graph in which the set of assigned students, and the number of filled seats in each school, are different in the LEF matchings.
\end{theorem}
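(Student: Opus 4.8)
The plan is to exhibit a single small matching market with an acquaintance graph that admits two locally envy-free matchings with different sets of assigned students (and hence different numbers of filled seats). The natural starting point is the construction already used in the proof of \cref{thm:no-lattice-structure}: there the complete graph on $\studentset$ with one edge $\{\student_{n-1},\student_n\}$ removed yields two distinct LEF (indeed PE) matchings $Y_1$ and $Y_2$. However, $Y_1$ and $Y_2$ assign exactly the same set of students and fill the same seats, so that example does not directly witness failure of the rural hospitals theorem; I need to modify it so that one LEF matching leaves a seat (or a student) unused while another does not.

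First I would take the simplest case: two students $\student_1,\student_2$, one school $\school_1$ with quota $q_{\school_1}=1$, contracts $X = \studentset\times\schoolset$, preferences $\school_1 \succ_{\student_1}\emptyset$ and $\school_1\succ_{\student_2}\emptyset$, and school preference $\student_1\succ_{\school_1}\student_2\succ_{\school_1}\emptyset$, with the acquaintance graph $G=(\studentset,\emptyset)$ having no edges (this is exactly the market in the Example in the excerpt). Since $\student_1$ and $\student_2$ are non-adjacent, nobody can have local envy, so every matching is LEF; in particular $\{(\student_1,\school_1)\}$ and $\{(\student_2,\school_1)\}$ are both LEF. In the first, the set of assigned students is $\{\student_1\}$; in the second it is $\{\student_2\}$; both fill the single seat of $\school_1$, so the set of assigned students already differs. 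To also make the number of filled seats differ, I would note that $\emptyset$ is likewise LEF in this market, so comparing $\{(\student_1,\school_1)\}$ with $\emptyset$ gives a difference in the number of filled seats of $\school_1$ (one versus zero) as well as in the set of assigned students. Thus a single two-student, one-school instance settles the claim.

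The key steps, in order, are: (i) state the instance explicitly (students, school, contracts, both preference profiles, quota, and the edgeless graph); (ii) observe that with $E=\emptyset$ the local-envy relation is empty, so \emph{every} matching is LEF; (iii) list the LEF matchings $\{(\student_1,\school_1)\}$, $\{(\student_2,\school_1)\}$, and $\emptyset$; (iv) read off that the set of assigned students is $\{\student_1\}$, $\{\student_2\}$, and $\emptyset$ respectively, and the number of filled seats of $\school_1$ is $1$, $1$, and $0$ — so both quantities vary across LEF matchings. I would also remark, to preempt the obvious objection, that this phenomenon is genuinely due to the locality of the fairness notion: in this same market the unique stable matching is $\{(\student_1,\school_1)\}$, consistent with the rural hospitals theorem for stable matchings, so it is precisely the relaxation to local envy-freeness that breaks the invariant.

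I do not expect a serious obstacle here; the main thing to get right is conceptual rather than technical — namely choosing an instance where the edgeless (or sparse) graph makes LEF vacuous enough to permit wildly different matchings, rather than trying to reuse the $K_n$-minus-an-edge construction, which is ``too fair'' in the sense that all its LEF matchings happen to be perfect matchings on the same support. If one prefers an example with nontrivial structure (more schools, a connected-but-incomplete graph), one can pad the two-student gadget with additional students and schools that have mutually-best pairs forced by their preferences, but that only adds bookkeeping; the two-student instance above is the cleanest witness and is what I would present.
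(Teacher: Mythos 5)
Your construction does establish the theorem as literally stated, but it takes a genuinely different and noticeably weaker route than the paper, and one caveat is worth flagging. The paper's proof uses a three-student, three-school market on a path graph and exhibits two LEF matchings that are both \emph{Pareto efficient}, namely $\{(\student_1,\school_1),(\student_3,\school_2)\}$ (with $\student_2$ unmatched) and $\{(\student_1,\school_3),(\student_2,\school_2),(\student_3,\school_1)\}$, which have sizes two and three; this shows the rural-hospitals invariants fail even among efficient (hence nonwasteful) LEF matchings, which is the meaningful analog of the classical theorem about stable (fair \emph{and} nonwasteful) matchings, and it is also the form the paper reuses immediately afterwards to motivate the NP-hardness of finding a maximum-size LEF and PE matching. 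In your example, the pair $\{(\student_1,\school_1)\}$ versus $\{(\student_2,\school_1)\}$ is a fine witness for the assigned-students part and does isolate the role of locality (in the complete graph the latter would not be fair), but your witness for the filled-seats part compares against the empty matching, which is wasteful: already in the classical complete-graph setting the empty matching is justified-envy-free, so once nonwastefulness is dropped the filled-seats invariant fails for ordinary fairness as well, and your closing remark that ``it is precisely the relaxation to local envy-freeness that breaks the invariant'' overstates what that comparison shows. So: correct for the statement as worded (which only says ``LEF matchings''), but if the theorem is read, as the paper's proof does, as the failure of the rural hospitals theorem among nonwasteful or Pareto-efficient LEF matchings, your example would need to be replaced or augmented by something closer to the paper's construction.
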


\ifconference
\else
\ifconference
\setcounter{theorem}{\getrefnumber{thm:no-rural-hospitals}}
\addtocounter{theorem}{-1}
\begin{theorem}
There exists a matching market with a student acquaintance graph in which the set of assigned students, and the number of filled seats in each school, are different in the LEF matchings.
\end{theorem}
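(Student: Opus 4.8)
The plan is to exhibit a single small matching market with a student acquaintance graph that admits two locally envy-free matchings differing in which students are assigned and in how many seats each school fills. Since we already have (via the proof of \cref{thm:no-lattice-structure}) the idea that removing a non-edge lets two students ``swap'' schools without creating local envy, the natural next step is to engineer a situation where one locally envy-free matching leaves a school empty and a student unassigned, while another fills that seat and assigns that student. Concretely, I would take two students $\studentset = \{\student_1,\student_2\}$ with no edge between them ($E = \emptyset$), and one school $\school_1$ with quota $q_{\school_1} = 1$; let both students find $\school_1$ acceptable and let $\school_1$ rank $\student_1 \succ_{\school_1} \student_2 \succ_{\school_1} \emptyset$. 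Then $\{(\student_1,\school_1)\}$ and $\{(\student_2,\school_1)\}$ are both locally envy-free (the only possible justified envy is $\student_1$ toward $\student_2$ in the second matching, but $\student_1,\student_2$ are non-adjacent, so it is not local), and in the first the set of assigned students is $\{\student_1\}$ and $\school_1$ has one filled seat, while in the second the set of assigned students is $\{\student_2\}$ — different assigned sets. This already matches the example given earlier in the excerpt.

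The one subtlety is the clause ``the number of filled seats in each school''; with the two-student one-school instance above the \emph{number} of filled seats is $1$ in both matchings, so only the \emph{set} of assigned students differs. To get both discrepancies simultaneously, I would instead use the empty matching versus $\{(\student_1,\school_1)\}$: the empty matching is locally envy-free (no one is assigned, so no one envies), and $\{(\student_1,\school_1)\}$ is locally envy-free, and these differ both in the assigned-student set ($\emptyset$ versus $\{\student_1\}$) and in the number of filled seats of $\school_1$ ($0$ versus $1$). I would state the instance, list the two matchings, verify local envy-freeness of each by checking the (empty or singleton) set of justified-envy relations against the edge set $E = \emptyset$, and conclude. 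To also ensure consistency with the surrounding narrative — where the previous two theorems reuse the complete-graph-minus-an-edge instance — I might alternatively present the discrepancy within that same instance by comparing, say, $Y_1$ with a locally envy-free matching that leaves $\student_n$ unassigned and $\school_n$ empty, but the minimal two-student instance is cleaner and self-contained.

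The main obstacle is essentially presentational rather than mathematical: deciding which instance to use so that \emph{both} parts of the statement (assigned set differs \emph{and} filled-seat counts differ) hold in one pair of matchings, and making the local-envy-freeness check airtight. There is no hard combinatorics here — the whole point is that once acquaintance is sparse, the usual invariants collapse trivially — so the proof is a few lines: define the market, name two locally envy-free matchings, observe they disagree on assigned students and on $|Y_{\school_1}|$, and invoke \cref{def:local-fairness} to confirm each is locally envy-free because $E = \emptyset$ kills every candidate envy relation.

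\begin{proof}
Consider the matching market with $\studentset = \{\student_1\}$, $\schoolset = \{\school_1\}$, $X = \{(\student_1,\school_1)\}$, $q_{\school_1} = 1$, preferences $\student_1 : \school_1 \succ_{\student_1} \emptyset$ and $\school_1 : \student_1 \succ_{\school_1} \emptyset$, and student acquaintance graph $G = (\studentset, \emptyset)$.
Both $Y = \emptyset$ and $Y' = \{(\student_1,\school_1)\}$ are feasible, and since $G$ has no edges, no student can have local envy; hence both $Y$ and $Y'$ are LEF.
However, the set of assigned students is $\emptyset$ in $Y$ and $\{\student_1\}$ in $Y'$, and the number of filled seats of $\school_1$ is $0$ in $Y$ and $1$ in $Y'$.
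Thus the set of assigned students and the number of filled seats in each school differ among LEF matchings.
\end{proof}
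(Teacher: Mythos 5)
Your argument is logically valid for the statement as literally written, but it takes a much more degenerate route than the paper, and the difference matters. You exploit the fact that local envy-freeness alone does not rule out wasteful matchings: the empty matching is always LEF (justified envy requires some $(\student',\school)\in Y$), so a one-student, one-school market with $Y=\emptyset$ versus $Y'=\{(\student_1,\school_1)\}$ immediately gives different assigned sets and different filled-seat counts. That is airtight, but it says nothing specific about the \emph{local} relaxation -- exactly the same example "refutes" the rural hospitals invariant for plain fair (justified-envy-free) matchings, since fairness without nonwastefulness already admits $\emptyset$. The rural hospitals theorem is a statement about \emph{stable} matchings, i.e.\ fair \emph{and} nonwasteful ones, so the meaningful analogue is that the invariants fail even among LEF matchings that are nonwasteful, and this is what the paper's proof establishes: it builds a three-student path-graph instance and exhibits two matchings $Y_1$ (of size two) and $Y_2$ (of size three) that are both LEF \emph{and Pareto efficient}, hence nonwasteful. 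The paper then leans on precisely this stronger content in the next paragraph ("the sizes of LEF and Pareto efficient matchings may differ") to motivate the NP-hardness result for maximum-size LEF and PE matchings; your example, resting on a wasteful matching, would not support that follow-up. So: correct as a proof of the literal statement, and pleasantly minimal, but if you want the result to carry the weight the paper puts on it, you should either add nonwastefulness (or Pareto efficiency) to both witnesses -- e.g.\ reproduce an instance like the paper's, where $\student_2$ finds only $\school_2$ acceptable so that the short matching is nonwasteful -- or at least note explicitly that your construction only addresses the version of the claim without any efficiency requirement.
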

\begin{proof}
\else
\begin{proof}
\fi
Let $\studentset = \{\student_1, \student_2, \student_3\}$, $\schoolset = \{\school_1, \school_2, \school_3\}$, $X = \studentset \times \schoolset$, and $q_{\school_j} = 1$ for all $j \in [3]$.
Assume that the student acquaintance graph is a path graph depicted in~\cref{fig:path}.

Assume that $\succ_\studentset$ and $\succ_\schoolset$ are given as follows. (Here, preferences for unacceptable schools or students are omitted.)
\begin{align*}
&\student_1: \school_1 \succ_{\student_1} \school_2 \succ_{\student_1} \school_3 & \school_1: \student_2 \succ_{\school_1} \student_1 \succ_{\school_1} \student_3\\
&\student_2: \school_2 & \school_2: \student_3 \succ_{\school_2} \student_2 \succ_{\school_2} \student_1\\
&\student_3: \school_1 \succ_{\student_3} \school_2 \succ_{\student_3} \school_3 & \school_3: \student_1 \succ_{\school_3} \student_2 \succ_{\school_3} \student_3
\end{align*}
We enumerate all the Pareto efficient matchings as follows:
\begin{itemize}
\item $Y_1 := \{$$(\student_1, \school_1)$, $(\student_2, \emptyset)$, $(\student_3, \school_2)$$\}$
\item $Y_2 := \{$$(\student_1, \school_3)$, $(\student_2, \school_2)$, $(\student_3, \school_1)$$\}$
\end{itemize}
the size of $Y_1$ is two, and the size of $Y_2$ is three.
\end{proof}

\fi

\ifconference
\else
\begin{figure}[t]
\centering
\includegraphics[scale = 0.60]{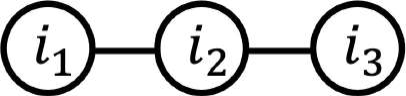}
\hspace{5cm}
\caption{A path graph.}
\label{fig:path}
\hspace{10cm}
\end{figure}
\fi 

From the proof of the above theorem, 
the sizes of LEF and Pareto efficient (PE) matchings may differ.
Now, we show that finding an LEF and PE matching with maximum size is NP-hard.
In fact, we show a stronger result. 
We consider the problem of deciding whether for a given matching market with a student acquaintance graph there exists an LEF and PE matching.
We call this problem the \emph{Locally Envy-free Efficient (LEE)} problem.
We show that the LEE problem is NP-complete.
\ifconference
The proof is similar to the one of Theorem 1 in \cite{AG20}.
\else
The proof is similar to the one of Theorem 1 in \cite{AG20}, but it is included for the sake of completeness of the paper.
\fi

\begin{theorem}
\label{thm:existence-LEF+PE-NPC}
The LEE problem is NP-complete.
\end{theorem}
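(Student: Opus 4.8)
The plan is to establish NP-completeness by showing membership in NP and then giving a reduction from a known NP-complete problem, following the template of Theorem~1 in~\cite{AG20} but adapted so that the envy that matters is restricted to the acquaintance graph. Membership in NP is immediate: given a matching $Y$, one can verify in polynomial time that it is feasible, that no student has local envy (check each edge of $G$), and that it is Pareto efficient --- the latter since a matching is Pareto efficient iff it is nonwasteful and admits no ``trading cycle'' among the students, which can be detected in polynomial time (equivalently, guess a dominating matching $Y'$ as a polynomial certificate of non-efficiency, so non-efficiency is in NP and efficiency is in coNP; but in fact efficiency is checkable directly in polynomial time via the standard TTC-style cycle detection, so the whole property is polynomial-time checkable and LEE is in NP).

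For hardness I would reduce from a problem that naturally forces a binary choice whose global consistency is hard, such as 3-SAT or, closer to the matching literature, the NP-complete problem used in~\cite{AG20} (which concerns existence of a stable matching with certain side constraints). The idea is to build a gadget for each variable that has exactly two Pareto-efficient ``local'' configurations, corresponding to the truth values, and clause gadgets that can be simultaneously satisfied by an LEF and PE matching if and only if the formula is satisfiable. The acquaintance graph $G$ is the key design freedom: we only need to place edges where we want envy to be forbidden, so we can make the ``local envy-freeness'' constraint bite exactly on the pairs of students whose relative assignment encodes a clause being violated, while leaving non-adjacent the pairs whose swap we want to permit inside a variable gadget (this is precisely the trick already used in the proof of \cref{thm:no-lattice-structure}, where removing the single edge $\{\student_{n-1},\student_n\}$ creates a second PE matching that is still LEF). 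Pareto efficiency is enforced through the preference design so that in any PE matching every school seat relevant to a gadget is filled in one of the two intended ways; local envy-freeness then rules out exactly the ``bad'' combinations across gadgets.

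Concretely, the steps in order: (1) fix the source problem and its instance $\phi$; (2) describe the students, schools, quotas, contract set $X$, preferences $\succ_\studentset,\succ_\schoolset$, and the graph $G$ built from $\phi$, with a variable gadget and a clause gadget; (3) prove that any Pareto-efficient matching restricted to a variable gadget must be one of the two canonical configurations, using a short exchange/domination argument; (4) prove the forward direction: a satisfying assignment yields an LEF and PE matching, by assembling the canonical gadget configurations and checking that every edge of $G$ carries no justified envy; (5) prove the backward direction: an LEF and PE matching induces a consistent truth assignment satisfying $\phi$, arguing that a violated clause would create a justified-envy edge inside the corresponding clause gadget, contradicting local envy-freeness; (6) note the construction is polynomial in $|\phi|$.

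The main obstacle I expect is step~(3) together with the coupling in step~(5): I must design the school preferences so that Pareto efficiency alone already pins each variable gadget down to two options (no ``third'' efficient configuration leaks in), while the acquaintance graph is simultaneously sparse enough to leave intra-gadget swaps LEF yet dense enough that every clause violation is witnessed by an actual edge of $G$. Balancing these two pressures --- PE is a global, mechanism-independent condition, whereas LEF is a purely local edge condition --- is the delicate part; the rest (NP membership, polynomiality, the forward direction) is routine. Since \cite{AG20} carries out an analogous balancing act for stable matchings with constraints, I would mirror their gadget structure and mainly re-verify that replacing ``stable'' by ``Pareto efficient $+$ locally envy-free'' does not break the gadget invariants.
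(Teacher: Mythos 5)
Your NP-membership argument is fine and essentially matches the paper's: local envy-freeness is checkable edge by edge, and Pareto efficiency of a given matching is decidable in polynomial time (the paper simply cites \cite{ACMM04} for this rather than re-deriving a TTC-style characterization). The genuine gap is in the hardness direction: you never actually give a reduction. The source problem is left unfixed (``3-SAT or the problem used in \cite{AG20}''), no variable or clause gadget is constructed, and the step you yourself single out as delicate---designing preferences so that Pareto efficiency alone pins each variable gadget to exactly two configurations, while the acquaintance graph is sparse enough to keep intra-gadget swaps LEF yet dense enough that every clause violation is witnessed by an edge---is precisely the step you defer. Since this construction and its correctness proof are the whole content of the theorem, the proposal as written is a plan rather than a proof, and there is no guarantee that the gadget invariants of \cite{AG20} survive once ``stable'' is replaced by ``Pareto efficient and locally envy-free.''

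For comparison, the paper sidesteps gadgetry entirely: it reduces from the SD feasibility problem of \cite{SS15}, i.e., deciding whether some master-list makes serial dictatorship assign a designated student $\student$ to a designated school $\school$, which is equivalent to deciding whether a Pareto-efficient matching containing $(\student,\school)$ exists---so the ``global'' efficiency constraint is already encoded in the source problem. The reduction then adds a single dummy student $\tilde{\student}$ and dummy school $\tilde{\school}$ (with $\tilde{\school}$ ranked last by everyone, $\student$ ranked last by every school other than $\school$), and takes the acquaintance graph in which the students of $\studentset\setminus\{\student\}$ are pairwise non-adjacent but all adjacent to both $\student$ and $\tilde{\student}$; one then shows that a full-size matching is LEF if and only if it contains both $(\student,\school)$ and $(\tilde{\student},\tilde{\school})$, after which Pareto efficiency transfers directly between the two instances. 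To complete your argument you would either need to carry out your SAT gadget construction in full (including the exchange argument ruling out any third efficient configuration per gadget and the edge placement catching every clause violation), or switch, as the paper does, to a source problem that already carries the Pareto-efficiency requirement.
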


\ifconference
\else
\begin{figure}[t]
\centering
\includegraphics[scale = 0.40]{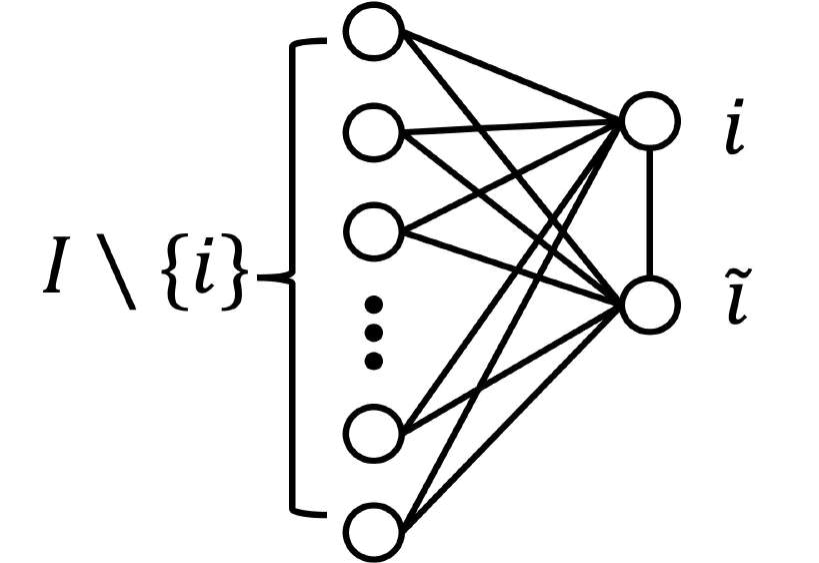}
\hspace{5cm}
\caption{The student acquaintance graph in the proof of~\cref{thm:existence-LEF+PE-NPC}.}
\label{fig:NP-complete-proof}
\hspace{10cm}
\end{figure}
\ifconference
\setcounter{theorem}{\getrefnumber{thm:existence-LEF+PE-NPC}}
\addtocounter{theorem}{-1}
\begin{theorem}
The LEE problem is NP-complete.
\end{theorem}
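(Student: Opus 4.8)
The plan is to establish membership in NP and then prove NP-hardness by a polynomial-time reduction patterned on the proof of Theorem~1 in \cite{AG20}; together these give NP-completeness.

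For membership, a matching $Y$ serves as a polynomial-size certificate. Feasibility of $Y$ is read off directly from the quotas; local envy-freeness is verified by scanning every edge $\{i,i'\}\in E$ and testing against the strict preference lists whether $i$ has justified envy toward $i'$ or vice versa; and Pareto efficiency is decided by the standard polynomial-time test for Pareto optimality of a feasible (many-to-one) matching --- essentially the detection of a weakly improving trading cycle, or a chain terminating at an unfilled seat, in an auxiliary digraph built from the preferences. All three checks run in polynomial time, so LEE is in NP.

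For hardness I would reduce from the NP-complete problem used in \cite{AG20} (a restricted satisfiability / graph problem), adapting the construction to the presence of a student acquaintance graph. The instance is assembled from \emph{variable gadgets} --- small groups of students together with dedicated schools whose only Pareto-efficient local configurations are the two encoding the literal as true or false --- and \emph{clause gadgets} that are consistent with a given truth assignment exactly when the clause is satisfied. The acquaintance graph $G$ is the device that converts ``no local justified envy'' into ``encodes a satisfying assignment'': inside each variable gadget the two students realizing the truth value are made non-adjacent, so that both Pareto-efficient states of the gadget stay locally envy-free, whereas a student that would witness a violated clause is made adjacent to the student occupying the seat it covets, so that a clause-violating assignment cannot be locally envy-free. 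One then argues that a satisfying assignment yields an LEF and PE matching, and conversely that any LEF and PE matching --- being Pareto efficient --- must place every variable gadget in one of its two clean states and --- being locally envy-free under the chosen adjacencies --- induces a truth assignment satisfying all clauses.

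The main obstacle, just as in \cite{AG20}, is calibrating the two opposing demands inside the gadgets: school preferences and quotas must be rigid enough that Pareto efficiency forces a genuinely binary choice per variable, ruling out ``mixed'' matchings that would otherwise slip past both constraints, yet loose enough that every consistent truth assignment really does extend to a matching that is Pareto efficient (not merely nonwasteful). Proving Pareto efficiency of the matchings produced on the ``yes'' side and excluding spurious LEF and PE matchings on the ``no'' side is the delicate part; because local envy is weaker than envy, the acquaintance graph affords some extra slack, so the extra work relative to \cite{AG20} is mainly re-verifying that the adjacency restrictions never accidentally render a clause-violating configuration locally envy-free.
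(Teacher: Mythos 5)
Your NP membership argument is fine and matches the paper's (verifying LEF is a direct scan over edges, and Pareto efficiency of a given matching is decidable in polynomial time). The problem is the hardness half: what you give is a plan for a SAT-style reduction with variable and clause gadgets, but no gadget is actually constructed --- no preferences, no quotas, no acquaintance graph, and no correctness argument. You yourself flag the crucial steps (forcing a binary choice per variable via Pareto efficiency, and ruling out spurious LEF and PE matchings on the ``no'' side) as ``the delicate part'' and leave them open. Since the reduction and its correctness are the entire content of the NP-hardness claim, this is a genuine gap, not just missing polish: as written, nothing certifies that the intended equivalence (satisfiable iff an LEF and PE matching exists) can be realized at all. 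You also guessed that the source problem in \cite{AG20} is a restricted satisfiability problem, whereas the reduction actually used is of a different character.

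The paper's route is much lighter and worth knowing: it reduces from the \emph{SD feasibility} problem (given a one-to-one market and a designated pair $(\student,\school)$, decide whether some Pareto-efficient matching contains $(\student,\school)$), which is NP-hard by Saban and Sethuraman. One adds a single dummy student $\tilde{\student}$ and dummy school $\tilde{\school}$, makes $\tilde{\school}$ least preferred by every student, lets $\school$ rank $\student$ first while every other school ranks $\student$ last (below $\tilde{\student}$), and chooses the acquaintance graph so that $\student$ and $\tilde{\student}$ are adjacent to each other and to all other students, while the remaining students are pairwise non-adjacent. A short case analysis then shows that a full-size matching is LEF iff it contains both $(\student,\school)$ and $(\tilde{\student},\tilde{\school})$, which makes ``LEF and PE matching exists'' equivalent to ``PE matching containing $(\student,\school)$ exists.'' If you want to salvage your approach, you would have to carry out the gadget construction and both directions of correctness in full; alternatively, adopt the forced-pair reduction, which sidesteps the calibration issues you identified.
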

\begin{proof}
\else
\begin{proof}
\fi
Given an LEF and PE matching, we can check that it is indeed an LEF and PE matching in polynomial time.
Indeed, checking if it is LEF is immediate by definition.
Also, there exists a polynomial time algorithm to decide whether a given matching is PE (\cite[Proposition 3]{ACMM04}).
Hence, the LEE problem is in NP.

Now, we show that the LEE problem is NP-hard.
We show this by reducing the SD feasibility problem, known to be NP-hard~\cite[Theorem 2]{SS15}, to the LEE problem.

Let us first define the SD feasibility problem.
In the SD feasibility problem, we are given a set $\studentset$ of agents, a set $\schoolset$ of objects with $|\schoolset|=|\studentset|$, a preference profile $\succ_{\studentset}$ of the agents over $\schoolset$, and an agent-object pair $(\student,\school) \in \studentset \times \schoolset$.
The task is to decide whether there exists a master-list $L$ such that, when SD is executed based on $L$, agent $\student$ is matched to object $\school$.
This problem is equivalent to determining whether there exists a PE matching in which agent $\student$ is matched to object $\school$.
This equivalence holds because, under maximum quotas constraints, it is known that the set of matchings obtainable via SD coincides with the set of PE matchings (see, for example, Theorem 2 in \cite{IK24}).

Given an instance $(\studentset,\schoolset,\succ_{\studentset},(\student,\school))$ of the SD feasibility problem, 
we construct a matching market with student acquaintance graph $\instance=(\tilde{\studentset}, \tilde{\schoolset}, X, \succ_{\tilde{\studentset}}, \succ_{\tilde{\schoolset}}, q, G)$ as follows.
\begin{itemize}
\item $\studentset = \studentset \cup \{ \tilde{\student} \}$, $\tilde{\schoolset} = \schoolset \cup \{ \tilde{\school} \}$, where $\tilde{\student} \notin \studentset$  and $\tilde{\school} \notin \schoolset$, and $X = \tilde{\studentset} \times \tilde{\schoolset}$.
\item The restriction of $\succ_{\tilde{\studentset}}$ on $\studentset$ is $\succ_{\studentset}$.
$\tilde{\school}$ is the least preferred school for all students in $\tilde{\studentset}$, and $\tilde{\student}$'s preferences are otherwise arbitrary.
\item $\succ_{\tilde{\schoolset}}$ is defined as follows:
\begin{align*}
&\school: \student \succ \student_{1} \succ \dots \succ \student_{|I|-1} \succ \tilde{\student} \\
&\tilde{\schoolset} \setminus \{\school\}: \student_{1} \succ \dots \succ \student_{|I|-1} \succ \tilde{\student} \succ \student,
\end{align*}
\if0 
&\school: \student \succ \studentset \setminus \{\student\} \succ \tilde{\student} \\
&\tilde{\schoolset} \setminus \{\school\}: \studentset \setminus \{\student\} \succ \tilde{\student} \succ \student,
\fi 
where we assume that $\studentset \setminus \{\student\} = \{ \student_1, \dots, \student_{|I|-1} \}$.
We note that preferences among the students in $\studentset \setminus \{\student\}$ can actually be arbitrary.
\item $q_{\school} = 1$ for all $\school \in \tilde{\schoolset}$.
\item $G=(\tilde{\studentset},E)$ is defined as $E = \{ \{\student,\tilde{\student}\}  \} \cup \{ \{ \student,\student' \} \mid \student' \in \studentset \setminus \{ \student \} \} \cup \{ \{ \tilde{\student},\student' \} \mid \student' \in \studentset \setminus \{ \student \} \}$ (see \cref{fig:NP-complete-proof}).
In other words, we assume that the students in $\studentset \setminus \{ \student \}$ are not acquainted with each other.
\end{itemize}

We first claim that a matching $\matching$ of size $|\tilde{\studentset}|$ in the matching market $\instance$ is LEF if and only if $(\student,\school) \in \matching$ and $(\tilde{\student},\tilde{\school}) \in \matching$.
For the "if" part, assume that $(\student,\school) \in \matching$ and $(\tilde{\student},\tilde{\school}) \in \matching$.
Then $\student$ is not envied, since $\student$ is the most preferred student by $\school$.
Also, $\tilde{\student}$ is not envied, since $\tilde{\school}$ is the least preferred school for any student.
Finally, any student other than $\student$ and $\tilde{\student}$ is not envied by its neighbors $\student$ and $\tilde{\student}$, since $\student$ is the least preferred student for the schools other than $\school$ and $\tilde{\student}$ is less preferred by the schools any student in $\studentset \setminus \{\student\}$.
For the "only-if" part, assume that $|\tilde{\studentset}|$ is LEF.
It follows that $(\student, \tilde{\school}) \notin \matching$ since otherwise $\student$ has a local envy to the student who is matched to $\school$.
Then we have $(\tilde{\student},\tilde{\school}) \in \matching$, since otherwise the student who is matched to $\tilde{\school}$ has a local envy to $\tilde{\student}$ (as the student is preferred by $\tilde{\school}$ to $\tilde{\student}$).
It further follows that $(\student,\school) \in \matching$, since otherwise $\tilde{\student}$ has a local envy to $\student$.
Therefore, we have $(\student,\school) \in \matching$ and $(\tilde{\student},\tilde{\school}) \in \matching$.

Now, we show that there exists a PE matching in $(\studentset,\schoolset,\succ_{\studentset},(\student,\school))$ that matches $\student$ and $\school$ if and only if there exists an LEF and PE matching in the matching market $\instance$.

First, assume that there exists a PE matching in $(\studentset,\schoolset,\succ_{\studentset},(\student,\school))$ that matches $\student$ and $\school$.
Define a matching $\tilde{\matching}$ of the matching market $\instance$ as $\tilde{\matching} = \matching \cup \{(\tilde{\student},\tilde{\school})\}$.
From the above claim, $\tilde{\matching}$ is LEF.
Moreover, since $\tilde{\school}$ is the least preferred school for all students in $\tilde{\studentset}$ and $\matching$ is PE, $\tilde{\matching}$ is also PE.

Second, assume that there exists a LEF and PE matching $\matching \subseteq X$ in the matching market $\instance$.
From the above claim, we have $(\student,\school) \in \matching$ and $(\tilde{\student},\tilde{\school}) \in \matching$.
Then restricting $\matching$ on $\studentset$ yields a PE matching in $(\studentset,\schoolset,\succ_{\studentset},(\student,\school))$ that matches $\student$ and $\school$.
\end{proof}
\fi

\ifconference
\else
Although the fundamental theorems do not hold in our model as shown above, 
we demonstrate in \cref{sec:new-mechanisms} that efficient matchings can be obtained by restricting graph structures and schools' preferences.
\fi
\section{Impossibility results}

In this section, we clarify the extent to which local envy-freeness (LEF) is incompatible with Pareto efficiency (PE) in general. 
Specifically, we present the following impossibility results:
\begin{itemize}
\item PE and LEF are incompatible on a path graph among three students (\cref{thm:no-LEF-PE-under-path}).
\item strategyproofness (SP), PE and LEF are incompatible on a path graph among three students even if schools' preferences are single-peaked on the graph (\cref{thm:no-SP-LEF-PE-under-path-single-peaked}).
\item PE and LEF are incompatible on a cycle graph among three students even if schools' preferences are single-peaked on the graph (\cref{thm:no-LEF-PE-under-cycle-single-peaked}). (See \cref{subsec:treewidthk-LEF-k-1} for a definition of single-peaked preferences on general graphs.)
\end{itemize}

It is known that no mechanism can be simultaneously PE and fair (i.e., justified-envy-free) in general~\cite{roth1982economics,AS03}, where fairness coincides with LEF with respect to complete graphs.
We first show that no mechanism can be simultaneously PE and LEF even under very simple graphs, i.e., path graphs.

\begin{theorem}
\label{thm:no-LEF-PE-under-path}
No mechanism can be simultaneously PE and LEF with respect to path graphs.
\end{theorem}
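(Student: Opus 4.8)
The plan is to prove the impossibility by exhibiting a single matching market whose student acquaintance graph is a path on three students in which \emph{every} Pareto efficient matching has a justified envy between two adjacent students; since a PE mechanism must return such a matching on this instance, it cannot be LEF. I will take $\studentset=\{\student_1,\student_2,\student_3\}$, $\schoolset=\{\school_1,\school_2,\school_3\}$, $X=\studentset\times\schoolset$, $q_{\school_1}=q_{\school_2}=q_{\school_3}=1$, the path $G$ with edge set $\{\{\student_1,\student_2\},\{\student_2,\student_3\}\}$ (so $\student_2$ is the internal vertex), and preferences essentially those of the classical example for which no Pareto efficient matching is fair~\cite{roth1982economics,AS03}:
\begin{align*}
&\student_1:\school_1\succ\school_2\succ\school_3 & &\school_1:\student_2\succ\student_3\succ\student_1\\
&\student_2:\school_2\succ\school_1\succ\school_3 & &\school_2:\student_1\succ\student_3\succ\student_2\\
&\student_3:\school_1\succ\school_2\succ\school_3 & &\school_3:\student_1\succ\student_2\succ\student_3
\end{align*}
The labeling is chosen so that $\school_1$ ranks the central student $\student_2$ first, $\school_2$ ranks $\student_2$ last, and both endpoints rank $\school_1$ first; this is exactly what forces the surviving envies onto the two path edges rather than onto the non-edge $\{\student_1,\student_3\}$.

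The first step is to list all PE matchings. Because every school has quota one and every student--school pair is acceptable, no PE matching can leave a student unmatched (she would claim an empty seat), so the PE matchings are precisely the Pareto efficient perfect matchings, equivalently the outputs of serial dictatorship over the $3!$ orderings of $\studentset$. A quick check of the six perfect matchings should leave exactly four PE ones; the two discarded ones are Pareto dominated by another perfect matching that makes two students strictly better off and no one worse off, and in particular the unique stable matching is among them --- the familiar reason that no PE matching here is fair.

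The second step, which I expect to be the only substantive part of the argument, verifies that each of the four PE matchings contains a justified envy on an edge of $G$. In each of them exactly one student is matched to $\school_3$ (her least preferred school); call her the \emph{victim}, and observe that the other two students occupy $\school_1$ and $\school_2$. If the victim is an endpoint, the central student $\student_2$ holds $\school_2$; then the victim prefers $\school_2$ to $\school_3$ while $\school_2$ prefers the victim to $\student_2$ (since $\school_2$ ranks $\student_2$ last), so the victim locally envies $\student_2$. If the victim is $\student_2$ herself, then some endpoint holds $\school_1$; since $\student_2$ prefers $\school_1$ to $\school_3$ and $\school_1$ prefers $\student_2$ to that endpoint (since $\school_1$ ranks $\student_2$ first), $\student_2$ locally envies that endpoint. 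Either way the matching is not LEF, so every PE matching of this market fails LEF; as a PE mechanism must output one of them, no mechanism is simultaneously PE and LEF, which proves the theorem. The only place requiring genuine care is the placement of $\student_2$ at the center: across the PE matchings of the classical example the justified envies collectively touch all three student pairs, so a careless labeling would leave a PE matching whose only justified envy is the endpoint-to-endpoint one, and that matching would be LEF.
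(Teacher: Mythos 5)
Your proposal is correct and follows essentially the same route as the paper: a single three-student, three-school market with unit quotas on the path $\student_1$--$\student_2$--$\student_3$ (the paper uses the same student preferences and nearly identical school preferences, differing only in $\school_1$'s ranking of the two endpoints), for which one enumerates the four Pareto efficient matchings and checks that each contains a justified envy along a path edge. Your ``victim'' case analysis is a slightly tidier way of organizing the same verification the paper does matching by matching, and your instance indeed has the required properties ($\school_1$ ranks $\student_2$ first, $\school_2$ ranks $\student_2$ last), so the argument goes through.
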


\ifconference
\else
\ifconference
\begin{figure}[t]
\centering
\includegraphics[scale = 0.60]{path3.pdf}
\hspace{5cm}
\caption{A path graph.}
\label{fig:path}
\hspace{10cm}
\end{figure}

\setcounter{theorem}{\getrefnumber{thm:no-LEF-PE-under-path}}
\addtocounter{theorem}{-1}
\begin{theorem}
No mechanism can be simultaneously PE and LEF with respect to path graphs.
\end{theorem}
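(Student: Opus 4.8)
The plan is to construct a single matching market whose student acquaintance graph is a $3$-vertex path $\student_1 - \student_2 - \student_3$ and to prove that \emph{every} Pareto-efficient feasible matching of this market contains a local justified envy. Since, by definition, a PE mechanism must output a PE matching on this instance and an LEF mechanism must output an LEF matching on it, a single such instance already rules out a mechanism that is simultaneously PE and LEF. The design principle guiding the construction is that on a $3$-vertex path the endpoints $\student_1$ and $\student_3$ are \emph{not} adjacent, so a justified envy between them is harmless; hence the instance must be engineered so that in every PE matching the middle student $\student_2$ is incident to a justified-envy pair --- as the envious student or as the envied one.

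Concretely, I would take two schools $\school_1,\school_2$ of unit quota, $X=\studentset\times\schoolset$ with every contract acceptable, and the preferences
\[
\student_1:\school_1\succ\school_2\succ\emptyset,\quad
\student_2:\school_2\succ\school_1\succ\emptyset,\quad
\student_3:\school_1\succ\school_2\succ\emptyset,
\]
\[
\school_1:\student_2\succ\student_3\succ\student_1,\qquad
\school_2:\student_1\succ\student_3\succ\student_2.
\]
The rationale is that $\student_1$ and $\student_3$ both top-rank $\school_1$ while $\student_2$ top-ranks $\school_2$, creating a ``swap'' tension between neighbours, and that $\student_2$ has the top priority at $\school_1$ (the school it does \emph{not} prefer) but the bottom priority at $\school_2$ (the school it \emph{does} prefer); consequently, whenever $\student_2$ holds a seat some neighbour locally envies it, and whenever $\student_2$ fails to hold its preferred seat it locally envies a neighbour. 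Note that $\school_2$'s priority is not single-peaked on the path --- its top two students $\{\student_1,\student_3\}$ are precisely the non-adjacent endpoints --- which is consistent with the later positive result obtained under single-peakedness.

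The verification is a finite case analysis on this tiny instance. First, any feasible matching leaving a school seat empty is Pareto-dominated (there is an unmatched student who can be moved into that seat and made strictly better off), so every PE matching fills both seats; among the six matchings that fill both seats, the two non-PE ones are each dominated by the obvious neighbour swap, leaving exactly the four PE matchings $\{(\student_1,\school_1),(\student_2,\school_2)\}$, $\{(\student_1,\school_1),(\student_3,\school_2)\}$, $\{(\student_1,\school_2),(\student_3,\school_1)\}$, and $\{(\student_2,\school_2),(\student_3,\school_1)\}$. Second, in these four matchings the ordered pairs $(\student_3,\student_2)$, $(\student_2,\student_1)$, $(\student_2,\student_3)$, and $(\student_1,\student_2)$ respectively witness a justified envy (first component envies the second) between two adjacent vertices of the path, so none of the four is LEF; this is the claim. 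I expect the main obstacle to be precisely the design step --- choosing the school priorities so that the set of PE matchings stays small and is \emph{uniformly} contaminated by envy incident to $\student_2$; once the instance is fixed, each remaining check is immediate from \cref{def:fairness}, \cref{def:local-fairness}, and the definition of Pareto efficiency.
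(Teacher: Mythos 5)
Your proposal is correct and follows essentially the same strategy as the paper: construct a single three-student path instance, enumerate all Pareto-efficient matchings, and verify that each contains a local justified envy incident to the middle student $\student_2$. The only difference is cosmetic --- you use two unit-quota schools (so each PE matching leaves one student unmatched) whereas the paper uses three, and your case analysis of the four PE matchings checks out.
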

\begin{proof}
\else
\begin{proof}
\fi
Let $\studentset = \{\student_1, \student_2, \student_3\}$, $\schoolset = \{\school_1, \school_2, \school_3\}$, and $q_{\school_j} = 1$ for all $j$.
Assume that the student acquaintance graph is a path graph depicted in~\cref{fig:path} in~\cref{sec:failure-fundamental-theorems}.

Assume that $\succ_\student$ and $\succ_\school$ are given as follows.
\begin{align*}
\student_1: \school_1 \succ_{\student_1} \school_2 \succ_{\student_1} \school_3 \qquad \school_1: \student_2 \succ_{\school_1} \student_1 \succ_{\school_1} \student_3\\
\student_2: \school_2 \succ_{\student_2} \school_1 \succ_{\student_2} \school_3 \qquad \school_2: \student_1 \succ_{\school_2} \student_3 \succ_{\school_2} \student_2\\
\student_3: \school_1 \succ_{\student_3} \school_2 \succ_{\student_3} \school_3 \qquad \school_3: \student_1 \succ_{\school_3} \student_2 \succ_{\school_3} \student_3
\end{align*}

We enumerate all Pareto efficient matchings as follows:
\begin{itemize}
\item $Y_1 := \{$$(\student_1, \school_1)$, $(\student_2, \school_2)$, $(\student_3, \school_3)$$\}$
\item $Y_2 := \{$$(\student_1, \school_1)$, $(\student_2, \school_3)$, $(\student_3, \school_2)$$\}$
\item $Y_3 := \{$$(\student_1, \school_2)$, $(\student_2, \school_3)$, $(\student_3, \school_1)$$\}$
\item $Y_4 := \{$$(\student_1, \school_3)$, $(\student_2, \school_2)$, $(\student_3, \school_1)$$\}$
\end{itemize}
We see that all of these matchings do not satisfy LEF.
Indeed, in matching $Y_1$, $\student_3$ has justified envy towards $\student_2$ since $\school_2 \succ_{\student_3} \school_3$ and $\student_3 \succ_{\school_2} \student_2$.
In matching $Y_2$, $\student_2$ has justified envy towards $\student_1$ since $\school_1 \succ_{\student_2} \school_3$ and $\student_2 \succ_{\school_1} \student_1$.
In matching $Y_3$, $\student_2$ has justified envy towards $\student_3$ since $\school_1 \succ_{\student_2} \school_3$ and $\student_2 \succ_{\school_1} \student_3$.
In matching $Y_4$, $\student_1$ has justified envy towards $\student_2$ since $\school_2 \succ_{\student_1} \school_3$ and $\student_1 \succ_{\school_2} \student_2$.
Therefore, there exists no PE and LEF matching for the above instance.
\end{proof}

In the proof of Theorem 5, we show that even in a path graph of length three, PE and LEF are incompatible. 
This result can be extended to any graph that contains a path of length three as an induced subgraph, where the same incompatibility holds.
\fi

We next show that even if the schools' preferences are single-peaked on paths no mechanism can be simultaneously PE and LEF if we require SP.

\begin{theorem}
\label{thm:no-SP-LEF-PE-under-path-single-peaked}
No mechanism can be simultaneously SP, PE, and LEF with respect to path graphs even if the schools' preferences are single-peaked on the graphs.
\end{theorem}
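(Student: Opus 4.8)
The plan is to prove the impossibility by exhibiting a single matching market whose school preferences are single-peaked on a path, together with a base profile of student preferences, on which no Pareto-efficient, locally-envy-free mechanism can also be strategyproof. This is not vacuous: by \cref{thm:LEF-PE-under-tree-single-peaked} a PE and LEF matching does exist under single-peaked preferences on a path, so the argument must genuinely use strategyproofness.

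First I would fix a short path (three students $i_1-i_2-i_3$, or four if the single-peaked restriction is too rigid on three vertices) with unit-quota schools, and choose the schools' single-peaked preferences and a base profile $P$ so that the set of PE and LEF matchings of $P$ is tiny and fully understood. Concretely, for three unit-quota schools there are only six feasible matchings; I would discard the Pareto-dominated ones using the known characterization that a matching is PE iff it is an output of some serial dictatorship (as used earlier, cf.\ \cite{ACMM04,IK24}), and then, among the PE matchings, keep only the ones without local envy, which by \cref{def:local-fairness} requires checking envy only across the two path edges $\{i_1,i_2\}$ and $\{i_2,i_3\}$.

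The core is a case analysis on what a hypothetical SP+PE+LEF mechanism $f$ can output on $P$. For each matching $Y$ that $f$ could return, I would pick a student $i$ whose assignment $Y_i$ is not her most preferred school and construct a misreport $\hat\succ_i$ such that, in $\hat P=(P_{-i},\hat\succ_i)$, \emph{every} PE and LEF matching assigns $i$ a school strictly better than $Y_i$ under her true preference $\succ_i$ --- ideally $\hat P$ has a unique PE and LEF matching, pinning down $f(\hat P)$. The misreport is engineered so that in $\hat P$ all matchings that are bad for $i$ become either Pareto-dominated (because $i$'s reported ranking now places her current school above the school whose assignment previously witnessed the domination) or locally envious (because a shift in another student's assignment activates an envy edge). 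Since $f$ must return PE and LEF matchings on both $P$ and $\hat P$ and $i$ strictly gains by reporting $\hat\succ_i$ instead of $\succ_i$, $f$ is not strategyproof; one then notes, as after \cref{thm:no-LEF-PE-under-path}, that the incompatibility carries over to any path containing this configuration.

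The main obstacle is the construction itself. Single-peakedness leaves each school only a handful of admissible orders on a short path, so one must steer between two failure modes. If the market is too rigid, the (often unique) PE and LEF matching of $P$ completely determines the losing student's school --- every matching that would improve her stays locally envious because of the other students' fixed preferences --- and no misreport can help. If it is too loose, there is a PE and LEF matching handing two students their top schools that remains PE and LEF under every misreport of the third, so $f$ can always retreat to it and dodge all deviations. Threading between these requires choosing the school preferences so that the losing student's ``escape'' school never triggers local envy toward her, while her true top school stays permanently blocked (an adjacent student with the same top is ranked above her there); arranging the Pareto-domination relations among the $i$-at-escape, $i$-at-top, and remaining matchings so that they flip appropriately under one misreport is the delicate step.
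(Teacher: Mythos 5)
There is a genuine gap: your proposal describes a strategy for finding a counterexample but never produces the instance, and the instance is essentially the whole content of this proof. Worse, the specific scheme you commit to---fix one base profile $P$, and for each possible output $Y$ of the mechanism construct a misreport $\hat\succ_i$ such that \emph{every} PE and LEF matching of $\hat P$ is strictly better for $i$ than $Y_i$---is stronger than what can be arranged, at least in the natural three-student configurations. The paper's proof uses the path $i_1-i_2-i_3$ with unit quotas and single-peaked school preferences $s_1: i_3 \succ i_2 \succ i_1$ and $s_2, s_3: i_1 \succ i_2 \succ i_3$, and a chain of \emph{three} student profiles $\succ^1_I, \succ^2_I, \succ^3_I$. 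The extreme profiles $\succ^1_I$ and $\succ^3_I$ each admit a unique PE and LEF matching ($[s_3,s_1,s_2]$ and $[s_2,s_3,s_1]$ respectively), while the middle profile $\succ^2_I$ admits exactly these two; whichever of the two the mechanism selects at $\succ^2_I$, either $i_1$ (with true profile $\succ^1_I$) or $i_2$ (with true profile $\succ^3_I$) profitably deviates \emph{into} $\succ^2_I$. Note that in this instance the deviator does not get a strictly better school in every PE and LEF matching of the misreported profile---one of the two matchings at $\succ^2_I$ leaves her where she was---so your requirement fails exactly where the paper's argument succeeds. The impossibility comes from attacking the mechanism's tie-breaking at an ambiguous profile from two different true profiles, not from a single-profile deviation.

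You correctly identify the two failure modes (too rigid vs.\ too loose) and that threading between them is the delicate step, but that step is precisely what is left undone; as written, the proposal is a plan whose central construction is missing, and whose stated success criterion would have to be weakened to the paper's two-sided, three-profile form to go through.
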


\ifconference
\else
\begin{table}[!t]
\caption{Possible matchings for preference profiles (Theorem~\ref{thm:no-SP-LEF-PE-under-path-single-peaked})}
\label{table:no-SP-LEF-PE-path}
\centering
\begin{tabular}{lllll}
\hline preference & $\student_1$  & $\student_2$ & $\student_3$ & possible  \\
profile & & & & matchings \\
\hline \hline
$\succ^1_\studentset$  & $\school_1 \school_2 \school_3$  & $\school_1 \school_2 \school_3$ & $\school_2 \school_1 \school_3$ & [$\school_3, \school_1, \school_2$]\\ 
\hline
$\succ^2_\studentset$  & $\school_2 \school_1 \school_3$  & $\school_1 \school_2 \school_3$ & $\school_2 \school_1 \school_3$ & [$\school_3, \school_1, \school_2$]\\ 
&&&&[$\school_2, \school_3, \school_1$] \\
\hline
$\succ^3_\studentset$  & $\school_2 \school_1 \school_3$  & $\school_2 \school_1 \school_3$ & $\school_2 \school_1 \school_3$ & [$\school_2, \school_3, \school_1$]\\  \hline
\end{tabular}
\end{table}
\ifconference
\setcounter{theorem}{\getrefnumber{thm:no-SP-LEF-PE-under-path-single-peaked}}
\addtocounter{theorem}{-1}
\begin{theorem}
No mechanism can be simultaneously SP, PE, and LEF with respect to path graphs even if the schools' preferences are single-peaked on the graphs.
\end{theorem}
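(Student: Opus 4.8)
The plan is to exhibit one fixed matching market together with three student preference profiles that are pairwise ``adjacent'' (each obtained from a neighbor by changing the report of a single student), and to read off a strategyproofness violation from \cref{table:no-SP-LEF-PE-path}. Concretely, take $\studentset=\{\student_1,\student_2,\student_3\}$, $\schoolset=\{\school_1,\school_2,\school_3\}$, $X=\studentset\times\schoolset$, all quotas equal to one, and let the acquaintance graph be the path $\student_1-\student_2-\student_3$. Fix the school preferences $\school_1:\student_3\succ\student_2\succ\student_1$, $\school_2:\student_1\succ\student_2\succ\student_3$, and $\school_3$ any single-peaked order (e.g.\ $\student_1\succ\student_2\succ\student_3$); the choice for $\school_3$ will be immaterial because $\school_3$ is the least preferred school of every student in all three profiles below. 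Each of these orders has connected top-$k$ sets along the path, so the schools' preferences are single-peaked on the path, and the instance is a legitimate input for any mechanism claimed to be SP, PE, and LEF on path graphs with single-peaked school preferences.

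The computational heart of the proof is to verify the last column of \cref{table:no-SP-LEF-PE-path}. Because every school is acceptable to every student and PE implies nonwastefulness, every PE matching here is a perfect matching (a bijection $\studentset\to\schoolset$), so it suffices to examine the $3!=6$ bijections for each of the three student profiles $\succ^1_\studentset,\succ^2_\studentset,\succ^3_\studentset$. For each bijection I would (i) check Pareto efficiency, which amounts to checking that neither a transposition of two students nor the $3$-cycle of all three students makes everyone weakly and someone strictly better off, and (ii) check local envy-freeness, i.e.\ the absence of justified envy along the edges $\{\student_1,\student_2\}$ and $\{\student_2,\student_3\}$. The outcome is: under $\succ^1_\studentset$ the unique Pareto efficient and LEF matching is $Y^{a}:=\{(\student_1,\school_3),(\student_2,\school_1),(\student_3,\school_2)\}$; under $\succ^3_\studentset$ the unique one is $Y^{b}:=\{(\student_1,\school_2),(\student_2,\school_3),(\student_3,\school_1)\}$; and under $\succ^2_\studentset$ the Pareto efficient and LEF matchings are exactly $Y^{a}$ and $Y^{b}$, matching the three rows of the table.

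Now let $f$ be any mechanism that is SP, PE, and LEF on path graphs with single-peaked school preferences. Applying $f$ to the three profiles above forces $f(\succ^1_\studentset)=Y^{a}$, $f(\succ^3_\studentset)=Y^{b}$, and $f(\succ^2_\studentset)\in\{Y^{a},Y^{b}\}$. The profiles $\succ^1_\studentset$ and $\succ^2_\studentset$ differ only in $\student_1$'s report, and $Y^{a}$ assigns $\student_1$ her worst school $\school_3$ whereas $Y^{b}$ assigns her $\school_2$; the profiles $\succ^2_\studentset$ and $\succ^3_\studentset$ differ only in $\student_2$'s report, and $Y^{b}$ assigns $\student_2$ her worst school $\school_3$ whereas $Y^{a}$ assigns her $\school_1$. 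If $f(\succ^2_\studentset)=Y^{a}$, then at the true profile $\succ^3_\studentset$ student $\student_2$ profits by reporting her $\succ^2_\studentset$-preference instead, moving from $\school_3$ to the strictly better $\school_1$; if $f(\succ^2_\studentset)=Y^{b}$, then at the true profile $\succ^1_\studentset$ student $\student_1$ profits by reporting her $\succ^2_\studentset$-preference, moving from $\school_3$ to the strictly better $\school_2$. Either way $f$ is not strategyproof, a contradiction, so no such mechanism exists.

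The main obstacle is purely the brute-force verification in the middle step: one must correctly classify all $18$ matchings, taking particular care that under $\succ^2_\studentset$ there are exactly the two Pareto efficient and LEF matchings $Y^{a}$ and $Y^{b}$ (and no third one), since the whole argument rests on $f$ being forced into one of these two; establishing Pareto efficiency (ruling out the $3$-cycle rearrangement, not only transpositions) is the most error-prone part. Everything else---single-peakedness of the chosen school preferences, the reduction of PE matchings to bijections, and the manipulation step itself---is routine, and, exactly as for \cref{thm:no-LEF-PE-under-path}, the impossibility carries over to any graph that contains the three-vertex path as an induced subgraph.
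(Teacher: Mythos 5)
Your proposal is correct and follows essentially the same route as the paper's own proof: the identical three-student path instance with school preferences $\school_1:\student_3\succ\student_2\succ\student_1$, $\school_2,\school_3:\student_1\succ\student_2\succ\student_3$, the same three adjacent profiles with the same sets of PE-and-LEF matchings ($[\school_3,\school_1,\school_2]$, both, $[\school_2,\school_3,\school_1]$), and the same two-case manipulation argument by $\student_1$ or $\student_2$. The only additions are cosmetic (noting that $\school_3$'s ranking is immaterial and spelling out the brute-force verification), so no further comparison is needed.
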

\begin{proof}
\else
\begin{proof}
\fi
Consider a matching market with three 
students $\studentset = \{\student_1, \student_2, \student_3\}$ and three 
schools $\schoolset = \{\school_1, \school_2, \school_3\}$.
Let $X = \studentset \times \schoolset$ and $q_{\school_j} = 1$ for all $j \in [3]$.
The schools' preference profile $\succ_\schoolset$ is as follows:
\[
\begin{array}{ccc}
\school_1: \student_3 \succ_{\school_1} \student_2 \succ_{\school_1} \student_1\\
\school_2: \student_1 \succ_{\school_2} \student_2 \succ_{\school_2} \student_3\\
\school_3: \student_1 \succ_{\school_3} \student_2 \succ_{\school_3} \student_3\\
\end{array}
\]

To make the description concise, we denote a preference of students by a sequence of schools. 
For example, we denote $\school_2 \succ_\student \school_3 \succ_\student \school_1$ as $\school_2 \school_3 \school_1$.
Furthermore, we denote a matching as a sequence of schools assigned to $\student_1$, $\student_2$ and $\student_3$.
For example, we denote matching $\{(\student_1,\school_3), (\student_2,\school_1), (\student_3,\school_2)\}$ as $[\school_3, \school_1, \school_2]$.

Assume, for the sake of contradiction, that there exists a SP, PE, and LEF mechanism. 
We examine three students' preference profiles: $\succ^1_\studentset, \succ^2_\studentset$, and $\succ^3_\studentset$. 
These preference profiles and possible matchings that satisfy PE and LEF are summarized in Table~\ref{table:no-SP-LEF-PE-path}.
First, for $\succ_\studentset^1 = (\school_1 \school_2 \school_3, \school_1 \school_2 \school_3, \school_2 \school_1 \school_3)$, by enumerating all the PE matchings, we can see that the only PE and LEF matching is $[\school_3, \school_1, \school_2]$.
Hence, the mechanism must choose $[\school_3, \school_1, \school_2]$.

Next, for $\succ_\studentset^2 = (\school_2 \school_1 \school_3, \school_1 \school_2 \school_3, \school_2 \school_1 \school_3)$, by enumerating all the PE matchings, we can see that the only PE and  LEF matchings are $[\school_3, \school_1, \school_2]$ and $[\school_2, \school_3, \school_1]$.
Then, the mechanism must choose either $[\school_3, \school_1, \school_2]$ or $[\school_2, \school_3, \school_1]$.

Finally, for $\succ_\studentset^3 = (\school_2 \school_1 \school_3, \school_2 \school_1 \school_3, \school_2 \school_1 \school_3)$, by enumerating all the PE matchings, we can see that the only PE and LEF matching is $[\school_2, \school_3, \school_1]$.
Hence, the mechanism must choose $[\school_2, \school_3, \school_1]$.

Assume that the mechanism chooses $[\school_3, \school_1, \school_2]$ for $\succ_\studentset^2$.
Then $\student_2$ has an incentive to manipulate (to modify profile $\succ_\studentset^3$ to $\succ_\studentset^2$) so that she is assigned to $\school_1$. 
Assume otherwise that the mechanism choose $[\school_2, \school_3, \school_1]$ for $\succ_\studentset^2$.
Then $\student_1$ has an incentive to manipulate (to modify profile $\succ_\studentset^1$ to $\succ_\studentset^2$) so that she is assigned to $\school_2$.
This fact violates our assumption that the mechanism is SP. 
\end{proof}
\fi

Finally, we show the following impossibility result.
The definition of a single-peaked preference on general graphs will be provided in \cref{subsec:treewidthk-LEF-k-1}, and at this moment it is enough to know that for a cycle graph of length three any preference is single-peaked on the graph.

\begin{theorem}
\label{thm:no-LEF-PE-under-cycle-single-peaked}
No mechanism can be simultaneously PE and LEF with respect to cycle graphs even if the schools' preferences are single-peaked on the graphs.
\end{theorem}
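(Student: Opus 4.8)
The plan is to exploit the fact that the cycle on three vertices $C_3$ is precisely the complete graph $K_3$, so that local envy-freeness with respect to $C_3$ coincides with ordinary fairness among three students, and then to reuse the hard instance constructed in the proof of \cref{thm:no-LEF-PE-under-path}.

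First I would record two elementary observations about $G = C_3$ on $\studentset = \{\student_1,\student_2,\student_3\}$. Since every nonempty subset of a three-vertex set induces a connected subgraph of $K_3$, every preference of a school over $\studentset$ is single-peaked on $G$; hence the single-peakedness hypothesis in the statement imposes no restriction at all here (this is exactly the remark stated just before the theorem). Moreover, because $G$ is complete, any two distinct students are adjacent, so in every matching $Y$ a student $\student$ has local envy toward $\student'$ with respect to $G$ if and only if $\student$ has justified envy toward $\student'$; equivalently, $Y$ is LEF with respect to $G$ precisely when $Y$ is fair.

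Next I would take the three-student, three-school instance from the proof of \cref{thm:no-LEF-PE-under-path} (unit quotas, with the student and school preferences displayed there) and simply replace its path acquaintance graph by $G = C_3$; the preferences are unchanged and, by the first observation, automatically single-peaked on $G$. That proof already enumerates all Pareto efficient matchings $Y_1,\dots,Y_4$ of the instance and exhibits, for each $Y_t$, a student who has justified envy toward another student. By the second observation each of these justified envies is in fact a local envy with respect to $C_3$, so none of $Y_1,\dots,Y_4$ is LEF; since the $Y_t$ exhaust the PE matchings, no PE matching of this instance is LEF, establishing the theorem.

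There is essentially no difficult step: the only point needing care is that the (later-defined) notion of single-peakedness on general graphs indeed makes every preference single-peaked on $C_3$, and the excerpt already grants this. Writing a self-contained instance from scratch instead would only require re-deriving the list of PE matchings, which is routine; recycling \cref{thm:no-LEF-PE-under-path} sidesteps even that.
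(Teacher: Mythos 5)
Your proposal is correct and rests on the same key observation as the paper's proof: the cycle on three students is the complete graph, so LEF with respect to it coincides with fairness and the single-peakedness hypothesis is vacuous. The only difference is how the PE/fairness incompatibility is then witnessed—the paper simply cites the classical three-student, three-school impossibility of \citet{roth1982economics} and \citet{AS03}, whereas you recycle the explicit instance and PE-matching enumeration from \cref{thm:no-LEF-PE-under-path}, which is equally valid (Pareto efficiency does not depend on the acquaintance graph, and every justified envy exhibited there becomes a local envy on the complete graph) and has the minor benefit of being self-contained within the paper.
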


\ifconference
\else
\ifconference
\setcounter{theorem}{\getrefnumber{thm:no-LEF-PE-under-cycle-single-peaked}}
\addtocounter{theorem}{-1}
\begin{theorem}
No mechanism can be simultaneously PE and LEF with respect to cycle graphs even if the schools' preferences are single-peaked on the graphs.
\end{theorem}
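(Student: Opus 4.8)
The plan is to realize the impossibility with a single small market. I take $\studentset=\{\student_1,\student_2,\student_3\}$, $\schoolset=\{\school_1,\school_2,\school_3\}$, all quotas equal to $1$, full mutual acceptability, and let $G$ be the cycle of length three on $\studentset$, i.e., the triangle. Three immediate observations do the conceptual work. First, in the triangle every two students are adjacent, so local envy-freeness with respect to $G$ is \emph{exactly} justified-envy-freeness. Second, every school's preference is single-peaked on the triangle — this is the fact recalled immediately before the statement, so the single-peaked hypothesis costs nothing. Third, since Pareto efficiency implies nonwastefulness and all students and schools are acceptable, any PE matching must be a perfect matching (an unmatched student would claim a necessarily empty seat), so only the six permutation matchings need to be inspected. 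Thus the theorem reduces to exhibiting a $3\times 3$ market in which no perfect matching is simultaneously Pareto efficient and justified-envy-free — essentially the classical incompatibility of~\cite{roth1982economics,AS03}, now read with the triangle as the acquaintance graph.

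Concretely I would use
\begin{align*}
&\student_1:\ \school_2\succ\school_1\succ\school_3, \quad \student_2:\ \school_1\succ\school_2\succ\school_3, \quad \student_3:\ \school_1\succ\school_2\succ\school_3,\\
&\school_1:\ \student_1\succ\student_3\succ\student_2, \quad \school_2:\ \student_2\succ\student_1\succ\student_3, \quad \school_3:\ \student_1\succ\student_2\succ\student_3,
\end{align*}
noting that $\succ_{\school_3}$ is irrelevant since $\school_3$ is everyone's last choice, so no one ever envies the student assigned to $\school_3$. I then run through the six permutation matchings. The matchings $\{(\student_1,\school_1),(\student_2,\school_2),(\student_3,\school_3)\}$ and $\{(\student_1,\school_1),(\student_2,\school_3),(\student_3,\school_2)\}$ are Pareto-dominated: swapping $\student_1$ with whoever holds $\school_2$ makes both of them strictly better off and leaves the third student unchanged, so these two are not PE. Each of the remaining four permutation matchings is Pareto efficient — any Pareto improvement would have to keep every student already at her top school there, which forces the rest up to at most one swap, and that swap strictly demotes someone — yet in each of them the student assigned her worst school $\school_3$ is ranked by $\school_1$ (in three of the cases) or by $\school_2$ (in the fourth) above that school's current occupant, which is a justified and hence local envy. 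Combining this with the reduction to perfect matchings shows the market admits no matching that is both PE and LEF, proving the theorem.

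The only genuine obstacle is the bookkeeping in the second step: one must confirm that the four matchings claimed to be Pareto efficient really admit no Pareto improvement — not merely no obvious pairwise swap — and that a local envy is present in every one of them. I would therefore carry out the six-case verification explicitly, organizing it by which student is parked at $\school_3$, rather than invoking symmetry (the instance is deliberately asymmetric). Everything else — the coincidence of LEF and fairness on the triangle, single-peakedness on the triangle, and the restriction to perfect matchings — is immediate from the definitions, and the preferences above were chosen precisely so that in every Pareto-efficient matching the student left at $\school_3$ has a valid complaint against the occupant of $\school_1$ or of $\school_2$.
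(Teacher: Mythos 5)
Your proposal is correct and follows essentially the same route as the paper: on the triangle every profile is single-peaked and local envy-freeness coincides with justified-envy-freeness, so the theorem reduces to the classical incompatibility of Pareto efficiency and fairness in a $3\times 3$ market. The only difference is that you construct and verify an explicit instance (and your case analysis checks out), whereas the paper simply cites the known counterexample of Roth (1982) and Abdulkadiro\u{g}lu--S\"onmez (2003).
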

\begin{proof}
\else
\begin{proof}
\fi
Since any preference profile is single-peaked on a cycle graph of length three and LEF coincides with fairness in the (complete) graph, the statement follows from the impossibility result that PE and fairness are incompatible on a matching market with three students and three schools (see the example in Section 6 of~\cite{roth1982economics} and Example 1 in~\cite{AS03}).
\end{proof}
\fi
\section{New mechanisms}
\label{sec:new-mechanisms}
In this section, we propose several mechanisms achieving Pareto efficiency (PE) and some levels of local envy-freeness (LEF) by restricting the student acquaintance graph and schools' preferences.

\ifconference
\else
We first show that if the graph is a tree and schools' preferences are single-peaked on the tree, then there exists a mechanism that achieves PE, LEF, and mutually-best (MB) in Subsection~\ref{subsec:PE-LEF-trees}.

We then extend the above result to graphs that are ``tree-like'' in two ways.

First, we consider a graph with an underlying tree to which edges are added such that the degree (that is, the number of neighboring vertices) of each vertex is at most $k$ with an assumption that the schools' preferences are single-peaked on the underlying tree.
For this setting, there might be no PE and LEF matching, but we show that there exists a mechanism that finds a matching that satisfies PE, locally EF-($k-1$), and locally ERF-($k-1$) in Subsection~\ref{subsec:PE-LEF-trees}.

Second, we consider graphs with treewidth $k$, where $k$ is a positive integer.
Treewidth is a graph-theoretic parameter that quantifies how similar a given graph is to a tree (see \cref{def:treewidth} in \cref{subsec:treewidthk-LEF-k-1} for a formal definition).
In fact, a graph is a tree if and only if it has treewidth one and connected.
We extend the single-peaked preferences on trees to those on tree-width $k$ graphs and propose a mechanism that satisfies PE, locally ERF-($k-1$), and MB when the graph has treewidth $k$ and the schools' preferences are single-peaked on the graph (\cref{thm:LEF-k-1-PE-under-k-degenerate-single-peaked}) in Subsection~\ref{subsec:treewidthk-LEF-k-1}.

Finally, we show that if the graph has treewidth $k$, then there exists a mechanism that achieves SP, PE, and local ERF-$k$ or local EF-$k$ in Subsection~\ref{subsec:SP-PE-LEF-k}.
As can be seen from the impossibility theorem presented in the previous section, our mechanisms achieve the best possible fairness guarantees in terms of the number of students by whom each student is envied.
See also \cref{tbl:results} in \cref{sec:intro} for a summary of the results.
\fi 

\subsection{LEF and PE mechanism for tree graphs under single-peaked preferences}
\label{subsec:PE-LEF-trees}

Here, we assume that the student acquaintance graph $G$ is a tree and schools' preferences are single-peaked on $G$, and 
propose a mechanism called a \emph{best to locally top 2 (B-LT2)} mechanism, which satisfies PE, LEF, and MB.

\begin{algorithm}[t]
\begin{algorithmic}[1]
\REQUIRE a matching market $\instance=(\studentset, \schoolset, X, \succ_\studentset, \succ_\schoolset, q, G)$
\ENSURE a matching $Y$
\caption{Best to locally top 2 (B-LT2) mechanism}
\label{alg:top-2}
\STATE $Y \leftarrow \emptyset$.
\STATE For each MB-pair $(\student,\school)$, assign $\student$ to $\school$ (i.e., $Y \leftarrow Y \cup \{ (\student,\school \}$).
\STATE 
Choose an arbitrary unassigned student $\student$.
If there are no schools with remaining seats (i.e., schools $\school$ such that $|Y_{\school}| < q_{\school}$) that are acceptable to $\student$, $\student$ is not assigned to any school and is regarded as assigned to an emptyset.
Otherwise, $\student$ chooses the most preferable school $\school$ among the acceptable schools with remaining seats, and 
if $\school$ prefers $\student$ to all the unassigned neighbor students of $\student$, then assign $\student$ to $\school$.

Repeat this step until no assignment is made.
\STATE 
Each unassigned student $\student$ chooses the most preferable school $\school$ among the acceptable schools with a remaining seat.
$\school$ must prefer the unique unassigned student $\student'$ in $N(\student)$ to $\student$, and we say $\student'$ attacks $\student$.
Find an arbitrary pair of unassigned students attacking each other and 
assign them to their (different) most preferable schools.
\STATE If all students are assigned, then let $Y$ be the final matching and terminate the mechanism.
Otherwise, go to 3.
\end{algorithmic}
\end{algorithm}

B-LT2 mechanism is described in \cref{alg:top-2}.
First, it matches the MB-pairs (Step 2).
Second, it repeatedly assigns a student to her most preferable school (among the schools with a remaining seat) if she is most preferable among her neighboring students for the school until no such student remains (Step 3).
Third, each student $\student'$ is said to attack another student $\student$ if $\student'$ is a neighbor of $\student$ and $\student$'s most preferable school prefers $\student'$ to $\student$, and the mechanism chooses and assigns a pair of students who attack each other to their most preferable schools (Step 4).

Note that the selection of a student in Steps 3 and 4 can be determined according to the implementation policy, which may involve random selection, a master-list based on school preferences, or other methods.
Regardless of the method chosen, the mechanism is theoretically guaranteed to satisfy PE and LEF.

We first show an auxiliary lemma.

\begin{lemma}
\label{lem:single-peaked-locally-top-2}
Let $\succ_\school$ be a preference over the set $\studentset$ of students that is single-peaked on a tree $G$.
Then each student $\student$ is ranked within the top two in the set consisting of the student and her neighbors.
Namely, either (a) $\student \succ_\school \student'$ for every $\student' \in N(\student)$ or (b) $\student' \succ_\school \student$ for some $\student' \in N(\student)$ and $\student \succ_\school \student''$ for every $\student'' \in N(\student)\setminus \{\student'\}$.
\end{lemma}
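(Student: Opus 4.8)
The plan is to argue by contradiction, using the defining property of single-peakedness on a tree: for every $k$, the set of the top $k$ students under $\succ_\school$ induces a connected subgraph of $G$. Suppose that $\student$ is \emph{not} within the top two in $\{\student\}\cup N(\student)$. Then there exist two distinct neighbors $\student',\student''\in N(\student)$ with $\student'\succ_\school\student$ and $\student''\succ_\school\student$. Let $T=\{\student'''\in\studentset \mid \student'''\succ_\school\student\}$ be the set of all students strictly preferred to $\student$. Then $T$ is exactly the set of the top $|T|$ students under $\succ_\school$, with $|T|\ge 2$, and we have $\student',\student''\in T$ while $\student\notin T$.

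The key step is to exploit that $G$ is a tree. Since $\student$ is adjacent to both $\student'$ and $\student''$, the walk $\student',\student,\student''$ is a simple path in $G$, and by uniqueness of paths in a tree it is \emph{the} path from $\student'$ to $\student''$ in $G$. Hence no path of $G$ joining $\student'$ and $\student''$ avoids $\student$; in particular the induced subgraph $G[T]$ contains no path from $\student'$ to $\student''$, so $G[T]$ is disconnected. This contradicts the single-peakedness of $\succ_\school$ on $G$ applied with $k=|T|$. Therefore $\student$ lies within the top two of $\{\student\}\cup N(\student)$.

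To finish, I split into the two stated cases. If $\student\succ_\school\student'$ for every $\student'\in N(\student)$, we are in case (a). Otherwise some neighbor $\student'\in N(\student)$ satisfies $\student'\succ_\school\student$; since $\student$ is in the top two of $\{\student\}\cup N(\student)$, such $\student'$ is unique, so $\student\succ_\school\student''$ for every $\student''\in N(\student)\setminus\{\student'\}$, which is case (b). The argument is short overall; the only delicate point is the connectivity claim, which genuinely relies on $G$ being acyclic---in a graph with cycles a vertex can be ``flanked'' by two higher-ranked neighbors linked by an alternative path, so the analogue of the lemma fails, and this is precisely why the mechanism (and its later extensions) must restrict attention to trees, and then to graphs close to trees.
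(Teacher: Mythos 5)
Your proof is correct and follows essentially the same route as the paper's: assume two neighbors of $\student$ are both preferred to $\student$, take a top-$k$ prefix of $\succ_\school$ containing both neighbors but not $\student$ (the paper truncates at the lower-ranked neighbor, you take everyone strictly above $\student$ — an immaterial difference), and observe that since the unique tree path between the two neighbors passes through $\student$, this prefix is disconnected in $G$, contradicting single-peakedness. The final case split into (a)/(b) is handled the same way, so nothing further is needed.
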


\begin{proof}
For the sake of contradiction, assume that $\student_1 \succ_\school \student_2 \succ_\school \student$ for $\student_1, \student_2 \in N(\student)$.
Then, the set of students that is more preferable or equal to $\student_2$ is not connected in $G$, which contradicts that $\succ_\school$ is single-peaked on $G$.
\end{proof}

Now, we show that B-LT2 satisfies several desirable properties.

\begin{theorem}
\label{thm:LEF-PE-under-tree-single-peaked}
B-LT2 satisfies PE, LEF, and MB, 
assuming that the student acquaintance graph $G$ is a tree and schools' preferences are single-peaked on the tree.
\end{theorem}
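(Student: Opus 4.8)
The plan is to check MB directly, then extract LEF and PE from the monotone, greedy structure of B-LT2 combined with \cref{lem:single-peaked-locally-top-2}. MB is immediate: there is at most one MB-pair per student and per school, so the assignments made in Step~2 form a feasible matching, and since the mechanism never deletes a contract, every MB-pair is still matched in $Y$. For the other two properties I would first record three preliminary observations. (a) The mechanism is monotone: contracts are only added, so once all seats of a school are taken they stay taken. (b) Whenever a student is assigned a school in Step~3 or Step~4, that school is her most preferred acceptable school among those with a free seat at the moment of assignment. (c) Step~4 is well defined and the loop halts: when Step~3 can no longer assign anyone, each still-unassigned student $\student$ has an acceptable school with a free seat, and her top such school $\school$ prefers some unassigned neighbour of $\student$ to $\student$; by \cref{lem:single-peaked-locally-top-2} exactly one neighbour of $\student$ is ranked above $\student$ by $\school$, so the attacker of $\student$ is unique. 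Mapping each unassigned student to its attacker gives an out-degree-one digraph whose edges lie in the tree $G$; a finite functional digraph contains a cycle, and since its edges lie in a tree the cycle has length two, i.e., a mutually attacking pair exists, and the two top available schools of such a pair are distinct (a common school would both prefer and not prefer one of them to the other). Hence each execution of Step~4 assigns two more students and the mechanism terminates with a feasible matching.

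For LEF I would argue by contradiction. Suppose $\student$ locally envies $\student'$ in $Y$, so $\{\student,\student'\}\in E$, $(\student',\school)\in Y$, $\school\succ_\student Y_\student$, and $\student\succ_\school\student'$. If $\student$ was matched in Step~2, then $Y_\student$ is $\student$'s top school, contradicting $\school\succ_\student Y_\student$. Otherwise $\student$ was matched in Step~3 or~4, so by (b) and $\school\succ_\student Y_\student$ the school $\school$ had no free seat when $\student$ was assigned, hence by (a) it is full from then on, so $\student'$ was assigned to $\school$ no later than $\student$ was assigned. I would then split on how $\student'$ obtained $\school$: if in Step~2, then $\student'$ is $\school$'s top student, so $\student'\succ_\school\student$, a contradiction; if in Step~3, then $\school$ preferred $\student'$ to every neighbour of $\student'$ unassigned at that time, and $\student$ (still unassigned then) is such a neighbour, so $\student'\succ_\school\student$, a contradiction; if in Step~4, then $\school$ is $\student'$'s top available school there and $\school$ prefers the attacker of $\student'$ to $\student'$, but by \cref{lem:single-peaked-locally-top-2} $\school$ ranks at most one neighbour of $\student'$ above $\student'$, and since both that attacker and $\student$ qualify, the attacker is $\student$ — forcing $\student$ to be assigned in exactly that execution of Step~4 as $\student'$'s partner, whence $Y_\student$ is $\student$'s own top available school there, which (being distinct from $\school$, as partners receive distinct schools) $\student$ does not rank below $\school$, contradicting $\school\succ_\student Y_\student$. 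In every case we reach a contradiction, so $Y$ is LEF.

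For PE, suppose a feasible $Y'$ Pareto-dominates $Y$. List the students as $\student^{(1)},\dots,\student^{(n)}$ in the order B-LT2 assigns them, placing the Step~2 assignments first and the two members of each Step~4 pair consecutively. I would prove by strong induction that $Y'_{\student^{(j)}}=Y_{\student^{(j)}}$ for all $j$. For a Step~2 student this holds because $Y_{\student^{(j)}}$ is her top school. For $\student=\student^{(j)}$ assigned in Step~3 or~4, every school that $\student$ strictly prefers to $Y_\student$ and finds acceptable had all its seats occupied at the moment $\student$ was assigned; those occupants were assigned before $\student$ (Step~3 assigns students one at a time, and in a Step~4 pair the partner's school is, by the top-available choice, not one $\student$ prefers to $Y_\student$), so by the induction hypothesis they sit at the same schools in $Y'$, making all schools $\student$ prefers to $Y_\student$ full in $Y'$ without $\student$. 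Since $Y'_\student\succeq_\student Y_\student$ and $Y'_\student$ is acceptable (or $\emptyset$, only possible if $Y_\student=\emptyset$), this forces $Y'_\student=Y_\student$; the case $Y_\student=\emptyset$ is identical, with all of $\student$'s acceptable schools full in $Y'$. Hence $Y'=Y$, contradicting that $Y'$ strictly helps someone, so $Y$ is PE.

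The main obstacle is the Step~4 branch of the LEF argument: one must show that a neighbour $\student'$ already holding the envied school $\school$ could only have acquired $\school$ with $\student$ itself as its attacker, which is precisely where single-peakedness enters through \cref{lem:single-peaked-locally-top-2} (at most one neighbour of $\student'$ outranks $\student'$ at $\school$), and where one must also invoke that a Step~4 pair receives two distinct schools, each the top one still available to its recipient. A secondary but essential point is observation (c) — ruling out the loop stalling with unassigned students — which relies on the tree structure of $G$ forcing the attack digraph to contain a length-two cycle.
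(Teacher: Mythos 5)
Your proposal is correct and follows essentially the same route as the paper's proof: termination via the attack digraph whose edges lie in the tree and hence must contain a two-cycle, MB from Step~2, Pareto efficiency from the SD-like greedy property (you spell out the standard induction where the paper simply invokes that the run coincides with a serial dictatorship), and LEF via a case analysis on the step in which the envied student obtained her school, using \cref{lem:single-peaked-locally-top-2} exactly where the paper does. The only cosmetic difference is that you argue LEF by contradiction from the envier's side (and should add one line for an envier left unmatched in Step~3, where the same "no acceptable school with a free seat'' reasoning applies), whereas the paper argues directly that no student receives envy from a neighbor.
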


\ifconference
\begin{proof}
We provide only a proof sketch. See the appendix for details.

First, we show that B-LT2 stops in finite steps.
It suffices to show that in Step 4 of B-LT2 some students are assigned to some schools.
After Step 3, each unassigned student is in the second place in the set consisting of the student and her neighbors for the school she prefers most by \cref{lem:single-peaked-locally-top-2}.
Hence, each student attacks some other neighboring student.
Thus, there exists a cycle $\student_{j_1}, \student_{j_2}, \dots, \student_{j_\ell}, \student_{j_1}$ in $G$ such that $\student_{j_k}$ attacks $\student_{j_{k+1}}$ for $k \in \{ 1,\dots, \ell -1 \}$ and $\student_{j_\ell}$ attacks $\student_{j_1}$.
Since $G$ is a tree, the length $\ell$ of this cycle must be two.
This means that $\student_{j_1}$ and $\student_{j_2}$ attack each other, and these students are assigned to some schools in Step 4.
Therefore, eventually all students are assigned to some schools and B-LT2 stops in finite steps.

The matching $\matching$ output by B-LT2 clearly satisfies MB, and  is PE, since the output of B-LT2 coincides with the output obtained by an SD.

Finally, we show that $Y$ is LEF.
We show this by showing that each student cannot receive justified envies from the neighbors.
If a student is assigned to a school in step 2 of B-LT2, then clearly she receives no justified envies from the neighbors.
If a student $\student$ is assigned to a school $\school$ in step 3, then $\student$ also receives no justified envies from the neighbors since each of the neighbors of $\student$ is either (a) already assigned to a school more preferable to $\school$ or (b) less preferable to $\student$ for $\school$.
Finally, 
if a student $\student$ is assigned to a school $\school$ in step 4, then 
each of the neighbors of $\student$ is either (a) already assigned to a school more preferable to $\school$, (b) less preferable to $\student$ for $\school$, or (c) simultaneously assigned to a school more preferable to $\school$.
Hence, $\student$ does not receive justified envies from the neighbors.
It follows that $Y$ is LEF.
\end{proof}
\else
\ifconference
\setcounter{theorem}{\getrefnumber{thm:LEF-PE-under-tree-single-peaked}}
\addtocounter{theorem}{-1}
\begin{theorem}
\if0
The B-LT2 mechanism satisfies PE, LEF, and MB.
\fi 
B-LT2 satisfies PE, LEF, and MB, 
assuming that the student acquaintance graph $G$ is a tree and schools' preferences are single peaked on the tree.
\end{theorem}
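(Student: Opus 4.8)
The plan is to establish that B-LT2 halts and then verify \textrm{PE}, \textrm{LEF}, and \textrm{MB} one at a time. I expect the termination argument to be the main obstacle, because it is the only place where the tree structure and single-peakedness are genuinely exploited (beyond \cref{lem:single-peaked-locally-top-2}), and it is what guarantees that Step~4 always makes progress.

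For termination, the delicate point is that Step~4, when reached with unassigned students remaining, can always perform an assignment. Suppose Step~3 can no longer assign anyone, and let $U$ be the set of currently unassigned students. For $\student \in U$ let $\school$ be $\student$'s most preferred acceptable school with a remaining seat; since Step~3 failed to assign $\student$, some student in $U \cap N(\student)$ is preferred by $\school$ to $\student$, and by \cref{lem:single-peaked-locally-top-2} this student is \emph{unique} --- it is the student attacking $\student$. This defines a map $g : U \to U$ with $g(\student) \in N(\student)$; its functional graph contains a directed cycle, which translates to a cycle in $G$ all of whose consecutive pairs are edges of $G$. Since $g$ has no fixed point the cycle has length at least two, and since $G$ is a tree it cannot have length three or more; hence it has length exactly two, i.e., there are two students attacking each other, which is precisely what Step~4 assigns. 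Their two most preferred remaining schools are distinct (otherwise that common school would prefer each of the two students to the other, a contradiction), so the simultaneous assignment is feasible. As each execution of Step~4 assigns two students and each iteration of Step~3 at most one, at most $n$ assignments occur and the mechanism halts.

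Given termination, \textrm{MB} is immediate: Step~2 matches every MB-pair and no contract is ever removed from $Y$. For \textrm{PE}, the key observation is that every student, at the moment she is assigned (in Step~2, 3, or 4), receives her most preferred acceptable school among those that still have a remaining seat --- or $\emptyset$ exactly when no such school exists --- the Step~2 case holding because distinct MB-pairs involve distinct schools. Hence, ordering the students by assignment time (breaking Step~4 ties arbitrarily, which is consistent since the two paired students take distinct schools) yields a master-list $L$, and a routine induction --- using that remaining capacities only decrease over time --- shows that SD run on $L$ reproduces $Y$. Since SD always outputs a Pareto efficient matching, so does B-LT2.

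For \textrm{LEF} it suffices to show that no student $\student$ receives justified envy from a neighbor; assume $Y_\student = \school$ (if $\student$ is unmatched, nobody can envy her) and fix $\student' \in N(\student)$. If $\student$ was assigned in Step~2, then $\school$ prefers $\student$ to everyone, so $\student \succ_\school \student'$. If $\student$ was assigned to $\school$ in Step~3, then either $\student'$ was still unassigned at that time, so the Step~3 condition gives $\student \succ_\school \student'$, or $\student'$ had already been assigned, in which case $\school$ had a remaining seat at $\student'$'s (earlier) assignment time and $\student'$ therefore chose a school she weakly prefers to $\school$; either way $\student'$ cannot justifiedly envy $\student$. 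If $\student$ was assigned to $\school$ in Step~4 as part of a mutually-attacking pair $(\student,\student'')$, then by the uniqueness in \cref{lem:single-peaked-locally-top-2} every neighbor $\student' \neq \student''$ satisfies $\student \succ_\school \student'$, while $\student''$ received her most preferred available school at that step with $\school$ then available, so $\student''$ weakly prefers her own assignment to $\school$. In every case $\student'$ has no justified envy toward $\student$, so $Y$ is \textrm{LEF}. (The subcase in which the potential envier is an $\emptyset$-assigned student reduces to the same capacity-monotonicity and Step-3/Step-4 arguments and is handled identically.)
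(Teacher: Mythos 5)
Your proof is correct and follows essentially the same route as the paper's: termination by combining \cref{lem:single-peaked-locally-top-2} with the tree structure to force a mutually-attacking pair in Step~4, MB from Step~2, PE via the equivalence of B-LT2's output with a serial-dictatorship run, and LEF by a case analysis on the step in which the (potentially envied) student was assigned. You additionally spell out a few details the paper leaves implicit (distinctness of the two schools in Step~4, and the capacity-monotonicity argument for neighbors assigned earlier), but the underlying argument is the same.
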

\begin{proof}
\else
\begin{proof}
\fi
First, we show that B-LT2 stops in finite steps.
It suffices to show that in Step 4 of B-LT2 some students are assigned to some schools.
After Step 3, each remaining (i.e., unmatched) student is in the second place among the set consisting of the student and her neighbors for the school she prefers most by \cref{lem:single-peaked-locally-top-2}.
Hence, each student attacks some other neighboring student.
Thus, there exists a cycle $\student_{j_1}, \student_{j_2}, \dots, \student_{j_\ell}, \student_{j_1}$ in $G$ such that $\student_{j_k}$ attacks $\student_{j_{k+1}}$ for $k \in \{ 1,\dots, \ell -1 \}$ and $\student_{j_\ell}$ attacks $\student_{j_1}$.
Since $G$ is a tree, the length $\ell$ of this cycle must be two.
This means that $\student_{j_1}$ and $\student_{j_2}$ attack each other, and these students are assigned to some schools in Step 4.
Therefore, eventually all students are assigned to some schools and B-LT2 stops in finite steps.

Second, B-LT2 clearly outputs a feasible matching, since it always assigns a student to a school with a remaining seat.

Third, B-LT2 clearly satisfies MB, since it matches MB-pairs in step 2.

Fourth, we show that the matching $\matching$ output by B-LT2 is PE.
This is because, 
each student is assigned to the most preferable school among the remaining schools when they are assigned (or no acceptable schools have remaining seats). 
Thus, the output of B-LT2 coincides with the output obtained by SD. 
Since the output of SD is PE, the output of B-LT2 is also PE.

Finally, we show that $Y$ is LEF.
We show this by showing that each student cannot receive justified envies from the neighbors.
If a student is assigned to a school in step 2 of B-LT2, then clearly she receives no justified envies from the neighbors.
If a student $\student$ is assigned to an emptyset in step 3, then she clearly receives no justified envy.
If a student $\student$ is assigned to a school $\school$ in step 3, then $\student$ also receives no justified envies from the neighbors since each of the neighbors of $\student$ is either (a) already assigned to a school more preferable to $\school$ or (b) less preferable to $\student$ for $\school$.
Finally, 
if a student $\student$ is assigned to a school $\school$ in step 4, then 
each of the neighbors of $\student$ is either (a) already assigned to a school more preferable to $\school$, (b) less preferable to $\student$ for $\school$, or (c) simultaneously assigned to a school more preferable to $\school$.
Hence, $\student$ does not receive justified envies from the neighbors.
It follows that $Y$ is LEF.
\end{proof}

\fi

We note that B-LT2 does not satisfy SP.
This follows from the impossibility theorem (\cref{thm:no-SP-LEF-PE-under-path-single-peaked}), 
since a path graph is a tree graph.

We can generalize \cref{thm:LEF-PE-under-tree-single-peaked} as follows.

\begin{theorem}\label{thm:tree+several-edges}
Consider a graph with an underlying tree to which edges are added such that the degree of each vertex is at most $k$, and assume that schools' preferences are single-peaked on the underlying tree.
For this setting, there exists a mechanism that finds a matching that satisfies PE, local EF-($k-1$), and local ERF-($k-1$).
\end{theorem}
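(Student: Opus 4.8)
The plan is to reduce to \cref{thm:LEF-PE-under-tree-single-peaked} by simply running the B-LT2 mechanism on the \emph{underlying tree} $T$, ignoring the extra edges of $G$. That is, given the market $\instance=(\studentset,\schoolset,X,\succ_\studentset,\succ_\schoolset,q,G)$ where $G$ is obtained from a tree $T=(\studentset,E_T)$ by adding edges so that every vertex has degree at most $k$ in $G$, the proposed mechanism outputs the matching $Y$ produced by B-LT2 (\cref{alg:top-2}) on the market $(\studentset,\schoolset,X,\succ_\studentset,\succ_\schoolset,q,T)$. Since $T$ is a tree and the schools' preferences are single-peaked on $T$, \cref{thm:LEF-PE-under-tree-single-peaked} guarantees that $Y$ is PE, LEF with respect to $T$, and MB. Pareto efficiency of a matching does not depend on the acquaintance graph at all, so $Y$ is PE for $\instance$ as well.

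Next I would bound local envy in $G$. Write $N_T(\student)$ for the set of neighbors of $\student$ in $T$; then $N_T(\student)\subseteq N(\student)$ and, assuming $n\ge 2$ (the cases $n\le 1$ being trivial), every vertex of a tree has at least one neighbor, so $|N_T(\student)|\ge 1$ and hence $|N(\student)\setminus N_T(\student)|\le k-1$. The key point is that, because $Y$ is LEF with respect to $T$, no student has justified envy toward a $T$-neighbor and no student receives justified envy from a $T$-neighbor. Therefore, for any student $\student$, if $\student'\in loc{\text -}Ev(Y,\student)$ then $\student'\in N(\student)$ and $\student$ has justified envy toward $\student'$, so $\student'$ cannot lie in $N_T(\student)$, i.e.\ $\student'\in N(\student)\setminus N_T(\student)$; hence $|loc{\text -}Ev(Y,\student)|\le k-1$ and $Y$ is local EF-$(k-1)$. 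The symmetric argument, using that $Y$ receives no justified envy along edges of $T$, gives $loc{\text -}Evr(Y,\student)\subseteq N(\student)\setminus N_T(\student)$ and hence $|loc{\text -}Evr(Y,\student)|\le k-1$, so $Y$ is local ERF-$(k-1)$.

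Since this is essentially a reduction, I do not expect a genuinely hard step; the points that need care are (i) that the mechanism must be handed (or must fix) the tree $T$ with respect to which single-peakedness is promised, and (ii) the degree bookkeeping $|N(\student)\setminus N_T(\student)|\le k-1$, which relies on every vertex of $T$ having a tree-neighbor. It would also be worth noting tightness: for a leaf $\student$ of $T$ whose $G$-degree equals $k$, one can arrange that all $k-1$ of its non-tree neighbors simultaneously envy $\student$ (or are envied by $\student$), so this mechanism cannot in general achieve local EF-$(k-2)$ or local ERF-$(k-2)$.
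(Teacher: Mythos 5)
Your proposal is correct and follows essentially the same route as the paper's own proof: run B-LT2 on the underlying tree, invoke \cref{thm:LEF-PE-under-tree-single-peaked} for PE and LEF with respect to the tree, and then observe that each student has at most $k-1$ non-tree neighbors, so all local envy (given or received) in $G$ is confined to those, yielding local EF-($k-1$) and local ERF-($k-1$). Your extra care about fixing the tree $T$, the $n\ge 2$ degree bookkeeping, and the tightness remark are fine additions but not a different argument.
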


\ifconference
\else
\ifconference
\setcounter{theorem}{\getrefnumber{thm:tree+several-edges}}
\addtocounter{theorem}{-1}
\begin{theorem}
Consider a graph with an underlying tree to which edges are added such that the degree of each vertex is at most $k$, and assume that schools' preferences are single-peaked on the underlying tree.
For this setting, there exists a mechanism that finds a matching that satisfies PE, locally EF-($k-1$), and locally EDF-($k-1$).
\end{theorem}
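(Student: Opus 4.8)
The plan is to reduce to the pure tree case (\cref{thm:LEF-PE-under-tree-single-peaked}) by simply running B-LT2 on the underlying tree and then re-examining the resulting matching with respect to the larger graph. Concretely, let the input be $\instance = (\studentset, \schoolset, X, \succ_\studentset, \succ_\schoolset, q, G)$, let $T$ be the underlying tree (so $T$ is a spanning tree on vertex set $\studentset$, hence connected), and let $\instance_T$ be the market obtained from $\instance$ by replacing $G$ with $T$. Since the schools' preferences are single-peaked on $T$, \cref{thm:LEF-PE-under-tree-single-peaked} applies to $\instance_T$: executing B-LT2 on $\instance_T$ produces a matching $Y$ that is PE, LEF with respect to $T$, and MB. Our mechanism outputs this $Y$.

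First I would observe that $Y$ is PE in $\instance$: Pareto efficiency of a matching does not reference the acquaintance graph at all, so it is inherited verbatim. Next comes the degree bookkeeping, which is the only quantitative point. Assuming $|\studentset| \ge 2$ (the case $|\studentset| \le 1$ being trivial), connectedness of $T$ gives that every student $\student$ has at least one neighbor in $T$. Since the degree of $\student$ in $G$ is at most $k$, it follows that $\student$ has at most $k-1$ neighbors in $G$ that are not neighbors in $T$; write $N_T(\student)$ for the neighborhood of $\student$ in $T$ and $R(\student) := N(\student) \setminus N_T(\student)$, so that $|R(\student)| \le k-1$.

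Finally I would bound the local envy. Fix any student $\student$. Because $Y$ is LEF with respect to $T$, $\student$ has no justified envy toward any student in $N_T(\student)$, and no student in $N_T(\student)$ has justified envy toward $\student$. Hence $loc{\text -}Ev(Y,\student)$ and $loc{\text -}Evr(Y,\student)$, computed with respect to $G$, are both contained in $R(\student)$, so each has size at most $k-1$. Therefore $Y$ is locally EF-$(k-1)$ and locally ERF-$(k-1)$ with respect to $G$, which completes the argument. I do not expect a genuine obstacle here: the proof is essentially a reduction, and the one place that requires care is conceptual rather than technical — B-LT2 must be run on the tree $T$ (where single-peakedness and the termination/LEF analysis of \cref{thm:LEF-PE-under-tree-single-peaked} are valid), and only afterward is the output matching re-analyzed against the denser graph $G$; the bound then falls out of the fact that a spanning tree leaves each vertex at most $k-1$ ``extra'' incident edges. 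As a sanity check, when $k=1$ the graph $G$ coincides with $T$, $R(\student)=\emptyset$, and the statement degenerates to \cref{thm:LEF-PE-under-tree-single-peaked}.
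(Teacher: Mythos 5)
Your proposal is correct and matches the paper's proof essentially verbatim: run B-LT2 on the underlying tree, note PE is graph-independent, and bound the local envy via the at most $k-1$ edges of $G$ incident to each student that are not tree edges. Your extra remark that connectedness of the spanning tree gives each student at least one tree neighbor is just a slightly more explicit justification of the same counting step the paper uses.
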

\begin{proof}
\else
\begin{proof}
\fi
It suffices to run B-LT2 to the underlying tree and output the obtained matching.
This matching is clearly PE.
Moreover, it is LEF with respect to the underlying tree and each student has at most $k-1$ neighbors other than those connected by this tree, implying that it is locally EF-($k-1$) and locally EDF-($k-1$).
\end{proof}
\fi

Graphs that satisfy the assumptions in the above theorem appear as a road network with several bypasses to the road network on a tree.

Finally, we compare matchings obtained by the B-LT2 and the deferred acceptance (DA) mechanisms.

\begin{theorem}\label{thm:comparison-BLT2-and-DA}
There exists a matching market with a student acquaintance graph in which any of the possible output matchings of the B-LT2 mechanism does not Pareto dominate the student-optimal stable matching (that is, the output of the DA mechanism).
\end{theorem}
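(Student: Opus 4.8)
The plan is to construct an explicit small matching market witnessing the claim.  Since the B-LT2 mechanism is only guaranteed to behave well when the acquaintance graph is a tree with single-peaked school preferences, I will look for the smallest such instance where the student-optimal stable (DA) matching is itself Pareto inefficient — because then every Pareto efficient matching, and in particular every possible B-LT2 output (which is always PE), must fail to Pareto dominate it: you cannot Pareto dominate a matching that is already Pareto efficient unless you equal it, and if DA's output is not PE it is in particular not equal to any PE matching, so we would even get strict non-domination.  The classic source of such examples is Kesten's observation that DA can be badly inefficient; the task is to realise one of these on a path (hence a tree) with school preferences single-peaked on that path.

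First I would fix three students $\student_1,\student_2,\student_3$ on a path $\student_1 - \student_2 - \student_3$ and three unit-quota schools, and choose school preferences that are single-peaked on the path — e.g. each $\succ_{\school_j}$ has its peak at one of the students and decreases along the path, which is automatic for any preference whose top element is an endpoint or whose ranking is ``middle, then one side, then the other''.  I would then tailor the students' preferences so that the DA (student-proposing) outcome is the well-known inefficient matching in which a Pareto-improving trade between two students is blocked only by priorities, while SD-type mechanisms (and hence B-LT2, whose output coincides with some SD run as shown in the proof of \cref{thm:LEF-PE-under-tree-single-peaked}) produce a different, Pareto efficient matching.  Concretely I expect to reuse a variant of the instance already appearing in \cref{thm:no-rural-hospitals} or \cref{thm:no-LEF-PE-under-path}, adjusting preferences so that DA returns a matching $Y^{DA}$ that is Pareto dominated by another matching $Y'$.

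The verification then has three routine parts: (i) check the chosen school preferences are single-peaked on the path; (ii) run the student-proposing deferred acceptance algorithm by hand and record $Y^{DA}$, then exhibit a matching $Y'$ that Pareto dominates it, thereby certifying $Y^{DA}$ is not PE; (iii) observe that since every matching B-LT2 can output is PE (by \cref{thm:LEF-PE-under-tree-single-peaked}) and is therefore distinct from the non-PE matching $Y^{DA}$, it cannot Pareto dominate $Y^{DA}$ — indeed, if a PE matching Pareto dominated $Y^{DA}$, transitivity of Pareto domination with the dominator $Y'$ of $Y^{DA}$ would not immediately contradict anything, so the cleaner argument is simply: a matching that Pareto dominates $Y^{DA}$ is Pareto-better, but the set of B-LT2 outputs consists only of PE matchings, and if some PE B-LT2 output $Z$ dominated $Y^{DA}$ then $Z$ itself could be dominated only if it weren't PE — wait, that's consistent; the real point is that $Y^{DA}$ being dominated by $Y'$ does not forbid a B-LT2 output from also dominating it.

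So the genuine crux — and the main obstacle — is to make the DA output not merely inefficient in the abstract but to pin down that no PE matching Pareto dominates it, which would be false in general.  Hence I must instead arrange that the B-LT2 mechanism is \emph{forced} (over all its implementation choices) to output specific PE matchings none of which happens to Pareto dominate $Y^{DA}$; the easiest way is to design the instance so that $Y^{DA}$ assigns some student her top school while the Pareto improvement available to B-LT2 involves \emph{moving that student to a worse school} in exchange for improving others — then any B-LT2 output either equals $Y^{DA}$ (excluded, since B-LT2's SD-like run will differ from DA on this instance, which I verify directly) or makes that student strictly worse off, so it does not Pareto dominate $Y^{DA}$.  I would enumerate the (few) possible B-LT2 runs — they correspond to the orders in which unassigned students are processed in Steps 3–4 — confirm each yields a PE matching in which at least one student is strictly worse than under $Y^{DA}$, and conclude.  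The whole proof is a finite case check on a three-student, three-school path instance; the only real creativity is in choosing the students' preferences so that DA's myopic stability-driven assignment and B-LT2's greedy serial assignment diverge in incomparable directions.
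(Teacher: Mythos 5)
Your final plan has the same overall shape as the paper's proof (exhibit one explicit path instance with school preferences single-peaked on the path, run DA, enumerate the possible B-LT2 executions, and check that none of their outputs Pareto dominates the DA matching), but the proposal stops exactly where the proof begins: the witness instance is never constructed. For an existence statement proved by construction, the instance together with the two finite computations \emph{is} the proof; asserting that ``a three-student, three-school path instance'' with the desired divergence exists, with ``the only real creativity'' being the choice of preferences, leaves the essential step undone. The suggested starting points are also shaky: the instances of \cref{thm:no-rural-hospitals} and \cref{thm:no-LEF-PE-under-path} are \emph{not} single-peaked on the path (e.g.\ $s_2\colon i_1 \succ i_3 \succ i_2$ ranks the middle student third in her closed neighborhood), and without the locally-top-two property of \cref{lem:single-peaked-locally-top-2}, Step~4 of B-LT2 is not even well defined, so ``adjusting'' those instances is precisely the nontrivial work you have not carried out. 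The paper instead gives a five-student path instance in which B-LT2's output is forced, the DA outcome is itself Pareto efficient, and the two matchings are Pareto-incomparable; once the DA outcome is verified to be PE, non-domination is automatic, so the whole verification reduces to two mechanical runs.

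Your original guiding idea---start from a Kesten-style example where DA is inefficient---is not just logically insufficient (as you noticed); it tends to fail concretely, because B-LT2 is Pareto efficient and typically \emph{finds} the improvement, hence \emph{does} dominate DA. For instance, with $i_1\colon s_2\,s_1\,s_3$, $i_2\colon s_1\,s_2\,s_3$, $i_3\colon s_1\,s_2\,s_3$ and $s_1\colon i_1\,i_3\,i_2$, $s_2\colon i_2\,i_1\,i_3$, which is single-peaked on the path $i_2-i_1-i_3$, DA outputs $\{(i_1,s_1),(i_2,s_2),(i_3,s_3)\}$ while Step~4 of B-LT2 uniquely matches $i_1$ to $s_2$ and $i_2$ to $s_1$, so B-LT2 Pareto dominates DA---the opposite of what you need. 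So your pivoted criterion (every B-LT2 run must leave some student strictly worse than under DA) is the right kind of condition, and it can indeed be met on a small path (even with three students, by making the two leaf students compete for one school after the center student is matched to her mutually-best school), but until you write down such an instance, check single-peakedness, run DA, and enumerate all B-LT2 executions as in the proof of \cref{thm:LEF-PE-under-tree-single-peaked}, the argument is a plan rather than a proof.
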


\ifconference
\else
\ifconference
\setcounter{theorem}{\getrefnumber{thm:comparison-BLT2-and-DA}}
\addtocounter{theorem}{-1}
\begin{theorem}
There exists a matching market with a student acquaintance graph in which any of the possible output matchings of the B-LT2 mechanism does not Pareto dominate the student-optimal stable matching (that is, the output of the DA mechanism).
\end{theorem}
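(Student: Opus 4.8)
The plan is to exhibit one small counterexample and verify it directly. Take three students $i_1, i_2, i_3$ and three schools $s_1, s_2, s_3$, all quotas equal to $1$, acquaintance graph the path $i_1 - i_3 - i_2$ (so $i_1$ and $i_2$ are the only non-adjacent pair), and preferences $i_1: s_1 \succ s_2 \succ s_3$, $i_2: s_1 \succ s_3 \succ s_2$, $i_3: s_2 \succ s_3 \succ s_1$ for the students and $s_1: i_3 \succ i_2 \succ i_1$, $s_2: i_3 \succ i_1 \succ i_2$, $s_3: i_3 \succ i_1 \succ i_2$ for the schools. The construction is driven by four points: $(i_3, s_2)$ is the unique mutually-best pair, so B-LT2 fixes it first; $i_1$ and $i_2$ both top-rank $s_1$, which ranks $i_2$ above $i_1$, so DA awards $s_1$ to $i_2$ and pushes $i_1$ down to $s_3$; since $i_1$ and $i_2$ are non-adjacent, B-LT2's local-fairness filter in Step~3 never stops $i_1$ from grabbing $s_1$ even though $s_1$ prefers $i_2$; and $i_2$'s second choice is $s_3$ while $i_1$'s second and third choices are $s_2$ and $s_3$, so the branch of B-LT2 in which $i_1$ takes $s_1$ helps $i_1$ but strictly hurts $i_2$ compared with DA.

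First I would verify single-peakedness on this tree: every top-$k$ set is a singleton, one of the two edges, or the full set, and each school's top-two set is an edge, so all three school preferences are single-peaked and B-LT2 applies. Next I would run DA: $i_1$ and $i_2$ both apply to $s_1$, $s_1$ keeps $i_2$ and rejects $i_1$, who is then rejected by $s_2$ (held by $i_3$ from round one) and ends at $s_3$, giving $\matching^{\mathrm{DA}} = \{(i_1,s_3),(i_2,s_1),(i_3,s_2)\}$, which is stable and in fact Pareto efficient ($i_2$ and $i_3$ sit at their first choices, so $i_1$ cannot be improved without evicting one of them). The Pareto efficiency of $\matching^{\mathrm{DA}}$ already proves the statement for this instance, but I would still carry out the B-LT2 computation to show that B-LT2 genuinely diverges from DA here.

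Then I would enumerate B-LT2's runs: Step~2 assigns the MB-pair $(i_3,s_2)$; in Step~3 only $i_1$ and $i_2$ remain, each with its sole neighbor $i_3$ already matched, so whichever is selected first is assigned to its best remaining school $s_1$ and the other is then assigned to $s_3$, and Step~4 is never reached. Thus the possible outputs are exactly $Y_A = \{(i_1,s_1),(i_2,s_3),(i_3,s_2)\}$ and $Y_B = \{(i_1,s_3),(i_2,s_1),(i_3,s_2)\}$. Finally the comparison: $Y_B$ coincides with $\matching^{\mathrm{DA}}$, hence does not Pareto-dominate it; and $Y_A$ moves $i_1$ up from $s_3$ to $s_1$ but moves $i_2$ down from $s_1$ to $s_3$, so $i_2$ is strictly worse off under $Y_A$ and $Y_A$ does not Pareto-dominate $\matching^{\mathrm{DA}}$. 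Since these are all the possible outputs of B-LT2, none of them Pareto-dominates the DA output, which is the claim.

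The one genuine obstacle is satisfying the competing requirements simultaneously: the school preferences must stay single-peaked on a tree (on three vertices this is a real constraint --- it forces the central student $i_3$ never to be ranked last --- and it is exactly this constraint that blocks the standard Kesten-style DA-inefficiency examples, in which the efficiency-blocking envy would have to be made local for B-LT2 to avoid it); B-LT2 must diverge from DA (hence $i_1$ and $i_2$ non-adjacent); and that divergence must fail to be a Pareto improvement (hence $i_2$'s fallback being strictly better for her than $i_1$'s fallback is for him). One must also enumerate every possible B-LT2 output rather than a single one, since Steps~3 and~4 leave tie-breaking free; here there are just two, both handled above.
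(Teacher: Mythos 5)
Your proof is correct and takes essentially the same approach as the paper: exhibit a concrete market where the DA output is itself Pareto efficient (so nothing can dominate it) and enumerate the B-LT2 runs to confirm the divergence. Your $3\times 3$ instance is smaller than the paper's $5\times 5$ example, and all the verifications (single-peakedness on the path, the unique MB-pair, the DA run, and both B-LT2 branches) check out.
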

\begin{proof}
\else
\begin{proof}
\fi
Let $\studentset = \{\student_1, \student_2, \student_3, \student_4, \student_5\}$, $\schoolset = \{\school_1, \school_2, \school_3, \school_4, \school_5\}$, $X = \studentset \times \schoolset$, and $q_{\school_j} = 1$ for all $j \in [5]$.
Assume that the student acquaintance graph is a path graph with edge set $\{ \{\student_1,\student_2\},\{\student_2,\student_3\},\{\student_3,\student_4\},\{\student_4,\student_5\} \}$.

Assume that $\succ_\student$ and $\succ_\school$ are given as follows.
\begin{align*}
\student_1: \school_2 \succ_{\student_1} \school_3 \succ_{\student_1} \school_4 \succ_{\student_1} \school_1 \succ_{\student_1} \school_5\\
\student_2: \school_1 \succ_{\student_2} \school_2 \succ_{\student_2} \school_3 \succ_{\student_2} \school_4 \succ_{\student_2} \school_5\\
\student_3: \school_4 \succ_{\student_3} \school_3 \succ_{\student_3} \school_2 \succ_{\student_3} \school_1 \succ_{\student_3} \school_5\\
\student_4: \school_2 \succ_{\student_4} \school_3 \succ_{\student_4} \school_4 \succ_{\student_4} \school_1 \succ_{\student_4} \school_5\\
\student_5: \school_3 \succ_{\student_5} \school_5 \succ_{\student_5} \school_1 \succ_{\student_5} \school_2 \succ_{\student_5} \school_4\\
\school_1: \student_2 \succ_{\school_1} \student_1 \succ_{\school_1} \student_3 \succ_{\school_1} \student_4 \succ_{\school_1} \student_5\\
\school_2: \student_5 \succ_{\school_2} \student_4 \succ_{\school_2} \student_3 \succ_{\school_2} \student_2 \succ_{\school_2} \student_1\\
\school_3: \student_1 \succ_{\school_3} \student_2 \succ_{\school_3} \student_3 \succ_{\school_3} \student_4 \succ_{\school_3} \student_5\\
\school_4: \student_4 \succ_{\school_4} \student_3 \succ_{\school_4} \student_2 \succ_{\school_4} \student_1 \succ_{\school_4} \student_5\\
\school_5: \student_1 \succ_{\school_5} \student_2 \succ_{\school_5} \student_3 \succ_{\school_5} \student_4 \succ_{\school_5} \student_5
\end{align*}
Let $Y_1$ denote the matching obtained by the DA mechanism:
\begin{align*}
Y_1 = \{(\student_1, \school_3), (\student_2, \school_1), (\student_3, \school_4), (\student_4, \school_2), (\student_5, \school_5)\}.
\end{align*}
\if0 
Let $Y_2$ and $Y_3$ be the matchings that can be obtained by the B-LT2 mechanism:
\begin{align*}
Y_2 = \{(\student_1, \school_2), (\student_2, \school_1), (\student_3, \school_4), (\student_4, \school_3), (\student_5, \school_5)\}\\
Y_3 = \{(\student_1, \school_1), (\student_2, \school_1), (\student_3, \school_4), (\student_4, \school_2), (\student_5, \school_3)\}.
\end{align*}
It can be confirmed that $Y_1$, $Y_2$, and $Y_3$ are all PE matchings. 
\fi 
Let $Y_2$ be the only matching that can be obtained by the B-LT2 mechanism:
\begin{align*}
Y_2 = \{(\student_1, \school_2), (\student_2, \school_1), (\student_3, \school_4), (\student_4, \school_3), (\student_5, \school_5)\}.
\end{align*}
It can be confirmed that $Y_1$ and $Y_2$ are both PE matchings. 
This implies that although the output of B-LT2 is PE, it does not necessarily Pareto dominate the matching produced by the DA mechanism.
\end{proof}

\fi

\subsection{
Locally ERF-($k-1$) and PE mechanism for treewidth-$k$ graphs under single-peaked preferences
}
\label{subsec:treewidthk-LEF-k-1}

In this subsection we extend the result for trees in the previous subsection to graphs that are ``close to'' trees, i.e., graphs with treewidth $k$ for some positive integer $k$.
Treewidth is a concept frequently used in the field of algorithms to generalize algorithms for trees. For textbooks on treewidth, for example, \cite{CFK+15} is available.

We first provide a definition of the treewidth of a graph.

\begin{definition}[treewidth]\label{def:treewidth}
A tree decomposition of a graph $G = (\studentset,E)$ is a tree $T$ with nodes $B_1,\dots, B_\ell$ (sometimes called \emph{bags}), where each $B_j$ is a subset of $\studentset$, satisfying the following properties:
\begin{itemize}
\item $\studentset = \bigcup_{1 \le j \le \ell} B_j$.
\item For each $\student \in \studentset$, the tree nodes containing $\student$ form a connected subtree of $T$.
\item For each edge $\{ \student,\student' \} \in E$ there exists $j \in [\ell]$ such that $\{ \student,\student' \} \subseteq B_j$.
\end{itemize}
The \emph{width} of a tree decomposition is the size of its largest bag $B_j$ minus one.
The \emph{treewidth} of $G$ is the minimum width among all the tree decompositions of $G$.
\end{definition}

A graph is tree if and only if it is connected and has treewidth one.
Any graph $G =(\studentset,E)$ has treewidth at most $n-1$, since there exists a trivial tree decomposition $T$ such that $T$ has a unique node $B_1=\studentset$.
For a graph drawn in \cref{fig:treewidth2}, its tree decomposition of width 2 is drawn in \cref{fig:treewidth3}.

\begin{figure}[t]
\centering
\includegraphics[scale = 0.35]{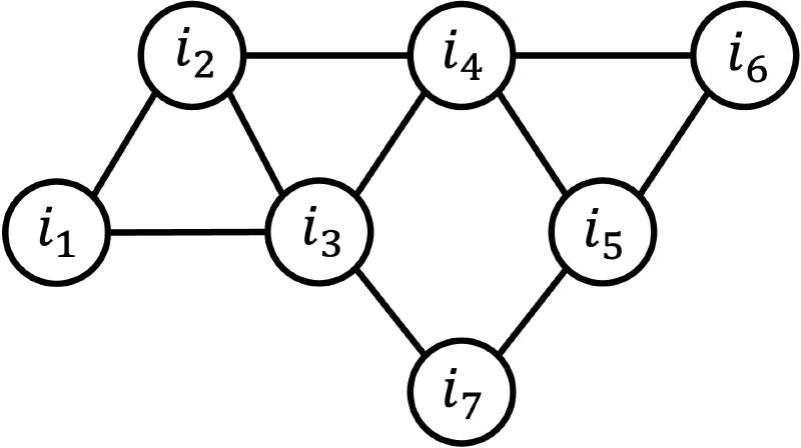}
\hspace{5cm}
\caption{A graph that has treewidth 2.}
\label{fig:treewidth2}
\hspace{10cm}
\end{figure}

\begin{figure}[t]
\centering
\includegraphics[scale = 0.2]{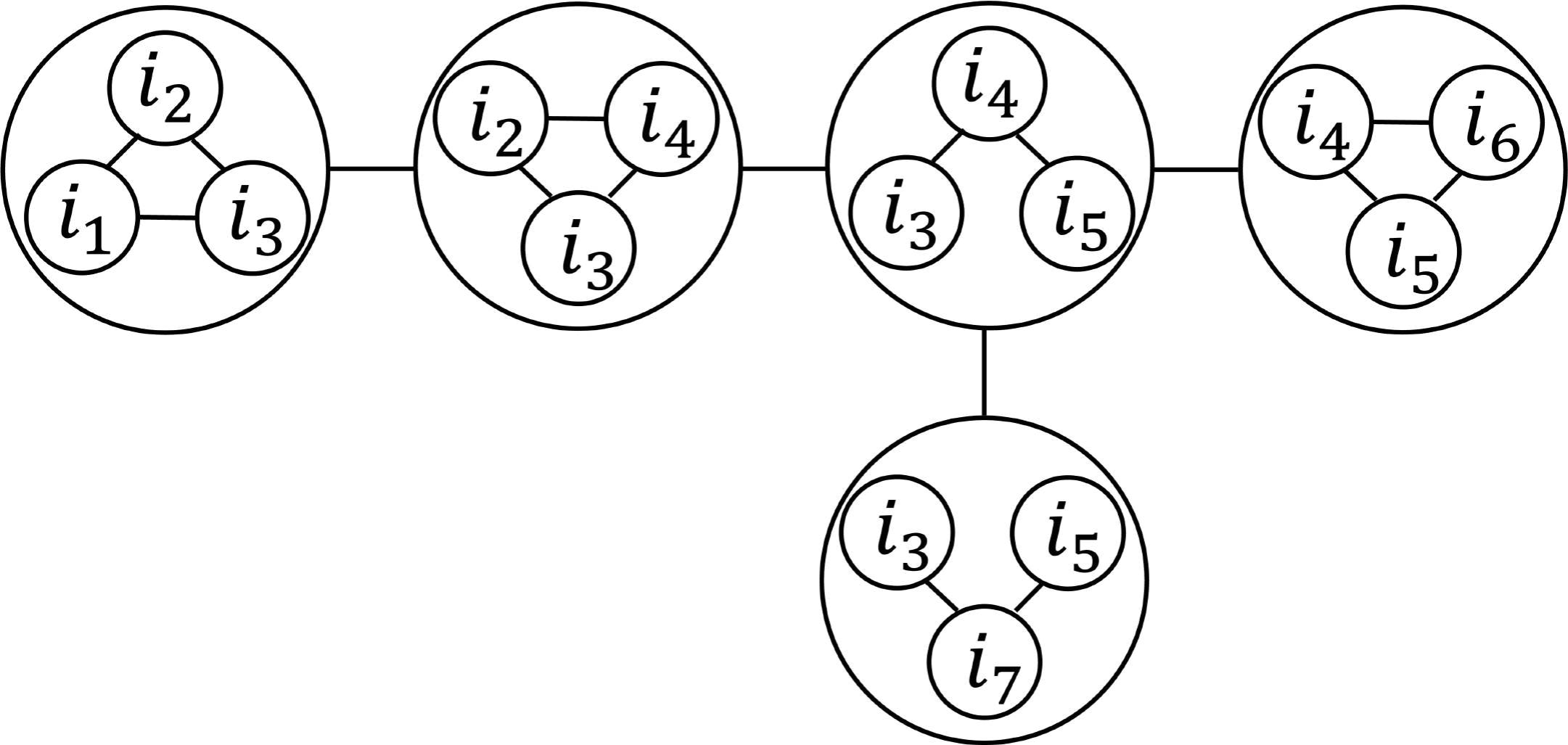}
\hspace{5cm}
\caption{A tree decomposition of the graph in \cref{fig:treewidth2} that has width 2.}
\label{fig:treewidth3}
\hspace{10cm}
\end{figure}

We will use the following well-known result in graph theory.\footnote{This fact is frequently mentioned in papers in the field of algorithms, but we were unable to find any literature that provides a proof. See also Exercise 7.14 in \cite{CFK+15}.}
\begin{lemma}
\label{lem:treewidth-k->k-degenerate}
Any graph $G$ with treewidth $k$ has a vertex ordering such that for every vertex $\student$ at most $k$ neighbors come after $\student$ in the order.
\end{lemma}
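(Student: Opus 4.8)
The plan is to prove Lemma~\ref{lem:treewidth-k->k-degenerate} by induction on the number of vertices, exploiting a standard structural fact about tree decompositions: a graph of treewidth $k$ always contains a vertex of degree at most $k$. Once we have that, the required ordering is obtained greedily by repeatedly peeling off such a low-degree vertex and placing it last among the vertices not yet removed; reversing the order of removal then gives an ordering in which every vertex has at most $k$ neighbours \emph{after} it.

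The key steps, in order, are as follows. First, I would take an optimal tree decomposition $T$ of $G$ with bags $B_1, \dots, B_\ell$, each of size at most $k+1$, and root $T$ arbitrarily. If $G$ has no edges the claim is trivial, so assume otherwise. Consider a leaf bag $B_j$ of $T$ with parent bag $B_p$. If $B_j \subseteq B_p$, then $B_j$ is redundant and can be deleted from the decomposition without violating any of the three properties in \cref{def:treewidth}; repeating this, we may assume every leaf bag contains a vertex not present in its parent. Pick such a leaf bag $B_j$ and a vertex $\student \in B_j \setminus B_p$. Because the nodes of $T$ containing $\student$ form a connected subtree and $B_j$ is a leaf whose only neighbour $B_p$ does not contain $\student$, the vertex $\student$ appears in $B_j$ and in no other bag. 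Hence every edge incident to $\student$ must, by the third property, be covered by $B_j$; so all neighbours of $\student$ lie in $B_j \setminus \{\student\}$, which has size at most $k$. This establishes that $\student$ has degree at most $k$ in $G$.

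Second, I would remove $\student$ from $G$ (and from every bag of $T$). The resulting tree $T$ is still a valid tree decomposition of $G - \student$, of width at most $k$, so the same argument applies recursively. By induction on $|\studentset|$, the graph $G - \student$ admits a vertex ordering $\student_1, \dots, \student_{n-1}$ in which each vertex has at most $k$ later neighbours. Appending $\student$ at the front yields an ordering $\student, \student_1, \dots, \student_{n-1}$ of $G$; the later neighbours of $\student$ in this ordering are exactly its neighbours in $G$, of which there are at most $k$, and for every other vertex its set of later neighbours is unchanged (removing $\student$ from $G$ can only remove an edge), so it still has at most $k$ later neighbours. This completes the induction.

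The main obstacle is the bookkeeping around degenerate tree decompositions: one must argue carefully that leaf bags can be pruned down to the case $B_j \not\subseteq B_p$ without destroying the connected-subtree and edge-cover properties, and then that a vertex private to a leaf bag truly has all its neighbours inside that bag. This uses the connectivity property in an essential way — it is precisely what prevents a vertex from "reappearing" in a distant bag and thereby having neighbours outside the leaf bag. Everything else (the induction, the front-insertion, the count of later neighbours) is routine once this structural claim is in place. An alternative, equivalent route would be to cite the standard fact that treewidth-$k$ graphs are $k$-degenerate and that $k$-degeneracy is exactly the existence of such an ordering, but since the excerpt's footnote notes the difficulty of locating a reference, giving the self-contained leaf-bag argument above is preferable.
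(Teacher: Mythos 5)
Your argument is correct, and it is genuinely more than what the paper provides: the paper does not prove \cref{lem:treewidth-k->k-degenerate} at all, but treats it as a folklore fact (the footnote explicitly says the authors could not locate a written proof and points to Exercise 7.14 of the textbook by Cygan et al.), and the concluding section simply identifies the ordering property with $k$-degeneracy. Your leaf-bag peeling argument is exactly the standard solution to that exercise: prune leaf bags contained in their parents, find a vertex private to a leaf bag, conclude it has degree at most $k$, delete it, and induct, placing the deleted vertex at the front of the ordering so that its later neighbours are precisely its (at most $k$) neighbours while no other vertex gains later neighbours. The only point worth patching is the degenerate case in which, after pruning, the decomposition consists of a single bag (so there is no leaf with a parent): there $G$ has at most $k+1$ vertices, every vertex has degree at most $k$, and you may peel any vertex, so the induction goes through unchanged; your ``no edges'' base case does not quite cover this, but it is a one-line fix rather than a gap. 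Compared with the paper's citation-only treatment (and the abandoned top-down-ordering sketch left commented out in the source), your self-contained proof buys a verifiable argument at the cost of a little bookkeeping about redundant bags and the connected-subtree property, which you handle correctly.
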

\if0 
\begin{proof}
Let $T$ be a tree decomposition of $G$ with width $k$.
Choose any node $R$ of $T$ and treat $T$ as a rooted tree with root node $R$.
Then order the vertices in each node of $T$ from the top of the rooted tree.
Namely, first, order the vertices in $R$ arbitrarily and then order the vertices in a child of $R$ after that, and so on.
We show that this vertex ordering satisfies that every vertex $\student$ has at most $k$ neighbors that comes before $\student$.
Assume that 
Indeed, it is well-known that 
\end{proof}
\fi 

Now, we define single-peaked preferences on a tree decomposition of graph $G$.

\begin{definition}[single-peaked preferences on a tree decomposition]
\label{def:single-peaked-tree-decomposition}
A preference $\succ_\school$ over the set $\studentset$ of students is called \emph{single-peaked on a tree decomposition $T$ of $G =(\studentset,E)$} if it is constructed in the following way:
\begin{enumerate}
\item Choose any node $B_j$ of $T$. 
Rank the students in $B_j$ from top.
\item Choose any node $B_{j'}$ that is adjacent to any of the previously chosen nodes.
Rank the non-ranked students in $B_{j'}$ following the students ranked so far.
\item If every student is ranked, then stop.
Otherwise, go to 2.
\end{enumerate}
\end{definition}

\begin{example}
Let us define a single-peaked preference on the tree decomposition in \cref{fig:treewidth3}.
First, we choose the node $\{ \student_3, \student_4, \student_5 \}$ and rank the students in it as $\student_4 \succ \student_3 \succ \student_5$.
Second, we choose the node $\{ \student_2, \student_3, \student_4 \}$ adjacent to the first node and rank $\student_2$ fourth.
Third, we choose the node $\{ \student_1, \student_2, \student_3 \}$ adjacent to the second node and rank $\student_1$ fifth.
Fourth, we choose the node $\{ \student_3, \student_5, \student_7 \}$ adjacent to the first node and rank $\student_7$ sixth.
Finally, we choose the node $\{ \student_4, \student_5, \student_6 \}$ adjacent to the first node and rank $\student_6$ seventh.
In this way, we obtain a single-peaked preference 
\begin{align*}
\student_4 \succ \student_3 \succ \student_5 \succ \student_2 \succ \student_1 \succ \student_7 \succ \student_6
\end{align*}
on the tree decomposition in \cref{fig:treewidth3}.
\end{example}

As seen from the above example, a single-peaked preference on a tree decomposition arises when we represent a road network by a tree-like structure via a tree decomposition, 
although the distance information is abstracted in graphs.
Namely, the first chosen node $\{ \student_3, \student_4, \student_5 \}$ consists of closest students to a school, the second node $\{ \student_2, \student_3, \student_4 \}$ consists of second closest students to the school, and so on.
It has been demonstrated by \citet{MSJ19} that treewidths of actual road networks are low.

Now, we show an auxiliary lemma that extends \cref{lem:single-peaked-locally-top-2} for tree graphs.

\begin{lemma}
\label{lem:single-peaked-locally-top-k+1}
Let $\succ_\school$ be a preference over the set $\studentset$ of students that is single-peaked on a tree decomposition $T$ of width $k$.
Then each student $\student$ is ranked within the top $k+1$ in the set consisting of the student and her neighbors.
\end{lemma}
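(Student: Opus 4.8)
The plan is to reduce the statement to a combinatorial fact about the single bag that ``introduces'' $\student$ during the construction of $\succ_\school$. Write the sequence of bags chosen in \cref{def:single-peaked-tree-decomposition} as $B_{j_1}, B_{j_2}, \dots$ (so each $B_{j_s}$ with $s\ge 2$ is adjacent in $T$ to some earlier bag, and each student is ranked at the moment its first bag is processed), and let $B := B_{j_t}$ be the first bag in this sequence that contains $\student$. Since $T$ has width $k$ we have $|B|\le k+1$, and $\student\in B$, so it suffices to prove that every neighbor $\student'$ of $\student$ with $\student'\succ_\school\student$ lies in $B$; this gives at most $k$ such neighbors and hence places $\student$ within the top $k+1$ of $\{\student\}\cup N(\student)$.

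First I would fix such a neighbor $\student'$ and let $B_{j_{t'}}$ be the first bag containing $\student'$. Since $\student'\succ_\school\student$, student $\student'$ is introduced no later than $\student$, so $t'\le t$. If $t'=t$ (which includes the case $t=1$), then $\student'$ is introduced in $B=B_{j_t}$, so $\student'\in B$ and we are done. The substantive case is $t'<t$, which forces $t\ge 2$; here I would exploit the tree structure of $T$. Pick a bag $p\in\{B_{j_1},\dots,B_{j_{t-1}}\}$ that is adjacent to $B$ in $T$ (it exists because $B$ was chosen at step $t\ge2$ as a bag adjacent to a previously chosen one). Deleting the edge $\{B,p\}$ splits $T$ into two subtrees $X\ni B$ and $Y\ni p$. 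The first $t-1$ chosen bags form a connected subgraph of $T$ that contains $p$ and avoids $B$, hence lies entirely in $Y$; in particular $B_{j_{t'}}\in Y$. On the other hand, the bags containing $\student$ form a connected subtree of $T$ (tree-decomposition axiom) that contains $B$ but not $p$ (as $p$ is among the first $t-1$ bags, none of which contains $\student$), hence lies in $X$. Choosing a bag $B^{*}$ containing both $\student$ and $\student'$ (one exists because $\{\student,\student'\}\in E$), we get $B^{*}\in X$. Therefore the unique $T$-path from $B_{j_{t'}}\in Y$ to $B^{*}\in X$ must traverse the edge $\{B,p\}$, so it passes through $B$. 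Both endpoints of this path contain $\student'$, and the bags containing $\student'$ form a connected subtree of $T$, which is convex (the $T$-path between any two of its nodes stays inside it); hence every bag on the path contains $\student'$, in particular $B$. Thus $\student'\in B$, as required.

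The main obstacle is precisely this $t'<t$ case. Unlike the tree case (\cref{lem:single-peaked-locally-top-2}), a preference single-peaked on a tree decomposition need not have its ``top-$m$'' sets connected in $G$, so one cannot argue directly on $G$; the argument instead has to run on $T$, using convexity of subtrees of $T$ together with the fact that removing one edge of a tree separates it, and it requires the observation that $B$ is the unique ``gateway'' through which the subtree of $\student$-bags first reaches the already-explored region. The remaining ingredients — well-definedness of $B$, the $t'=t$ case, and the width bound $|B|\le k+1$ — are routine.
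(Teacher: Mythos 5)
Your proposal is correct, but it follows a genuinely different route from the paper's. The paper proves the lemma by contradiction: assuming $\student_1 \succ_\school \dots \succ_\school \student_{k+1} \succ_\school \student$ with all $\student_j \in N(\student)$, it takes the first chosen bags of $\student_1$ and $\student_{k+1}$ and exhibits two ways of connecting them in $T$ --- the path supplied by the construction, whose intermediate bags avoid $\student$, and a concatenation routed through bags witnessing the edges $\{\student_1,\student\}$ and $\{\student_{k+1},\student\}$, which does meet a bag containing $\student$ --- and concludes a contradiction with the uniqueness of paths in a tree. You instead prove a stronger statement directly: every neighbor ranked above $\student$ must lie in the bag $B$ that first introduces $\student$, obtained by deleting the edge between $B$ and its already-explored neighbor $p$, placing the explored prefix of bags on one side and the (connected) $\student$-subtree on the other, and forcing the $\student'$-subtree through $B$ by convexity of subtrees; the lemma then follows solely from the width bound $|B|\le k+1$. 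What your route buys: it is direct rather than by contradiction, it makes completely explicit where the width $k$ enters (in the paper's two-path argument the bound on bag sizes is never invoked, so the role of $k$ is left implicit), and the localization ``all better-ranked neighbors of $\student$ sit in $\student$'s introducing bag'' is a reusable strengthening that specializes to the tree case of \cref{lem:single-peaked-locally-top-2}. What the paper's route buys is brevity: it needs only connectivity of the vertex-subtrees and of the set of chosen bags, without the separation bookkeeping. Your auxiliary steps (well-definedness of $B$, the implication $\student'\succ_\school\student \Rightarrow t'\le t$, and the case split $t'=t$ versus $t'<t$) are all handled correctly.
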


\ifconference
\else
\ifconference
\setcounter{theorem}{\getrefnumber{lem:single-peaked-locally-top-k+1}}
\addtocounter{theorem}{-1}
\begin{lemma}
Let $\succ_\school$ be a preference over the set $\studentset$ of students that is single-peaked on a tree decomposition $T$ of width $k$.
Then each student $\student$ is ranked within the top $k+1$ among the set consisting of the student and her neighbors.
\end{lemma}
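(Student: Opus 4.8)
The plan is to track precisely \emph{when} each student becomes ranked during the construction of $\succ_\school$ in \cref{def:single-peaked-tree-decomposition}, using the tree structure of the decomposition $T$. First I would note that, without loss of generality, the construction visits every node of $T$: if it halts early (all students already ranked), we may keep choosing unvisited nodes adjacent to visited ones, ranking no new student, which leaves $\succ_\school$ unchanged since $T$ is connected. Root $T$ at the first visited node $B_{j_1}$. The key bookkeeping fact is that the construction processes every node \emph{strictly after its parent}: the set of already-visited nodes is always a connected subtree of $T$ containing the root, so if a newly chosen node $B_{j'}$ were adjacent to an already-visited node that happened to be its child, the root-to-that-child path — and hence $B_{j'}$ itself — would already be visited, a contradiction; so the already-visited neighbour of $B_{j'}$ is its parent.

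Next, for each student $\student$ let $T_\student$ be the subtree of $T$ formed by the bags containing $\student$ (connected, by the tree-decomposition axiom), and let $r_\student$ be the node of $T_\student$ closest to the root. Every node of $T_\student$ is a descendant of $r_\student$, so by the parent-before-child property $r_\student$ is the first node of $T_\student$ to be visited; hence $\student$ is ranked exactly when $r_\student$ is processed (and, among the students first appearing in $r_\student$, according to the within-bag order chosen for $r_\student$).

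The heart of the argument is the claim that every neighbour $\student'$ of $\student$ with $\student' \succ_\school \student$ lies in the bag $r_\student$. Fix such a $\student'$. By the tree-decomposition axiom there is a bag $B$ with $\{\student,\student'\} \subseteq B$, so $B$ lies in both $T_\student$ and $T_{\student'}$; hence $r_\student$ and $r_{\student'}$ are each an ancestor of $B$ or equal to $B$, so both lie on the root-to-$B$ path and are comparable. From $\student' \succ_\school \student$ either $r_{\student'} = r_\student$, giving $\student' \in r_\student$ at once, or $r_{\student'}$ is processed strictly before $r_\student$, which by comparability forces $r_{\student'}$ to be a strict ancestor of $r_\student$; then $r_\student$ lies on the path from $r_{\student'}$ to $B$, and since $T_{\student'}$ is connected and contains both endpoints it contains $r_\student$, so again $\student' \in r_\student$. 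Given the claim, all such neighbours lie in $r_\student$, a bag of size at most $k+1$ already containing $\student$, so at most $k$ neighbours of $\student$ are preferred to $\student$ — which is the assertion.

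The main obstacle I expect is purely the disciplined handling of elementary facts about rooted trees — that a connected subtree has a unique highest node of which all other nodes are descendants, that the ancestors of a node form a path, and that the construction order respects parent-before-child — and assembling them without slips; once these are in place the combinatorial step is short. It is also worth checking that the statement specializes to \cref{lem:single-peaked-locally-top-2} when $k=1$ (width-one bags, i.e.\ the edge tree of a tree $G$), where the bound top-$(k+1)$ becomes top-$2$.
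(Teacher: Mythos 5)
Your proof is correct, and it takes a genuinely different route from the paper's. The paper argues by contradiction: assuming neighbours $\student_1 \succ_\school \dots \succ_\school \student_{k+1} \succ_\school \student$, it takes the first chosen bags containing $\student_1$ and $\student_{k+1}$, observes that the tree path between them (lying inside the connected set of already-visited bags) has no intermediate bag containing $\student$, and then builds a second walk between the same two bags through bags witnessing the edges $\{\student_1,\student\}$ and $\{\student_{k+1},\student\}$, concluding that $T$ would contain two distinct paths between the same pair of nodes. You instead give a direct argument: root $T$ at the first visited bag, establish the parent-before-child processing order, and prove the stronger structural claim that \emph{every} neighbour preferred to $\student$ already lies in $r_\student$, the first visited bag containing $\student$; the bound then falls out of $|r_\student| \le k+1$. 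What your route buys is that the width enters explicitly and in exactly one place (the size of $r_\student$), and the intermediate claim is a clean, reusable statement about where preferred neighbours must live, with all the rooted-tree bookkeeping (comparability of $r_\student$ and $r_{\student'}$, ancestors processed first) spelled out; the paper's route is shorter on the page but manipulates only the extreme two of the $k+1$ hypothesised neighbours and leaves implicit the verification that the two constructed routes are genuinely distinct paths. Both arguments ultimately rest on the same two ingredients: connectivity of the visited portion of $T$ during the construction and connectivity of the set of bags containing a fixed student.
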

\begin{proof}
\else
\begin{proof}
\fi
For the sake of contradiction, assume that $\student_1 \succ_\school \student_2 \succ_\school \dots \succ_\school \student_{k+1} \succ_\school \student$ for $\student_1, \dots, \student_{k+1} \in N(\student)$.
When constructing $\succ_\school$ in the way described in \cref{def:single-peaked-tree-decomposition}, let $B_j$ be the first chosen node of $T$ that includes $\student_1$, and 
let $B_{j'}$ be the first chosen node of $T$ that includes $\student_{k+1}$.
From the construction there exists a path from $B_j$ to $B_{j'}$ in $T$, where all the intermediate nodes in the path do not contain $\student$ since $\student$ is ranked lower than $\student_{k+1}$.
Also, since $\{\student_1,\student\},\{ \student_{k+1},\student\} \in E$, there exist nodes $B_{p}$ and $B_{r}$ of $T$ that respectively contain $\{\student_1,\student\}$ and $\{ \student_{k+1},\student\}$.
Since the nodes containing the same vertex form a connected subtree in $T$, 
there exists a path from $B_j$ to $B_p$, from $B_p$ to $B_r$, and from $B_r$ to $B_{j'}$ in $T$.
Concatenating these paths yields a path from $B_j$ to $B_{j'}$ in $T$ in which at least one intermediate node contains $\student$.
Hence, there exist two distinct paths from $B_j$ to $B_{j'}$ in $T$, contradicting that $T$ is a tree.
\end{proof}

\fi

Now, we describe our mechanism called the \emph{best to locally top $k+1$ (B-LT($k+1$))} mechanism (in \cref{alg:Top-k+1}) that extends B-LT2 proposed in the previous subsection.

\begin{algorithm}[t]
\begin{algorithmic}[1]
\REQUIRE a matching market $\instance=(\studentset, \schoolset, X, \succ_\studentset, \succ_\schoolset, q, G)$
\ENSURE a matching $Y$
\caption{Best to locally top $k+1$ (B-LT($k+1$))}
\label{alg:Top-k+1}
\STATE $Y \leftarrow \emptyset$.
\STATE For each MB-pair $(\student,\school)$, assign $\student$ to $\school$ (i.e., $Y \leftarrow Y \cup \{ (\student,\school) \}$).
\STATE Choose an arbitrary unassigned student $\student$.
If there are no schools with remaining seats that are acceptable to $\student$, $\student$ is not assigned to any school and is regarded as assigned to an emptyset.
Otherwise, $\student$ chooses the most preferable school $\school$ among the acceptable schools with a remaining seat, and if student $\student$ is ranked within the top $k$ in the set consisting of the student and her unassigned neighbors for $\school$, then assign $\student$ to $\school$.

Repeat this step until no assignment is made.
\STATE Each unassigned student $\student$ chooses the most preferable school $\school$ among the acceptable schools with a remaining seat.
$\school$ must prefer exactly $k$ unassigned students in $N(\student)$ to $\student$, and we say those students attack $\student$.
Choose an arbitrary pair of unassigned students attacking each other and assign these students to their (different) most preferable schools.
\STATE If all students are assigned, then let $Y$ be the final matching and terminate the mechanism.
Otherwise, go to 3.
\end{algorithmic}
\end{algorithm}

\ifconference
\else
B-LT($k+1$) mechanism is described in \cref{alg:Top-k+1}.
First, it  matches MB-pairs.
Second, it repeatedly assigns a student to her most preferable school (among the schools with a remaining seat) if 
she is ranked within the \emph{top $k$} among the set consisting of her and her neighbors for $\school$,
Third, each student $\student'$ is said to attack another student $\student$ if $\student'$ is a neighbor of $\student$ and $\student$'s most preferable school prefers $\student'$ to $\student$, and the mechanism assigns students who attacks each other to their most preferable schools.
\fi 

Now, we show that B-LT($k+1$) mechanism satisfies several desirable properties.

\begin{theorem}
\label{thm:LEF-k-1-PE-under-k-degenerate-single-peaked}
B-LT($k+1$) satisfies PE, local ERF-($k-1$), and MB, 
assuming that the student acquaintance graph $G$ has treewidth $k$ and schools' preferences are single-peaked on a tree 
decomposition of $G$ with width $k$.
\end{theorem}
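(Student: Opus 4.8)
The plan is to mirror the proof of \cref{thm:LEF-PE-under-tree-single-peaked}, establishing in turn that B-LT($k+1$) terminates, outputs a feasible MB matching, outputs a PE matching, and outputs a locally ERF-$(k-1)$ matching. The only genuinely new ingredient is the termination argument, since the observation from the tree case that a cycle of attacks inside a tree has length two is no longer available and must be replaced by an argument using the treewidth bound. For termination it suffices to show that whenever Step~4 is reached with some student still unassigned, a pair of mutually attacking students exists (each such execution of Step~4 then assigns two students, so the mechanism halts after finitely many rounds). Suppose Step~4 is reached. No assignment is possible in Step~3, so each still-unassigned student $\student$ has an acceptable school with a remaining seat; let $\school$ be her most preferred such school. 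Since $\student$ was not assigned in Step~3 she is not among the top $k$ in the set consisting of herself and her unassigned neighbours for $\school$, so that set has size exceeding $k$ and at least $k$ unassigned neighbours of $\student$ are preferred to $\student$ by $\school$; by \cref{lem:single-peaked-locally-top-k+1} applied to $\succ_\school$ at most $k$ neighbours of $\student$ (unassigned or not) are preferred to $\student$. Hence exactly $k$ unassigned neighbours attack $\student$, which also justifies the assertion made in Step~4 of the algorithm. Now consider the directed attack graph $D$ on the set $U$ of still-unassigned students: every vertex has in-degree exactly $k$ and every arc is an edge of $G$. If $D$ had no $2$-cycle, its underlying undirected graph $A$ would be simple with exactly $k|U|$ edges, all in $E(G)$; since treewidth is monotone under taking subgraphs, $A$ has treewidth at most $k$, and then by \cref{lem:treewidth-k->k-degenerate} (the last vertex in the guaranteed ordering has no later neighbour) $A$ has at most $k(|U|-1) < k|U|$ edges, a contradiction. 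So a mutual attack exists; I expect this to be the main obstacle.

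Feasibility is immediate since every assignment is to a school with a remaining seat, and MB holds because Step~2 matches all MB-pairs and no assignment is ever revoked. For Pareto efficiency I would reuse the serial-dictatorship argument from the tree case: let $L$ be the order in which students receive their final assignments, breaking ties within a Step~4 pair arbitrarily. By induction along $L$ one checks that running SD with master-list $L$ reproduces $Y$, using that (i) at the moment B-LT($k+1$) assigns a student she is given her most preferred acceptable school with a remaining seat (for an MB-pair this is her top school, which is free because at most one student tops any school; for a Step~4 pair the two students get distinct schools since a school cannot strictly prefer each of them to the other), and (ii) the occupied seats at that moment are exactly those occupied by the $L$-earlier students. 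Hence $Y$ is the output of an SD and is therefore PE.

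For local ERF-$(k-1)$ I fix a student $\student$ and bound the number of neighbours of $\student$ who have justified envy toward her, by cases on how $\student$ is assigned. If $\student$ is matched in Step~2 she is her school's top student and is envied by nobody, and if $\student$ is assigned to $\emptyset$ she is envied by nobody. If $\student$ is assigned to $\school$ in Step~3, then a neighbour assigned before $\student$ cannot envy her (the school $\school$ still had a free seat then, so such a neighbour holds a school she weakly prefers to $\school$, or $\school$ is unacceptable to her), while a neighbour still unassigned at that moment can envy $\student$ only if $\school$ prefers her to $\student$, and there are at most $k-1$ of those because $\student$ was within the top $k$ among herself and her unassigned neighbours. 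If $\student$ is assigned to $\school$ in Step~4, paired with some $\student'$, the same reasoning rules out every earlier-assigned neighbour, and $\student'$ holds her own most preferred available school, which is weakly better for her than $\school$, so $\student'$ does not envy $\student$; every remaining potential envier is a still-unassigned neighbour whom $\school$ prefers to $\student$, hence one of the exactly $k$ attackers of $\student$, and one of those attackers is precisely $\student'$, so at most $k-1$ of them envy $\student$. In every case $|loc{\text -}Evr(Y,\student)| \le k-1$; the one subtlety worth flagging is this last case, where $\student$ really does have $k$ attackers and the bound $k-1$ is recovered only by observing that her Step~4 partner is among them and does not envy her.
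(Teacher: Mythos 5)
Your proposal is correct, and apart from one step it follows the paper's proof line for line: exactly-$k$ attackers via \cref{lem:single-peaked-locally-top-k+1}, feasibility and MB from Steps 2--4, PE via the serial-dictatorship simulation, and the ERF-$(k-1)$ case analysis (including the key observation that the Step~4 partner $\student'$ is one of the $k$ attackers of $\student$ and does not envy her) are all the same. The one place where you genuinely diverge is the termination step, i.e.\ the existence of a mutually attacking pair. The paper takes the ordering of $G$ guaranteed by \cref{lem:treewidth-k->k-degenerate} and picks the first student $\student$ having an attacker $\student'$ earlier in the order; by minimality all $k$ attackers of $\student'$ lie after $\student'$, and since at most $k$ neighbours lie after $\student'$, the attackers are exactly those later neighbours, which include $\student$, so $\student$ and $\student'$ attack each other. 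You instead count: with no $2$-cycle the attack digraph on the unassigned set $U$ contributes $k|U|$ distinct edges of $G[U]$, while a graph of treewidth at most $k$ on $|U|$ vertices has at most $k(|U|-1)$ edges, a contradiction. Both arguments rest on the same lemma and both are sound; the paper's extremal argument is slightly more self-contained, whereas yours additionally invokes subgraph-monotonicity of treewidth (a standard fact the paper never states) --- an appeal you could avoid by simply restricting the paper's ordering of $G$ to $U$, since neighbours in $A$ are neighbours in $G$. A cosmetic point: your counting needs $k\ge 1$ and $U\neq\emptyset$, both of which hold here since $k$ is a positive integer and Step~4 is only relevant when unassigned students remain.
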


\ifconference
\else
\ifconference
\setcounter{theorem}{\getrefnumber{thm:LEF-k-1-PE-under-k-degenerate-single-peaked}}
\addtocounter{theorem}{-1}
\begin{theorem}
B-LT($k+1$) satisfies PE, local ERF-($k-1$), and MB, 
assuming that the student acquaintance graph $G$ has treewidth $k$ and schools' preferences are single peaked on a tree 
decomposition of $G$ with width $k$.
\end{theorem}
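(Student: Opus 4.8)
The plan is to follow the proof of \cref{thm:LEF-PE-under-tree-single-peaked}, checking in turn that B-LT($k+1$) halts, outputs a feasible matching, satisfies MB, is PE, and is locally ERF-($k-1$). Feasibility is immediate, since every assignment uses a school with a remaining seat, and MB holds because Step~2 matches every MB-pair (the MB-pairs form a matching, so this is well defined) and no later step ever un-assigns a student. For PE, observe that whenever a student is assigned---in Step~2, 3, or 4---she receives her most preferred acceptable school among those still having a remaining seat; moreover the two students matched simultaneously in Step~4 go to \emph{different} schools (were it the same school, its strict preference would have to rank each of them above the other), and neither would have preferred the other's school (that school was available, so otherwise it would have been her own top choice). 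Hence the whole run is equivalent to an execution of SD along the order in which students are assigned, so the output is PE.

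The main obstacle is showing that Step~4 always makes progress, i.e.\ that whenever Step~3 halts with unassigned students remaining, two of them attack each other. When Step~3 halts, every remaining student $\student$ has a well-defined most preferred acceptable school $\school(\student)$ with a remaining seat (otherwise she would already have been placed at $\emptyset$), and by \cref{lem:single-peaked-locally-top-k+1} she lies within the top $k+1$ at $\school(\student)$ in $\{\student\}\cup N(\student)$, hence also within the top $k+1$ in $\{\student\}$ together with her still-unassigned neighbors; since Step~3 could not place her, she is exactly $(k+1)$-st in the latter set, so $\school(\student)$ prefers \emph{exactly} $k$ unassigned neighbors of $\student$ to $\student$---these are her attackers. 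Let $H$ be the subgraph of $G$ induced by the currently unassigned students, and orient its edges by letting each unassigned student emit one arc to each of her $k$ attackers; every vertex of $H$ then has out-degree exactly $k$, so $H$ carries $k\,|V(H)|$ arcs. If no two unassigned students attacked each other, then no edge of $H$ would carry arcs in both directions, giving $|E(H)| \ge k\,|V(H)|$; but restricting the vertex ordering of $G$ promised by \cref{lem:treewidth-k->k-degenerate} to $V(H)$ yields an ordering in which every vertex of $H$ has at most $k$ later neighbors in $H$ while the last one has none, whence $|E(H)| \le k(|V(H)|-1) < k\,|V(H)|$, a contradiction. Thus a mutually attacking pair exists, each pass of Step~4 assigns two more students, and the mechanism terminates.

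For locally ERF-($k-1$) I would bound, for each student $\student$ with $Y_\student = \school$, the size of $loc{\text -}Evr(Y,\student)$. If $\student$ was assigned in Step~2 she is $\school$'s top student and nobody envies her; if she was placed at $\emptyset$ nobody envies her either, since justified envy toward $\student$ requires $\student$ to be matched. A neighbor $\student'$ assigned \emph{before} $\student$ never envies her: when $\student'$ was assigned, $\school$ still had a remaining seat (seats are non-increasing and $\school$ has one when $\student$ is assigned later) and $\school$ was acceptable to $\student'$, so $\student'$ received a school at least as good as $\school$. Hence the only possible enviers are neighbors assigned after $\student$ whom $\school$ prefers to $\student$. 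If $\student$ was assigned in Step~3, then at that moment $\school$ ranked her within its top $k$ among $\{\student\}$ and her then-unassigned neighbors, so at most $k-1$ of her then-unassigned neighbors are preferred to her by $\school$; as every later-assigned neighbor was then unassigned, at most $k-1$ neighbors envy her. If $\student$ was assigned in Step~4, then at that moment exactly $k$ then-unassigned neighbors are preferred to her by $\school$, while by \cref{lem:single-peaked-locally-top-k+1} at most $k$ neighbors in total are preferred to her by $\school$, so these $k$ attackers are all the neighbors that could envy her; one of them, her Step~4 partner, is simultaneously sent to a school at least as good as $\school$ (which was available) and so does not envy her, leaving at most $k-1$. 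In every case $|loc{\text -}Evr(Y,\student)| \le k-1$, so $Y$ is locally ERF-($k-1$), completing the proof. The delicate points are the arc-counting step that forces termination and the use of \cref{lem:single-peaked-locally-top-k+1} to sharpen ``at most $k$ enviers'' to ``at most $k-1$''; the rest is bookkeeping.
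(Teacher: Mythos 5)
Your proposal is correct and follows the same overall structure as the paper's proof: feasibility and MB are immediate, PE via equivalence with a serial-dictatorship execution, and locally ERF-($k-1$) by bounding, case by case (Step 2, $\emptyset$, Step 3, Step 4), the neighbors who can envy each student, using \cref{lem:single-peaked-locally-top-k+1} to turn ``exactly $k$ unassigned attackers'' into ``all possible enviers'' and discounting the Step-4 partner. The one place you genuinely diverge is the termination argument. The paper also invokes the ordering of \cref{lem:treewidth-k->k-degenerate}, but argues extremally: it takes the first student $\student$ (in that order) attacked by an earlier student $\student'$, notes that all $k$ attackers of $\student'$ must then lie after $\student'$, and since $\student'$ has at most $k$ later neighbors, $\student$ must be among them, so the two attack each other. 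Your version replaces this with a double-counting argument on the subgraph $H$ induced by the unassigned students: $k\,|V(H)|$ attack arcs versus $|E(H)|\le k(|V(H)|-1)$ edges (the ``$-1$'' from the last vertex in the restricted ordering is exactly what makes the inequality strict), forcing a mutually attacking pair. Both arguments rest on the same lemma and are equally valid; the paper's is slightly shorter, yours makes the quantitative slack explicit and is arguably easier to verify. Your write-up is also more careful than the paper in two respects: you check that the two students matched in Step 4 go to different schools and that neither prefers the other's school, which is what licenses the SD equivalence, and you split potential enviers into neighbors assigned before versus after $\student$. One small gloss: when you claim an earlier-assigned neighbor never envies $\student$, you assert that $\school$ ``was acceptable to $\student'$''; that need not hold, but in the unacceptable case (including a neighbor placed at $\emptyset$) there is no justified envy anyway, so the conclusion stands.
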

\begin{proof}
\else
\begin{proof}
\fi
First, we show that B-LT($k+1$) stops in finite steps.
It suffices to show that in Step 4 of B-LT($k+1$) some students are assigned to some schools.
After Step 3, each unassigned student is 
ranked the top $k+1$ among the set consisting of her and her neighbors for the school she most prefers by \cref{lem:single-peaked-locally-top-k+1}.
Hence, each student is attacked by $k$ other neighboring students.
From \cref{lem:treewidth-k->k-degenerate} we can order the vertices of $G$ so that every vertex $\student$ has at most $k$ neighbors that comes after $\student$.
Let $\student$ be the first vertex (student) in this order satisfying that there exists a student $\student'$ such that $\student'$ attacks $\student$ and $\student'$ is before $\student$ in this order.
Since $\student'$ is attacked by exactly $k$ students after $\student'$ in this order and there exists at most $k$ neighbor students after $\student'$, $\student'$ must be attacked by $\student$.
Hence, $\student$ and $\student'$ attack each other and the mechanism assigns some students to some schools in Step 4.
Therefore, eventually all students are assigned to some schools and B-LT($k+1$) stops in finite steps.

Second, B-LT($k+1$) clearly outputs a feasible matching, since it always assigns a student to a school with a remaining seat.

Third, B-LT($k+1$) clearly satisfies MB, since it matches MB-pairs in step 2.

Fourth, we can show that the matching $Y$ output by B-LT($k+1$) is Pareto efficient in the same way as the proof in \cref{thm:LEF-PE-under-tree-single-peaked}.

Finally, we show that $Y$ is locally ERF-($k-1$).
We show this by showing that each student can receive justified envies from at most $k-1$ neighbors.
If a student is assigned to a school in step 2 of B-LT($k+1$), then clearly she receives no justified envies from the neighbors.
If a student $\student$ is assigned to an emptyset in step 3, then she clearly receives no justified envy.
If a student $\student$ is assigned to a school $\school$ in step 3, then $\student$ receives at most $k-1$ justified envies from the neighbors since there remains at most $k-1$ neighbor students that are more preferable to $\student$ for $\school$.
Finally, 
if a student $\student$ is assigned to a school $\school$ in step 4, then 
there remains exactly $k$ neighbor students that are more preferable to $\student$ for $\school$.
However, one of the $k$ neighbor students, say $\student'$, is simultaneously assigned to a school more preferable to $\school$.
Hence, $\student'$ does not embrace local envy toward $\student$.
Therefore, $\student$ receives at most $k-1$ justified envies from the neighbors.
It follows that $Y$ is locally ERF-($k-1$).
\end{proof}

\fi

\subsection{
Locally ERF-$k$ and EF-$k$ mechanisms with SP and PE for treewidth-$k$ graphs under general preferences
}
 \label{subsec:SP-PE-LEF-k}

Here, we propose two SD-based mechanisms that satisfies SP, PE, and locally ERF-$k$ (resp., locally EF-$k$) assuming that the student acquaintance graph $G$ has treewidth $k$.
Unlike the previous two subsections, we do not restrict schools' preference here.

Let us define a master-list $L_d$ as the order described in \cref{lem:treewidth-k->k-degenerate}.
Then we can show that SD using master-list $L_d$ outputs a matching that satisfies SP, PE, and local ERF-$k$.

\begin{theorem}
\label{thm:LreverseEFk-PE-under-treewidth-k}
Assume that the student acquaintance graph $G$ has treewidth $k$.
Then SD using master-list $L_d$ is SP, PE, and locally ERF-$k$.
\end{theorem}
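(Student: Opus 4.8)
The plan is to verify the three properties separately, leaning on known facts about the serial dictatorship (SD) mechanism and on \cref{lem:treewidth-k->k-degenerate} for the fairness part. For strategyproofness and Pareto efficiency, nothing beyond the master-list choice matters: SD is strategyproof and Pareto efficient for \emph{any} fixed master-list (this is stated at the end of the Model section), so in particular it holds when the master-list is $L_d$, the ordering guaranteed by \cref{lem:treewidth-k->k-degenerate} in which every student has at most $k$ neighbors appearing \emph{after} her. Hence these two properties require essentially no work.

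The substantive part is local ERF-$k$: I must show that in the output matching $Y$, every student $\student$ receives justified envy from at most $k$ of her neighbors, i.e.\ $|loc\text{-}Evr(Y,\student)| \le k$. First I would recall the structure of SD: students are processed in the order of $L_d$, and when $\student$ is processed she is assigned her most preferred school among those with a remaining seat (or left unmatched). The key observation is that if a student $\student'$ receives justified envy from $\student$ under SD — so $\student$ prefers $\school' := Y_{\student'}$ to $Y_\student$ and $\school'$ prefers $\student$ to $\student'$ — then $\student'$ must come \emph{before} $\student$ in $L_d$. Indeed, if $\student'$ came after $\student$, then at the time $\student$ was processed the seat that $\student'$ eventually takes at $\school'$ was still available (no one processed after $\student$ can have taken it before $\student$'s turn), and $\school'$ is acceptable to $\student$ and preferred to $Y_\student$; but since $\student$ was assigned her favorite available acceptable school, $\student$ would have been assigned $\school'$ (or something she likes even more), contradicting $\school' \succ_\student Y_\student$. (When $Y_\student = \emptyset$, the same argument works since $\school'$ is acceptable to $\student$ and had a free seat.) Therefore every student $\student'$ who is locally envied by $\student$ precedes $\student$ in $L_d$; equivalently, the set of neighbors of $\student'$ who can locally envy $\student'$ is contained in the set of neighbors of $\student'$ coming after $\student'$ in $L_d$, which by the defining property of $L_d$ has size at most $k$. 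Hence $|loc\text{-}Evr(Y,\student')| \le k$ for every $\student'$, so $Y$ is locally ERF-$k$.

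The main obstacle is getting the direction and the details of the ``$\student'$ precedes $\student$'' claim exactly right, in particular handling the case where $\student$ is unmatched and making sure the availability argument (the seat $\student'$ takes at $\school'$ was still free when $\student$ was processed) is airtight — this is a standard SD exchange-type argument but must be stated carefully. Once that claim is in hand, combining it with \cref{lem:treewidth-k->k-degenerate} is immediate, and the strategyproofness and Pareto efficiency claims are inherited directly from the generic properties of SD.
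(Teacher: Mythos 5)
Your proposal is correct and follows essentially the same argument as the paper: SP and PE are inherited from SD, and local ERF-$k$ follows because under SD a student can only receive justified envy from students processed after her, who among her neighbors number at most $k$ by the defining property of $L_d$. The paper's proof merely states this more tersely, leaving implicit the exchange-type argument (earlier students cannot envy later ones since the later student's seat was still available at their turn) that you spell out.
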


\ifconference
\else
\ifconference
\setcounter{theorem}{\getrefnumber{thm:LreverseEFk-PE-under-treewidth-k}}
\addtocounter{theorem}{-1}
\begin{theorem}
Assume that the student acquaintance graph $G$ has treewidth $k$.
Then SD using master-list $L_d$ is SP, PE, and locally EDF-$k$.
\end{theorem}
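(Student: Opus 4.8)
The plan is to obtain SP and PE essentially for free and to spend all the effort on local ERF-$k$. Since SD with any fixed master-list is strategyproof and Pareto efficient (as recalled when SD was introduced), running SD with the particular list $L_d$ immediately yields SP and PE, and no separate argument is needed for those two properties.

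For local ERF-$k$, the key lemma I would isolate first is a directionality statement for envy under SD: in the matching $Y$ returned by SD with a master-list $L$, if student $\student'$ has justified envy toward student $\student$, then $\student \succ_{L} \student'$ (i.e., $\student'$ is processed after $\student$). To prove this, suppose for contradiction that $\student' \succ_{L} \student$ and let $\school$ be the school with $(\student,\school) \in Y$. Seat availability is monotone nonincreasing along the SD process, so since $\school$ still has a free seat at the moment $\student$ is processed, it also had a free seat at the earlier moment $\student'$ was processed; hence $\student'$, who at her turn takes her most preferred school among those with a free seat, ends up with $Y_{\student'} \succeq_{\student'} \school$. This contradicts $\school \succ_{\student'} Y_{\student'}$, which is part of the definition of $\student'$ having justified envy toward $\student$, establishing the lemma.

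Granting the lemma, fix an arbitrary student $\student$ and apply it with $L = L_d$: every student in $Evr(Y,\student)$ comes after $\student$ in $L_d$, so $loc{\text -}Evr(Y,\student)$ is contained in the set of neighbors of $\student$ appearing after $\student$ in $L_d$. By the defining property of $L_d$ (the ordering furnished by \cref{lem:treewidth-k->k-degenerate} for a treewidth-$k$ graph), at most $k$ neighbors of $\student$ come after $\student$, so $|loc{\text -}Evr(Y,\student)| \le k$; since $\student$ was arbitrary, $Y$ is locally ERF-$k$. The argument is essentially routine, and the only step needing care is the directionality lemma, where the sole subtlety is to state the monotonicity of seat availability precisely enough that ``free seat when $\student$ is processed'' transfers to every student processed before $\student$.
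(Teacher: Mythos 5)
Your proposal is correct and follows essentially the same route as the paper: SP and PE are inherited from SD, and local ERF-$k$ follows because only students processed after $\student$ in $L_d$ can have justified envy toward $\student$, and $L_d$ (from \cref{lem:treewidth-k->k-degenerate}) guarantees at most $k$ neighbors of $\student$ come later. The only difference is that you spell out the directionality lemma (monotone seat availability) explicitly, whereas the paper leaves that step implicit.
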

\begin{proof}
\else
\begin{proof}
\fi
As it is an SD, it is SP and PE.
Let $Y$ be the resulting matching under the SD.
Observe that when a student $\student$ is assigned to a school in the SD, 
there remain at most $k$ neighbor students that are adjacent to $\student$ in $G$.
Hence, the number of local envies that $\student$ receives in $Y$ is at most $k$.
\end{proof}
\fi

By using a reverse order of $L_d$, we can bound the number of the number of students to whom each student has local envy.
Let $L_d^{\rm rev}$ be the master-list obtained by reversing the order of $L_d$.

\begin{theorem}
\label{thm:LEFk-PE-under-treewidth-k}
Assume that the student acquaintance graph $G$ has treewidth $k$.
Then SD using master-list $L_d^{\rm rev}$ is SP, PE, and locally EF-$k$.
\end{theorem}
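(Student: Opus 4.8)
The plan is to mirror the proof of \cref{thm:LreverseEFk-PE-under-treewidth-k}, swapping the roles of ``envy received'' and ``envy held.'' Since the mechanism is a serial dictatorship, strategyproofness and Pareto efficiency are immediate from the known properties of SD recalled in the model section; the only real work is to bound $|loc{\text -}Ev(Y,\student)|$ for every student $\student$, where $Y$ is the matching produced by SD with master-list $L_d^{\rm rev}$.

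First I would record the structural fact about the ordering. By \cref{lem:treewidth-k->k-degenerate}, $L_d$ is a vertex ordering of $G$ in which every student $\student$ has at most $k$ neighbors appearing \emph{after} $\student$. Reversing it, in $L_d^{\rm rev}$ every student $\student$ has at most $k$ neighbors appearing \emph{before} $\student$, i.e., at most $k$ neighbors of $\student$ are processed earlier than $\student$ in the run of SD.

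The key step is to argue that justified envy in $Y$ can only point ``backward'' along $L_d^{\rm rev}$: if $\student$ has justified envy toward $\student'$ in $Y$, then $\student'$ precedes $\student$ in $L_d^{\rm rev}$. Indeed, justified envy requires a school $\school$ with $\school \succ_\student Y_\student$ and $(\student',\school) \in Y$. When $\student$ takes her turn in SD she is assigned her most preferred acceptable school among those still having a free seat (or the emptyset); since $\student$ strictly prefers $\school$ to $Y_\student$, all $q_\school$ seats of $\school$ must already be filled at that moment, and in particular the seat occupied by $\student'$ was filled before $\student$'s turn. Hence $\student'$ was processed before $\student$. (The same reasoning covers the case $Y_\student=\emptyset$: any school acceptable to $\student$ is already full when $\student$ is processed.) Combining this with the previous paragraph, every element of $loc{\text -}Ev(Y,\student)$ is a neighbor of $\student$ processed before $\student$ in $L_d^{\rm rev}$, and there are at most $k$ such neighbors, so $|loc{\text -}Ev(Y,\student)|\le k$, which is exactly local EF-$k$.

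I expect the only subtlety---hardly an obstacle---is making the ``backward envy'' claim airtight, namely checking that a student's SD pick is indeed her favorite acceptable school with a vacancy and that strictness of $\succ_\student$ rules out any indifference; this is precisely the standard SD argument and requires no new ideas. The treewidth hypothesis enters solely through \cref{lem:treewidth-k->k-degenerate}, as in the previous theorem.
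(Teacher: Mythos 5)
Your proposal is correct and follows essentially the same route as the paper: the paper proves this theorem by mirroring the argument of \cref{thm:LreverseEFk-PE-under-treewidth-k}, i.e., SP and PE are inherited from SD, and the reversal of $L_d$ ensures each student has at most $k$ neighbors processed earlier, who are the only students she can possibly envy. Your explicit ``backward envy'' step (a school strictly preferred to one's SD assignment must already be full at one's turn, so its occupants precede one in the list) is precisely the detail the paper leaves implicit in its ``can be proven similarly'' remark.
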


\ifconference
\else
\ifconference
\setcounter{theorem}{\getrefnumber{thm:LEFk-PE-under-treewidth-k}}
\addtocounter{theorem}{-1}
\begin{theorem}
Assume that the student acquaintance graph $G$ has treewidth $k$.
Then SD using master-list $L_d^{\rm rev}$ is SP, PE, and locally EF-$k$.
\end{theorem}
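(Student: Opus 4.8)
The plan is to mirror the argument for \cref{thm:LreverseEFk-PE-under-treewidth-k}, exploiting the symmetry that reversing the master-list $L_d$ interchanges predecessors and successors, and that ``envy given'' plays the dual role of ``envy received.'' Since SD using any master-list is strategyproof and Pareto efficient, the only thing left to prove is that the output matching $Y$ is locally EF-$k$, i.e., that each student $\student$ has justified envy toward at most $k$ of her neighbors.

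First I would isolate the following elementary fact about serial dictatorship: for any master-list $L$, if in the output $Y$ student $\student$ has justified envy toward $\student'$, then $\student'$ precedes $\student$ in $L$. To see this, let $(\student',\school)\in Y$ with $\school \succ_\student Y_\student$; note this forces $\school$ to be acceptable to $\student$. When $\student$ was processed in SD she took her most preferred acceptable school among those still having a free seat, so $\school$ had no free seat at that point; since the number of filled seats at any school only increases as SD proceeds, $\school$ remains full thereafter, so $\student'$ --- who is ultimately placed at $\school$ --- must have been processed strictly before $\student$. Hence $Ev(Y,\student)$ is contained in the set of students appearing before $\student$ in $L$.

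Next I would apply this with $L = L_d^{\rm rev}$. By \cref{lem:treewidth-k->k-degenerate}, in $L_d$ every vertex has at most $k$ neighbors after it; equivalently, in $L_d^{\rm rev}$ every vertex $\student$ has at most $k$ neighbors before it. Combining the two observations, $loc{\text -}Ev(Y,\student) \subseteq N(\student)\cap\{\student' : \student' \text{ precedes } \student \text{ in } L_d^{\rm rev}\}$, a set of size at most $k$. Thus $Y$ is locally EF-$k$, as required.

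I expect no real difficulty here; the only step needing a little care is the ``envy points backward in the master-list'' claim, where one must use the monotonicity of seat occupancy in SD and dispatch the degenerate cases ($\student$ unmatched, or the envied school unacceptable to $\student$, in which case there is no justified envy to begin with). The rest is bookkeeping analogous to the proof of \cref{thm:LreverseEFk-PE-under-treewidth-k}.
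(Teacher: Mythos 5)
Your proposal is correct and takes essentially the same approach as the paper: the paper proves this theorem ``similarly to'' the locally ERF-$k$ result, whose argument is precisely the observation that in SD a student can only (be) envy(ied) across the master-list in one direction, so with $L_d^{\rm rev}$ each student has at most $k$ neighbors preceding her and hence at most $k$ neighbors she can envy. Your explicit ``envy points backward in the master-list'' lemma is just that observation spelled out with the seat-monotonicity details made precise.
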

\begin{proof}
\else
\begin{proof}
\fi
This can be proven similarly to \cref{thm:LreverseEFk-PE-under-treewidth-k}.
\end{proof}
\fi

\section{Discussions and concluding remarks}\label{sec:conclusion}
We have introduced local envy-freeness to many-to-one two-sided matching and investigated the level of local envy-freeness that can be achieved by Pareto efficient mechanisms.
By restricting the structure of the student acquaintance graph and the schools' preferences, we provided a comprehensive picture of the level of local envy-freeness that can be achieved by Pareto efficient mechanisms, summarized in \cref{tbl:results}.
We specifically focused on graph structures that are ``close to'' trees, i.e., graphs with treewidth $k$ for some positive integer $k$.
Note that our result is applicable to any graph, since the treewidth of any graph is at most the number of its vertices minus one.

In the way of proposing our new mechanisms, we use the fact that for graphs with treewidth $k$ there exists a vertex ordering such that for every vertex $\student$ at most $k$ neighbors comes after $\student$ in the order (\cref{lem:treewidth-k->k-degenerate}).
\ifconference
Indeed, graphs that allow such a vertex ordering are exactly the $k$-degenerate graphs, and our mechanisms work for those graphs.
See \cref{subsec:extension-to-bounded-degeneracy} for more details.
\else
Indeed, graphs that allow such a vertex ordering are exactly the $k$-degenerate graphs, and our mechanisms based on SD work for those graphs.
Moreover, the B-LT($k+1$) mechanism also works for those graphs assuming that the schools' preference are such that each student is ranked within the top $k+1$ among the set consisting of the student and her neighbors.
Let us provide a formal definition of $k$-degenerate graphs.
\begin{definition}[$k$-degenerate graphs]
A graph is \emph{$k$-degenerate} if 
there exists a vertex ordering such that for every vertex $\student$ at most $k$ neighbors come after $\student$ in the order.
The degeneracy of a graph is the smallest $k$ such that the graph is $k$-degenerate.
\end{definition}

For example, the degeneracy of a tree is one.
Hence, the degeneracy of a graph in a sense measures how it is sparse.
Treewidth is a measure of proximity to a tree and it is known that a graph with treewidth $k$ is $k$-degenerate.
Hence, graphs with bounded degeneracy include graphs with bounded treewidth.

Without changing the corresponding proofs for graphs with treewidth $k$, 
we can show the following.

\begin{theorem}
\label{thm:LreverseEFk-1-PE-under-k-degenerate-top-k+1}
Assume that the graph $G$ is $k$-degenerate and schools' preferences are such that 
each student is ranked within the top $k+1$ in the set consisting of the student and her neighbors.
Then B-LT($k+1$) is PE, locally ERF-($k-1$), and MB.
\end{theorem}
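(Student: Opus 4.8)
The plan is to observe that the proof of \cref{thm:LEF-k-1-PE-under-k-degenerate-single-peaked} used its structural hypotheses (treewidth $k$, single-peaked on a tree decomposition) only through two facts, and that both are now available directly as hypotheses, so essentially the same argument applies. Specifically, that proof invoked \cref{lem:single-peaked-locally-top-k+1} (every student is ranked within the top $k+1$ in the set consisting of herself and her neighbors), which is now assumed, and \cref{lem:treewidth-k->k-degenerate} (a vertex ordering in which every vertex has at most $k$ later neighbors), which is now exactly the definition of $k$-degeneracy. So the first step is to rewrite the proof of \cref{thm:LEF-k-1-PE-under-k-degenerate-single-peaked}, replacing each appeal to those two lemmas by the corresponding hypothesis of the present theorem.

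For termination I would reproduce the same progress argument for Step 4. After Step 3 each unassigned student $\student$ fails the top-$k$ test for her most preferred available school, so at least $k$ of her unassigned neighbors are preferred by that school to her; together with the assumed top-$(k+1)$ property this forces exactly $k$ such neighbors, all unassigned, i.e., in the ``attack'' digraph on the unassigned students every vertex has in-degree exactly $k$. Fixing a degeneracy ordering of $G$ and letting $\student$ be the earliest student with an attacker $\student'$ occurring earlier in the order, the student $\student'$ then has all $k$ of its attackers (which are neighbors) after it, and since $\student'$ has at most $k$ later neighbors these coincide; as $\student$ is a later neighbor of $\student'$, it is one of $\student'$'s attackers, so $\student$ and $\student'$ attack each other and Step 4 makes progress. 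Feasibility (students are only placed in schools with a remaining seat), MB (Step 2 matches all MB-pairs and nothing is ever un-matched), and PE (each student receives her top available school at assignment time, and a mutually attacking pair goes to distinct schools, so the output equals a serial-dictatorship outcome) are then verified exactly as in \cref{thm:LEF-PE-under-tree-single-peaked,thm:LEF-k-1-PE-under-k-degenerate-single-peaked}. For local ERF-($k-1$) I would again split on how $\student$ is assigned: Step 2 or Step-3-to-emptyset give no received envy; a Step 3 assignment to $\school$ leaves at most $k-1$ possible enviers, namely the still-unassigned neighbors $\school$ prefers to $\student$ (already-assigned neighbors are either less preferred by $\school$ or already hold a school they weakly prefer to $\school$); and a Step 4 assignment to $\school$ has exactly $k$ neighbors preferred by $\school$, but the partner $\student'$ of the mutual attack is placed simultaneously at her own top available school $\school' \neq \school$ and hence does not envy $\student$, leaving at most $k-1$.

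Since every step transcribes an argument already in hand, no step is a genuine obstacle; the only things needing care are making explicit that the two lemma invocations in the treewidth proof are precisely the two hypotheses here, and noting (for the termination argument) that attacks occur only among unassigned students, so the earliest ``attacked-from-earlier'' vertex in the degeneracy ordering is automatically unassigned.
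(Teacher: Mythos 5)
Your proposal is correct and takes essentially the same route as the paper, which proves this theorem precisely by observing that the proof of \cref{thm:LEF-k-1-PE-under-k-degenerate-single-peaked} uses its hypotheses only through \cref{lem:single-peaked-locally-top-k+1} and \cref{lem:treewidth-k->k-degenerate}, both of which are now assumed directly (the latter being the definition of $k$-degeneracy). Your supplementary observations (exactly $k$ unassigned attackers per remaining student, the mutually attacking pair receiving distinct schools, and attacks occurring only among unassigned students) are consistent with, and faithfully fill in, the paper's argument.
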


\begin{theorem}
\label{thm:LreverseEFk-PE-under-k-degenerate}
Assume that the graph $G$ is $k$-degenerate.
Then SD using the master list $L_d$ is SP, PE, and locally ERF-$k$.
\end{theorem}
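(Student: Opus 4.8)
The plan is to recognise that this theorem stands to \cref{thm:LreverseEFk-PE-under-treewidth-k} exactly as $k$-degeneracy stands to treewidth $k$: the proof of the treewidth-$k$ version uses the structure of $G$ only through the vertex ordering guaranteed by \cref{lem:treewidth-k->k-degenerate}, and for a $k$-degenerate graph such an ordering $L_d$ exists \emph{by definition}. So the first step is simply to fix $L_d$ to be a degeneracy ordering of $G$, i.e.\ one in which every vertex $\student$ has at most $k$ neighbours occurring after it; no lemma is needed for this, it is the definition of $k$-degeneracy.

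Next I would dispatch strategyproofness and Pareto efficiency: these hold for the serial dictatorship under \emph{any} master-list (as recorded when SD is introduced in \cref{sec:model}), so in particular under $L_d$, and nothing about the graph $G$ enters here.

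The substantive step is the bound for local ERF-$k$. Fix a student $\student$ and let $\school$ be her school in the output $Y$. I would first establish the standard SD fact that any student $\student'$ processed before $\student$ has no justified envy toward $\student$: if $\school \succ_{\student'} Y_{\student'}$, then $\school$ is acceptable to $\student'$ and strictly preferred to her assignment, so at the moment $\student'$ was processed $\school$ had no remaining seat; since seats are never released in SD, $\school$ still has no remaining seat when $\student$ is later processed, contradicting $(\student,\school)\in Y$. Hence every student in $loc{\text -}Evr(Y,\student)$ must come after $\student$ in $L_d$, and by the degeneracy property at most $k$ neighbours of $\student$ do, so $|loc{\text -}Evr(Y,\student)|\le k$. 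Since $\student$ was arbitrary, $Y$ is locally ERF-$k$.

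The only thing requiring any care — and it is minor — is spelling out that ``earlier-implies-no-justified-envy'' argument cleanly in the present capacitated model with possibly unacceptable contracts; once that is stated, the result follows from the argument of \cref{thm:LreverseEFk-PE-under-treewidth-k} verbatim, so I do not anticipate any genuine obstacle.
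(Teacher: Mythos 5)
Your proposal is correct and matches the paper's approach: the paper proves the k-degenerate case by reusing the treewidth-$k$ proof verbatim, since that proof only needs a master-list in which each student has at most $k$ neighbours after her, which a degeneracy ordering provides by definition. Your explicit ``earlier students cannot have justified envy under SD'' step is exactly the fact the paper's terse proof leaves implicit, so there is no substantive difference.
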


\begin{theorem}
\label{thm:LEFk-PE-under-k-degenerate}
Assume that the graph $G$ is $k$-degenerate.
Then SD using the master list $L_d^{\rm rev}$ is SP, PE, and locally EF-$k$.
\end{theorem}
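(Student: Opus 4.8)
The plan is to get strategyproofness and Pareto efficiency for free from the fact that the mechanism is a serial dictatorship whose master-list $L_d^{\rm rev}$ depends only on the graph $G$ and not on the reported preferences; both properties were already recorded for SD in the model section. So the entire content of the theorem is the claim that the output matching $Y$ is locally EF-$k$, i.e., that every student $\student$ has justified envy toward at most $k$ of her neighbours in $G$. I would prove this exactly as in \cref{thm:LEFk-PE-under-treewidth-k} and \cref{thm:LreverseEFk-PE-under-k-degenerate}, reading off the property of the ordering directly from the definition of $k$-degeneracy rather than from treewidth.

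First I would fix the orientation of the master-list: by the definition of a $k$-degenerate graph (the same ordering as in \cref{lem:treewidth-k->k-degenerate}), in $L_d$ every vertex has at most $k$ neighbours appearing later, so in the reversed list $L_d^{\rm rev}$ every student $\student$ has at most $k$ neighbours appearing earlier. Next I would establish the standard ``envy points backwards'' property of SD: if $\student$ is processed before $\student'$ in $L_d^{\rm rev}$, then $\student$ has no justified envy toward $\student'$. The argument is that, writing $\school' = Y_{\student'}$, the seat that $\student'$ takes at $\school'$ was still vacant when $\student$ was processed; hence if $\school'$ were acceptable to $\student$ and $\school' \succ_\student Y_\student$, then $\student$ would have taken $\school'$ (or a school she likes even more), contradicting the SD selection rule. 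This also covers the case $Y_\student = (\student,\emptyset)$ and causes no difficulty when $q_\school > 1$, since all that is used is the existence of a vacant seat at $\school'$ at $\student$'s turn.

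Combining the two observations gives the result at once: any student toward whom $\student$ has justified envy must be processed before $\student$ in $L_d^{\rm rev}$, and at most $k$ of those students are neighbours of $\student$, so $|loc{\text -}Ev(Y,\student)| \le k$ for every $\student$, which is exactly local EF-$k$. I do not expect a genuine obstacle here — this is the ERF-$k$ argument with the words ``earlier'' and ``later'' interchanged — and the only point that needs a moment of care is stating the SD backwards-envy lemma so that it is robust to unmatched students and to quotas larger than one, which the formulation above already is.
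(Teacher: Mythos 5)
Your proposal is correct and matches the paper's (very terse) argument: the paper also takes SP and PE directly from SD and bounds local envy by counting neighbours on the relevant side of the degeneracy ordering, with the reversed list ensuring at most $k$ neighbours precede each student. Your explicit ``envy only points toward earlier-processed students'' lemma is exactly the implicit step behind the paper's ``proven similarly to \cref{thm:LreverseEFk-PE-under-treewidth-k}'', so there is no gap and no real difference in approach.
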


\fi

Here, let us briefly discuss the expressive power of single-peaked preferences on tree decompositions introduced in this paper.
We demonstrate that even ``multiple-peaked'' preferences, which generalize single-peaked preferences, can sometimes be represented as single-peaked preferences on a tree decomposition that has low width.
Please refer to \cref{fig:two-peaked-preferences}. 
In this figure, students form a path graph, and preferences for these students are depicted above them. 
Here, taller students are preferred, that is, $i_2 \succ i_4 \succ i_3 \succ i_1 \succ i_5$, resulting in a ``double-peaked'' preference.
It can be confirmed that this preference is a single-peaked preference on the tree decomposition that has width two shown in \cref{fig:tree-decomposition-of-two-peaked}.
Therefore, given such school preferences, it is possible to find a matching that satisfies PE, local ERF-1, and MB using our B-LT(3) mechanism.
In this way, 
we believe that our mechanism can handle various situations, even when the preferences of schools are not single-peaked on a path (or a tree), allowing schools to evaluate students based on multiple criteria.

\begin{figure}[t]
\centering
\includegraphics[scale = 0.3]{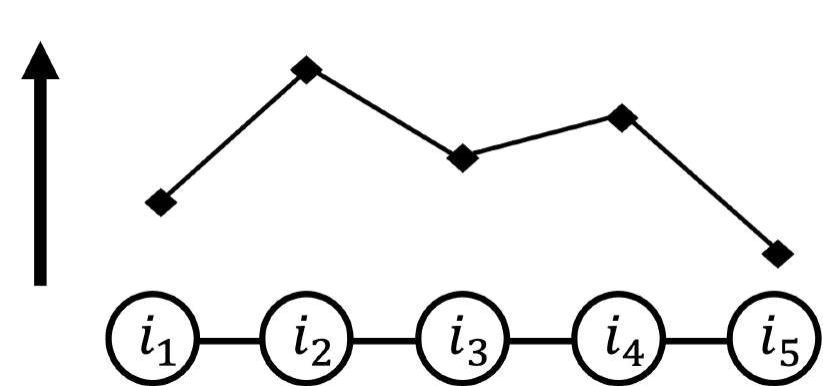}
\hspace{5cm}
\caption{A double-peaked preference}
\label{fig:two-peaked-preferences}
\hspace{10cm}
\end{figure}

\begin{figure}[t]
\centering
\includegraphics[scale = 0.3]{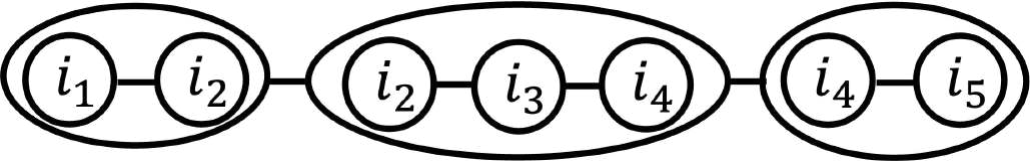}
\hspace{5cm}
\caption{A tree decomposition of a path that has width two}
\label{fig:tree-decomposition-of-two-peaked}
\hspace{10cm}
\end{figure}

Future work includes a clarification of what happens when we allow a one-way knowing relationship.
In this case, the graph representing the acquaintance relationships among students becomes a directed graph.
Also, when the graph has treewidth $k$ and schools' preferences are single-peaked on the graph, we only proposed a PE mechanism that bounds the number of the students by whom each student is locally envied, i.e., one that guarantees local ERF($k-1$).
It is open whether there exists a PE mechanism or even matching that bounds the number of students to whom each student has local envy, i.e., one that guarantees local EF($k-1$).



\begin{ack}
We would like to thank anonymous reviewers for their valuable comments. 
This work was partially supported by JST ERATO Grant Number JPMJER2301, and JSPS
KAKENHI Grant Numbers JP21H04979 and JP25K03186, Japan.
\end{ack}



\bibliography{matching}

\newpage
\appendix

\section{Comparison between local stability and local envy-freeness}
\label{sec:Comparison-LS-and-LEF}
In this section, we compare the locally stable (LS) matchings, as proposed by \citet{Arcaute2009}, 
with the locally envy-free (LEF) matchings introduced in this paper. 
Specifically, we clarify the inclusion relationship between the set of LS matchings and the set of LEF matchings, as illustrated in \cref{fig:Comparison of LS and LEF}.

\begin{figure}[t]
\centering
\includegraphics[scale = 0.3]{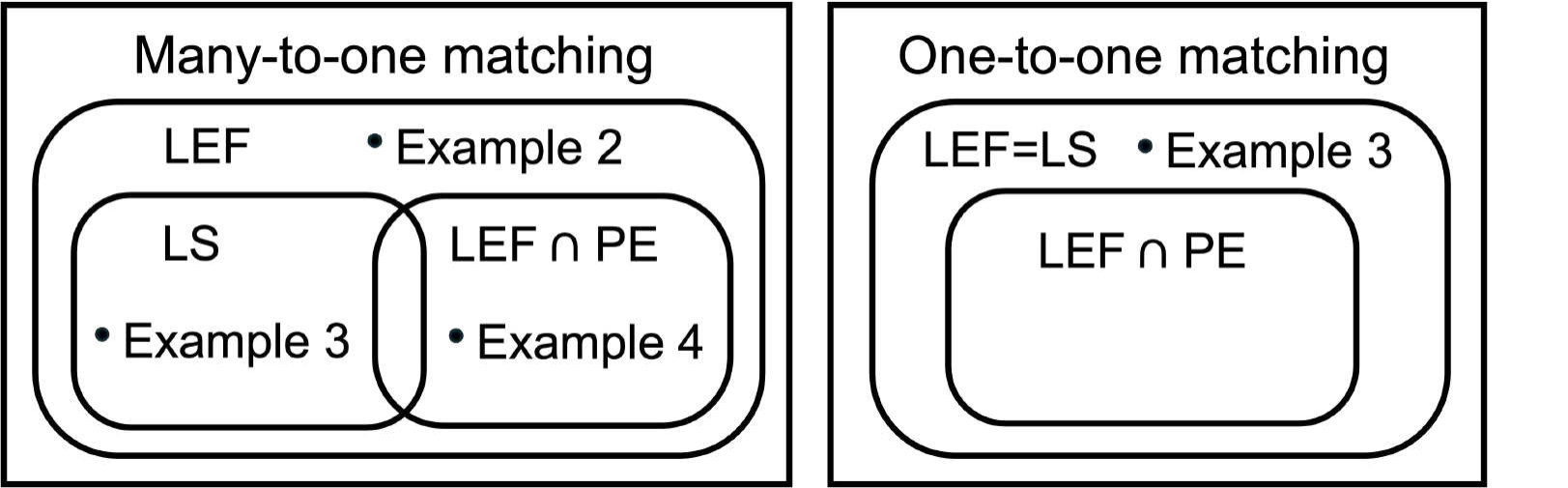}
\hspace{5cm}
\caption{Inclusion relationship between the set of LS matchings and the set of LEF matchings. Here, ``LEF'' represents the set of LEF matchings, ``LS'' represents the set of LS matchings, and ``PE'' represents the set of PE matchings.}
\label{fig:Comparison of LS and LEF}
\hspace{10cm}
\end{figure}

First, we define local stable (LS) matching, as defined by \citet{Arcaute2009} in the context of the job market, 
in our setting.
\begin{definition}[locally stable matching]
\label{def:locally-stable}
Let $G = (\studentset,E)$ be a student acquaintance graph.
We say that a matching $\matching$ is a locally stable (LS) matching with respect to $G$ if, for all $\student \in \studentset$ and $\school \in \schoolset$, $(\student, \school)$ is a blocking pair if and only if $N(\student) \cap \matching_\school = \emptyset$ (i.e., no neighboring students are matched to school $\school$).
In this context, we say that $(\student,\school)$ is a blocking pair if and only if $\school \succ_\student \matching_\student$ and $\student \succ_\school \min(\matching_\school)$, where $\min(\matching_\school)$ is the least preferred student matched to school $\school$ (with respect to school $\school$'s preference) if $|\matching_\school| = q_{\school}$, and $\min(\matching_\school) = \emptyset$ otherwise.
\end{definition}

Next, we show that LS matching and LEF matching coincide in one-to-one matching, that is, when the schools' maximum quotas are all one.

\begin{theorem}
\label{thm:equivalence-ls-lef}
When $|q_\school| = 1$ for all school $\school$, the set of locally stable matchings is identical to the set of locally envy-free matchings.
\end{theorem}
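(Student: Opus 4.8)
The plan is to prove set equality by showing each direction of containment, working directly from the definitions of locally stable (LS) matching (\cref{def:locally-stable}) and locally envy-free (LEF) matching (\cref{def:local-fairness}), specialized to the case where every school has quota one. Since the statement only claims equality of the two \emph{sets} of matchings (not that either contains nonwasteful or feasible matchings specifically), I would be careful about whether wastefulness enters: an LS matching forbids blocking pairs $(\student,\school)$ with $N(\student)\cap \matching_\school=\emptyset$, which in particular forbids a student claiming an empty seat at a school with no matched neighbor. I should check whether LEF as defined imposes any such requirement — it does not — so the subtlety is whether the intended reading of the theorem restricts attention to, say, matchings in which every school's seat is used, or whether some side condition makes the two notions line up. I would resolve this by re-reading the LS definition carefully: a blocking pair requires $\student \succ_\school \min(\matching_\school)$ with $\min(\matching_\school)=\emptyset$ when $\school$ has an empty seat, so an LS matching with quota one is in particular nonwasteful unless all of the claiming students have a neighbor already at $\school$ — but with quota one, if some neighbor is at $\school$ then the seat is full, so in fact LS $\Rightarrow$ nonwasteful here. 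This means the theorem as literally stated would be false unless LEF is also read with a nonwastefulness caveat, so I expect the actual proof restricts to nonwasteful matchings on both sides, or the ambient convention is that ``matching'' already includes feasibility and the relevant comparison is between LS and (nonwasteful + LEF); I would state this carefully at the top.

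\textbf{Direction 1 (LS $\subseteq$ LEF).} Let $\matching$ be LS and suppose for contradiction some student $\student$ has local justified envy toward a neighbor $\student'$: then $\school \succ_\student \matching_\student$ where $(\student',\school)\in\matching$ and $\student \succ_\school \student'$. Since quotas are one, $\matching_\school = \{(\student',\school)\}$, so $\min(\matching_\school) = \student'$, and $\student \succ_\school \min(\matching_\school)$ together with $\school \succ_\student \matching_\student$ shows $(\student,\school)$ is a blocking pair. But $\student' \in N(\student) \cap \matching_\school$, so $N(\student)\cap\matching_\school \neq \emptyset$, contradicting local stability (which permits a blocking pair only when $N(\student)\cap\matching_\school=\emptyset$). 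Hence $\matching$ is LEF. (If the ambient reading also needs nonwastefulness: an LS matching is nonwasteful by the $\min(\matching_\school)=\emptyset$ clause, as noted above.)

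\textbf{Direction 2 (LEF $+$ nonwasteful $\subseteq$ LS).} Let $\matching$ be LEF and nonwasteful. I must show: $(\student,\school)$ is a blocking pair $\iff$ $N(\student)\cap\matching_\school=\emptyset$. For ($\Leftarrow$): if no neighbor of $\student$ is at $\school$, I need $(\student,\school)$ to block; here I would use that the LS definition demands this \emph{biconditional}, so actually the nontrivial content is showing that whenever no neighbor of $\student$ sits at $\school$ and $\school \succ_\student \matching_\student$, we get $\student \succ_\school \min(\matching_\school)$ — this is \emph{not} automatic, so in fact LEF alone does not give LS, and the correct statement of Direction 2 must be the contrapositive-friendly one: show that if $(\student,\school)$ is \emph{not} a blocking pair then $N(\student)\cap\matching_\school\neq\emptyset$, and conversely. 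Concretely, suppose $N(\student)\cap\matching_\school\neq\emptyset$, i.e., the unique student $\student'$ at $\school$ is a neighbor of $\student$; I want $(\student,\school)$ to be a non-blocking pair. If it \emph{were} blocking, then $\school\succ_\student\matching_\student$ and $\student\succ_\school\student'=\min(\matching_\school)$, which is exactly local justified envy of $\student$ toward $\student'$ — contradicting LEF. Conversely suppose $N(\student)\cap\matching_\school=\emptyset$; I want $(\student,\school)$ blocking whenever... and here I realize the biconditional in the LS definition forces $(\student,\school)$ to block for \emph{every} such $\student,\school$ with $\school\succ_\student\matching_\student$, which requires $\student\succ_\school\emptyset$ (acceptability) and nonwastefulness handles the empty-seat case — so the precise hypotheses I must carry are nonwastefulness plus the standard convention that unacceptable contracts are simply not in $X$. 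I would spell out that with quota one, ``$\min(\matching_\school)=\emptyset$'' exactly when $\school$'s seat is empty, and nonwastefulness says no acceptable $\student$ with $\school\succ_\student\matching_\student$ exists in that case, closing the loop.

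\textbf{The main obstacle} I anticipate is pinning down the exact side conditions (nonwastefulness, acceptability conventions, the empty-seat case of $\min$) so that the biconditional in \cref{def:locally-stable} genuinely matches ``no local justified envy'' — the two definitions are phrased asymmetrically (LS as an iff about blocking pairs, LEF as absence of a certain envy relation), and the equivalence is clean only once one observes that with quota one, ``$\student'$ is the least-preferred matched student at $\school$'' is vacuously just ``$\student'$ is the matched student at $\school$,'' and ``no neighbor matched to $\school$'' plus nonwastefulness rules out the remaining blocking pairs. Writing Direction 2 in contrapositive form (non-blocking $\Leftrightarrow$ ``a neighbor is at $\school$ or $(\student,\school)$ is not an improvement'') will make the argument transparent; everything else is bookkeeping.
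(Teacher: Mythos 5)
Your core computation is right and it is exactly the paper's argument: with all quotas equal to one, a blocking pair $(i,s)$ at a school with a matched neighbor means $Y_s=\{(i',s)\}$ with $i'\in N(i)$, $s\succ_i Y_i$ and $i\succ_s i'=\min(Y_s)$, which is precisely a local justified envy, and conversely; your Direction~1 is the paper's LS~$\Rightarrow$~LEF step verbatim (contradiction instead of contraposition).

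However, there is a genuine error in your framing and in Direction~2: you have flipped the condition in \cref{def:locally-stable}. Local stability, as the paper uses it in its own proofs (``a matching is not LS if there exists a pair $(i,s)$ that forms a blocking pair and $N(i)\cap Y_s\neq\emptyset$'') and as in the Arcaute--Vassilvitskii notion it formalizes, \emph{forbids} blocking pairs whose student has a neighbor matched to the school and \emph{tolerates} blocking pairs with $N(i)\cap Y_s=\emptyset$. You assert the opposite (``an LS matching forbids blocking pairs with $N(i)\cap Y_s=\emptyset$''), conclude that LS implies nonwastefulness, and from there that the theorem is false as stated unless one adds a nonwastefulness caveat. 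This is contradicted by the paper itself: in \cref{exa:ls-to-lefpe} the empty matching is LS while being maximally wasteful. Consequently your Direction~2 proves the wrong containment (LEF $+$ nonwasteful $\subseteq$ LS) rather than LEF $\subseteq$ LS; no such extra hypothesis is needed, because any blocking pair aimed at an empty seat has $Y_s=\emptyset$, hence $N(i)\cap Y_s=\emptyset$, and is therefore permitted by local stability. (Even under your literal if-and-only-if reading, the patch ``nonwastefulness handles the empty-seat case'' would not close the loop: that reading would require \emph{every} pair $(i,s)$ with no matched neighbor to be blocking, which already fails whenever $Y_i\succ_i s$, and nonwastefulness does nothing about that.) The correct Direction~2 is the part of your argument you already have: if $(i,s)$ were a blocking pair with a neighbor matched to $s$, then with quota one that neighbor is $\min(Y_s)$ and $i$ would have local envy toward her, contradicting LEF; hence LEF $\Rightarrow$ LS with no side conditions.
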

\begin{proof}
First, 
we show that any LS matching must satisfy LEF, by contraposition.
That is, 
we prove that if a matching does not satisfy LEF, 
then it cannot be LS.
We consider the case where $(\student', \school) \in Y, \school \succ_{\student} \matching_{\student}$, and $\student \succ_\school \matching_{\school}$.
In other words, student $\student$ has envy toward a neighboring student $\student'$ who is assigned to a school $\school$, then the matching violations LEF. 
In such a case, since $\school \succ_{\student} \matching_{\student}$ and $\student \succ_\school \matching_{\school}$, the pair $(\student, \school)$ constitutes a blocking pair, and $N(\student) \cap \matching_\school$ is nonempty as it contains $\student'$.
Hence, this matching is not LS.
By contraposition, 
we conclude that if a matching is LS, 
then it must also satisfy LEF.

Next, we show that if a matching satisfies LEF, then it also satisfies LS in the case where the capacity of each school is one, by contraposition.
A matching is not LS if there exists a pair $(\student, \school)$ that forms a blocking pair and $N(\student) \cap \matching_\school \neq \emptyset$.
This means that $\school \succ_\student \matching_\student$, $\student \succ_\school \matching_\school$.
Since the quota of $\school$ is one, there exists $\student'$ such that 
$\matching_\school = \{i'\}$ and $\student' \in N(\student)$.
Hence, $\student$ has a justified envy towards a neighboring student $\student'$ and $\matching$ is not LEF.

Therefore, under the assumption that each school has a capacity of one, the set of LS matchings coincides with the set of LEF matchings.
\end{proof}

Next, we show that an LS matching is always LEF, however, 
LS is in general a strictly stronger property than LEF when some school has quota more than one.

\begin{theorem}
\label{thm:ls_stronger_than_lef}
When there exists a school $\school$ with quota $|q_\school| \geq 2$, local stability is a strictly stronger property than local envy-freeness.
\end{theorem}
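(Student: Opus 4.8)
The plan is to prove the two halves of the statement separately: that local stability (LS) implies local envy-freeness (LEF) for \emph{every} quota vector, and that this implication becomes strict as soon as some school may admit two or more students. The first half needs no new work: it is precisely the argument in the first paragraph of the proof of \cref{thm:equivalence-ls-lef}, which never used the quota-one hypothesis. Concretely, if a matching $Y$ is not LEF, some student $\student$ has justified envy toward a neighbor $\student'$, so $(\student',\school)\in Y$, $\school\succ_\student Y_\student$, and $\student\succ_\school\student'$; since $\student'\in Y_\school$ this forces $\student\succ_\school\min(Y_\school)$, so $(\student,\school)$ is a blocking pair, while $N(\student)\cap Y_\school\ni\student'$ is nonempty, so $(\student,\school)$ is a local blocking pair and $Y$ is not LS. I would simply note that this direction carries over verbatim.

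The substance is to exhibit, in a market where some school has quota at least two, a matching that is LEF but not LS. I would use the minimal instance: one school $\school$ with $q_\school=2$, three students $\studentset=\{\student_1,\student_2,\student_3\}$, all contracts present and acceptable, the acquaintance graph consisting of the single edge $\{\student_1,\student_2\}$, school preference $\student_2\succ_\school\student_1\succ_\school\student_3$, and each student ranking $\school$ above being unmatched. Take $Y=\{(\student_2,\school),(\student_3,\school)\}$, leaving $\student_1$ unmatched. The only edge is $\{\student_1,\student_2\}$, so the only possible local envy is between $\student_1$ and $\student_2$; but $\student_1$ does not envy $\student_2$ since $\student_2\succ_\school\student_1$, and $\student_2$ cannot envy $\student_1$ since $\student_1$ is unmatched, so $Y$ is LEF. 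However $(\student_1,\school)$ is a blocking pair: $\school\succ_{\student_1}Y_{\student_1}=\emptyset$ and, as $|Y_\school|=q_\school$, $\min(Y_\school)=\student_3$ with $\student_1\succ_\school\student_3$; and $N(\student_1)\cap Y_\school=\{\student_2\}\neq\emptyset$. Hence $(\student_1,\school)$ is a local blocking pair and $Y$ is not LS. Together with the first half this gives the strict inclusion on this class of markets. To handle an arbitrary target quota $q\ge 2$, I would add $q-2$ dummy students, ranked by $\school$ below $\student_3$ and all matched to $\school$ in $Y$; the verification is identical.

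I do not expect a genuine obstacle — the entire content is the design of the separating example — but the point that needs care is \emph{why} quota two is essential, and I would make this explicit. When every quota is one, the unique occupant of $\school$ in $Y$ is simultaneously $\min(Y_\school)$ and the candidate target of $\student_1$'s envy, so a blocking pair at a school where $\student_1$ has an acquaintance is automatically a local envy; with quota two these two roles can be split between different students (here $\student_3$ and $\student_2$), which is exactly what makes $Y$ locally envy-free yet not locally stable. In the write-up I would also double-check that no other blocking pairs or local-envy relations have been overlooked (there are none: $\student_3$ has no neighbors and $\student_2$ occupies its top choice) and that $Y$ is feasible ($|Y_\school|=2=q_\school$).
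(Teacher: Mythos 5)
Your proposal is correct and follows essentially the same route as the paper: the LS $\Rightarrow$ LEF direction is inherited verbatim from the first part of the proof of \cref{thm:equivalence-ls-lef}, and strictness is witnessed by a concrete market containing a school with quota two. The only difference is the separating example itself --- yours uses a single full quota-two school where the blocking pair $(\student_1,\school)$ displaces a non-neighbor $\student_3$, whereas the paper's Example~\ref{exa:lef-to-ls} obtains the blocking pair via an empty seat at the quota-two school --- and both verifications are sound.
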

\begin{proof}
First, we show that every LS matching satisfies LEF. 
This fact has already been established in the proof of \cref{thm:equivalence-ls-lef}.

Next, we provide an example (Example~\ref{exa:lef-to-ls}) below with a specific preference profile and student acquaintance graph, where a matching satisfies LEF but does not satisfy LS, thereby demonstrating that not all LEF matchings are necessarily LS matchings.
\end{proof}

\begin{example}
\label{exa:lef-to-ls}
Let $\studentset = \{\student_1, \student_2\}$, $\schoolset = \{\school_1, \school_2\}$, and $q_{\school_1} = 1, q_{\school_2} = 2$.
Assume that the two students are connected by an edge in the student acquaintance graph.
Assume that $\succ_\student$ and $\succ_\school$ are given as follows.
\begin{align*}
\student_1: \school_2 \succ_{\student_1} \school_1 \qquad \school_1: \student_1 \succ_{\school_1} \student_2\\
\student_2: \school_2 \succ_{\student_2} \school_1 \qquad \school_2: \student_2 \succ_{\school_2} \student_1
\end{align*}
Consider the matching $\matching = \{(\student_1, \school_1), (\student_2, \school_2)\}$.
One can confirm that $\matching$ is LEF.
However, 
since $(\student_1, \school_2)$ is a blocking pair and $N(\student_1) \cap \matching_{\school_2} = \{ \student_2 \} \neq \emptyset$, 
$\matching$ is not LS.
Moreover, $\matching$ is Pareto dominated by $\matching' = \{(\student_1, \school_2), (\student_2, \school_2)\}$, and thus $\matching$ is not Pareto efficient.
Therefore, $\matching$ is LEF but is not LS nor PE.
\end{example}

\begin{theorem}
\label{thm:lefpe-independent-ls}
The set of matchings that are both LEF and PE is independent from the set of LS matchings.
\end{theorem}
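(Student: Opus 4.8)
The plan is to show that neither set contains the other (their intersection being clearly nonempty), which is exactly the independence illustrated in \cref{fig:Comparison of LS and LEF}. I would exhibit (i) a matching that is LEF and PE but not LS, and (ii) a matching that is LS but not PE, hence not in LEF $\cap$ PE; note that, by the implication LS $\Rightarrow$ LEF established in the proof of \cref{thm:equivalence-ls-lef}, failing PE is the only way for an LS matching to leave LEF $\cap$ PE. In fact a single market suffices: take $\studentset=\{\student_1,\student_2,\student_3\}$, $\schoolset=\{\school_1,\school_2\}$ with $q_{\school_1}=2$ and $q_{\school_2}=1$, let the acquaintance graph be the single edge $\{\student_1,\student_2\}$ (so $\student_3$ is isolated), and set $\school_1\succ_\student\school_2$ (both acceptable) for every $\student$, $\school_1:\student_2\succ_{\school_1}\student_1\succ_{\school_1}\student_3$, and $\school_2:\student_1\succ_{\school_2}\student_2\succ_{\school_2}\student_3$.

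For (i) I take $\matching=\{(\student_1,\school_2),(\student_2,\school_1),(\student_3,\school_1)\}$. It is PE: $\student_2$ and $\student_3$ already receive their top choice and together saturate $\school_1$, so in any Pareto-dominating matching both remain at $\school_1$, forcing $\student_1$ to remain at $\school_2$. It is LEF: $\student_1$'s only neighbor $\student_2$ satisfies $\student_2\succ_{\school_1}\student_1$, so $\student_1$ has no local justified envy, $\student_2$ receives her top choice, and $\student_3$ has no neighbors. But $\matching$ is not LS: $(\student_1,\school_1)$ blocks, because $\school_1\succ_{\student_1}\school_2$ and $\student_1\succ_{\school_1}\student_3=\min(\matching_{\school_1})$, and this block is \emph{local} since $\student_2\in N(\student_1)\cap\matching_{\school_1}$. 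The point, already isolated in \cref{thm:ls_stronger_than_lef}, is that a local block only requires $\student_1$ to beat the \emph{worst} occupant of $\school_1$ (the non-acquaintance $\student_3$), whereas local envy toward the acquaintance $\student_2$ would require $\student_1\succ_{\school_1}\student_2$; placing a highly ranked acquaintance together with a low-ranked non-acquaintance at a quota-$2$ school is precisely what opens this gap.

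For (ii) I take $\matching'=\{(\student_1,\school_2),(\student_2,\emptyset),(\student_3,\school_1)\}$. It is not PE: it wastes a seat of $\school_1$ and is Pareto-dominated by $\matching$ (only $\student_2$ changes, from $\emptyset$ to her top choice $\school_1$). It is LS: its only blocking pairs are $(\student_1,\school_1)$ and $(\student_2,\school_1)$, and both involve $\school_1$, whose sole occupant $\student_3$ is a neighbor of neither $\student_1$ nor $\student_2$, so neither block is local. Hence $\matching'$ is LS but not in LEF $\cap$ PE. (If an example with no unmatched student is preferred, a two-student edgeless market with the ``swapped'' Pareto-dominated matching serves just as well, using that the local-block condition is vacuous at isolated vertices.)

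I expect direction (i) to be the main obstacle: one has to make the LEF $\cap$ PE matching nonetheless carry a genuine \emph{local} block, which means reconciling ``no school prefers $\student_1$ to the acquaintance it hosts'' (for LEF), ``$\school_1$ prefers $\student_1$ to its worst occupant'' (for the local block), and ``the top-choice assignments already fill $\school_1$'' (for PE). The quota-$2$ school together with the chosen ordering at $\school_1$ reconciles all three; the remaining checks are routine.
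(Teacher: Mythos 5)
Your proposal is correct and follows essentially the same route as the paper: both non-inclusions are established by explicit counterexamples, one matching that is LEF and PE yet admits a local blocking pair (your quota-$2$ construction mirrors the paper's Example with quotas $1$ and $2$), and one LS matching that fails PE (the paper uses the empty matching, you use a wasteful matching in the same market). The only difference is that you package both directions into a single market, which is a cosmetic rather than substantive deviation, and all your verifications of PE, LEF, LS, and the blocking conditions check out against the paper's definitions.
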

\begin{proof}
To prove this theorem, we will demonstrate two things:
\begin{enumerate}
    \item An LS matching does not necessarily satisfy either LEF or PE.
    \item A matching that satisfies both LEF and PE does not necessarily satisfy LS.
\end{enumerate}
First, to demonstrate that not LS matchings necessarily satisfy LEF and PE, we present an example (Example~\ref{exa:ls-to-lefpe}) below.

Next, 
we present an example (Example~\ref{exa:lefpe-to-ls}) below based on a specific preference profile and student acquaintance graph.
This example demonstrates a matching that satisfies both LEF and PE, but not LS.
This shows that not all matchings that satisfy both LEF and PE necessarily satisfy LS.
\end{proof}

\begin{example}
\label{exa:ls-to-lefpe}
We consider the empty matching $\matching = \emptyset$.
In this case, the condition $N(\student) \cap \matching_\school = \emptyset$ trivially holds for all $\student$ and $\school$, and therefore $\matching$ satisfies LS.
However, if there exists at least one student who prefers being matched to some school with a remaining seat, then $\matching$ does not satisfy PE.
\end{example}

\begin{example}
\label{exa:lefpe-to-ls}
Let $\studentset = \{\student_1, \student_2, \student_3\}$, $\schoolset = \{\school_1, \school_2\}$, and $q_{\school_1} = 1, q_{\school_2} = 2$.
Assume that the student acquaintance graph is a path graph depicted in~\cref{fig:path} in~\cref{sec:failure-fundamental-theorems}.
Assume that $\succ_\student$ and $\succ_\school$ are given as follows.
\begin{align*}
&\student_1: \school_2 \succ_{\student_1} \school_1 \qquad \school_1: \student_1 \succ_{\school_1} \student_2 \succ_{\school_1} \student_3\\
&\student_2: \school_2 \succ_{\student_2} \school_1 \qquad \school_2: \student_2 \succ_{\school_2} \student_1 \succ_{\school_2} \student_3\\
&\student_3: \school_2 \succ_{\student_3} \school_1 
\end{align*}
We consider the matching $\matching = \{(\student_1, \school_1), (\student_2, \school_2), (\student_3, \school_2)\}$.

Although $\student_1$ prefers $\school_2$ to her assigned school $\school_1$,
$\student_1$ does not experience local envy because $\school_2$ prefers $\student_2 \in N(\student_1)$ to $\student_1$.
Students $\student_2$ and $\student_3$ are matched with their most preferred schools, and therefore do not experience any local envy.
Thus, $\matching$ satisfies LEF.
Moreover, since all students prefer $\school_2$ to $\school_1$, the matching $\matching$ is also PE.
On the other hand, since 
$\school_2 \succ_{\student_1} \school_1$ and $\student_1 \succ_{\school_2} \student_3$, 
the pair $(\student_1, \school_2)$ is a blocking pair.
Moreover, 
$N(\student_1) \cap \matching_{\school_2} = \{ \student_2 \} \neq \emptyset$.
Therefore, $\matching$ is not an LS matching.
\end{example}

\begin{remark}
In this section, we have used PE as an indicator of efficiency, but it can be seen from the examples in Figure \ref{fig:Comparison of LS and LEF} that the inclusion relationship holds even if PE is replaced with nonwastefulness (NW). 
\end{remark}

\end{document}